\theoremstyle{plain}
\newtheorem{theorem}{Theorem}
\newtheorem{lemma}[theorem]{Lemma}
\theoremstyle{plain}
\newtheorem{definition}{Defn.}
\theoremstyle{plain}
\let\tightset@fontsize\set@fontsize
\patchcmd\tightset@fontsize{#3}{#2}{}{}
\newcommand{\cF}{\mathcal{F}}
\newcommand{\cS}{\mathcal{S}}
\newcommand{\bA}{\boldsymbol{A}}
\newcommand{\Sec}[1]{{\S\ref*{sec:#1}}} 
\newcommand{\Fig}[1]{{Fig.\,\ref*{fig:#1}}} 
\newcommand{\Tab}[1]{{Tab.\,\ref*{tab:#1}}} 
\newcommand{\Thm}[1]{{Theorem\,\ref*{thm:#1}}} 
\newcommand{\Lem}[1]{{Lemma\,\ref*{lem:#1}}} 
\newcommand{\ind}{C}
\newcommand{\nonind}{N}
\newcommand{\cc}{D}
\newcommand{\dcc}{DD}
\newcommand{\dpath}{DP}
\newcommand{\dbp}{DBP}
\newcommand{\tailedtri}{TT}
\newcommand{\match}{\hbox{match}}
\newcommand{\aut}{\hbox{Aut}}
\newcommand{\frag}{\hbox{Frag}}
\newcommand{\shrink}{\hbox{Shrink}}
\newcommand{\numshrink}{\hbox{numSh}}
\tikzset{%
  gnode/.style={shape=circle,minimum size=3mm,fill,draw=black}
}
\tikzset{myptr/.style={decoration={markings,mark=at position 1 with {\arrow[scale=2,>=stealth]{>}}},postaction={decorate}}}
\def\IndSetFour{
   \node (3) at (0,0) [nd] {3};
   \node (1) at (0,1) [nd] {1};
   \node (4) at (1,0) [nd] {4};
   \node (2) at (1,1) [nd] {2};
  \node [inner sep = 0,below] at +(0.75,-0.5) {{\Large (a) Ind set}};
}
\def\EdgeFour{
   \node (3) at (0,0) [nd] {3};
   \node (1) at (0,1) [nd] {1};
   \node (4) at (1,0) [nd] {4};
   \node (2) at (1,1) [nd] {2};
   \draw (1) to (2);
  \node [inner sep = 0,below] at +(0.75,-0.5) {{\Large (b) Only edge}};
}
\def\MatFour{
   \node (3) at (0,0) [nd] {3};
   \node (1) at (0,1) [nd] {1};
   \node (4) at (1,0) [nd] {4};
   \node (2) at (1,1) [nd] {2};
   \draw (1) to (2);
   \draw (3) to (4);
  \node [inner sep = 0,below] at +(0.75,-0.5) {{\Large (c) Matching}};
}
\def\WedgeFour{
   \node (3) at (0,0) [nd] {3};
   \node (1) at (0,1) [nd] {1};
   \node (4) at (1,0) [nd] {4};
   \node (2) at (1,1) [nd] {2};
   \draw (1) to (2);
   \draw (2) to (3);
  \node [inner sep = 0,below] at +(0.75,-0.5) {{\Large (d) Only wedge}};
}
\def\TriFour{
   \node (3) at (0,0) [nd] {3};
   \node (1) at (0,1) [nd] {1};
   \node (4) at (1,0) [nd] {4};
   \node (2) at (1,1) [nd] {2};
   \draw (1) to (2);
   \draw (1) to (3);
   \draw (2) to (3);
  \node [inner sep = 0,below] at +(0.75,-0.5) {{\Large (e) Only triangle}};
}
\def\ThreeStar {
   \node (1) at (0,0) [nd] {3};
   \node (2) at (0,1) [nd] {1};
   \node (3) at (1,0) [nd] {4};
   \node (4) at (1,1) [nd] {2};
   \draw (1) to (2);
   \draw (1) to (3);
   \draw (1) to (4);
  \node [inner sep = 0,below] at +(0.75,-0.5) {\Large (a) 3-star};
}
\def\ThreePath {
   \node (1) at (0,0) [nd] {3};
   \node (2) at (0,1) [nd] {1};
   \node (3) at (1,0) [nd] {4};
   \node (4) at (1,1) [nd] {2};
   \draw (1) to (2);
   \draw (1) to (3);
   \draw (2) to (4);
  \node [inner sep = 0,below] at +(0.75,-0.5) {\Large (b) 3-path};
}
\def\TailedTriangle {
   \node (1) at (0,0) [nd] {3};
   \node (2) at (0,1) [nd] {1};
   \node (3) at (1,0) [nd] {4};
   \node (4) at (1,1) [nd] {2};
   \draw (1) to (2);
   \draw (1) to (3);
   \draw (1) to (4);
   \draw (2) to (4);       
  \node [inner sep = 0,below] at +(0.75,-0.5) {{\Large (c) tailed triangle}};
}
\def\FourCycle {
   \node (1) at (0,0) [nd] {3};
   \node (2) at (0,1) [nd] {1};
   \node (3) at (1,0) [nd] {4};
   \node (4) at (1,1) [nd] {2};
   \draw (1) to (2);
   \draw (1) to (3);
   \draw (2) to (4);
   \draw (3) to (4);      
  \node [inner sep = 0,below] at +(0.75,-0.5) {{\Large (d) 4-cycle }};
}
\def\ChordalFourCycle {
   \node (1) at (0,0) [nd] {3};
   \node (2) at (0,1) [nd] {1};
   \node (3) at (1,0) [nd] {4};
   \node (4) at (1,1) [nd] {2};
   \draw (1) to (2);
   \draw (1) to (3);
   \draw(1) to (4);
   \draw (2) to (4);
   \draw (3) to (4);    
  \node [inner sep = 0,below] at +(0.75,-0.5) {{\Large (e) diamond}};
}
\def\FourClique  {
   \node (1) at (0,0) [nd] {3};
   \node (2) at (0,1) [nd] {1};
   \node (3) at (1,0) [nd] {4};
   \node (4) at (1,1) [nd] {2};
   \draw (1) to (2);
   \draw (1) to (3);
   \draw (1) to (4);
   \draw (2) to (3);      
   \draw (2) to (4);
   \draw (3) to (4);    
  \node [inner sep = 0,below] at +(0.75,-0.5) {{\Large (f) 4-clique}};
}
\def\FourCycleOne {
   \node (1) at (0,0) [nd] {};
   \node (2) at (0,1) [nd] {$i$};
   \node (3) at (1,0) [nd] {$j$};
   \node (4) at (1,1) [nd] {};
   \draw[-stealth'] (1) to (2);
   \draw[-stealth'] (1) to (3);
   \draw[stealth'-] (2) to (4);
   \draw[stealth'-] (3) to (4);      
  \node [inner sep = 0,below] at +(0.5,-0.5) {{\Large (a) }};
}
\def\FourCycleTwo {
   \node (1) at (0,0) [nd] {};
   \node (2) at (0,1) [nd] {$i$};
   \node (3) at (1,0) [nd] {$j$};
   \node (4) at (1,1) [nd] {};
   \draw[-stealth'] (1) to (2);
   \draw[stealth'-] (1) to (3);
   \draw[stealth'-] (2) to (4);
   \draw[-stealth'] (3) to (4);      
  \node [inner sep = 0,below] at +(0.5,-0.5) {{\Large (b) }};
}
\def\FourCycleThree {
   \node (1) at (0,0) [nd] {};
   \node (2) at (0,1) [nd] {$i$};
   \node (3) at (1,0) [nd] {$j$};
   \node (4) at (1,1) [nd] {};
   \draw[-stealth'] (1) to (2);
   \draw[stealth'-] (1) to (3);
   \draw[stealth'-] (2) to (4);
   \draw[stealth'-] (3) to (4);      
  \node [inner sep = 0,below] at +(0.5,-0.5) {{\Large (c) }};
}
\def\DirChordal{
   \node (1) at (0,0) [nd] {};
   \node (2) at (0,1) [nd] {};
   \node (3) at (1,0) [nd] {};
   \node (4) at (1,1) [nd] {};
   \draw[myptr] (1) to (2);
   \draw[myptr] (1) to (3);
   \draw[myptr] (1) to (4);
   \draw[myptr] (2) to (4);
   \draw[myptr] (3) to (4);      
  \node [inner sep = 0,below] at +(0.5,-0.5) {{\Large Directed Diamond}};
}
\def\Wedge {
   \node (1) at (0.75,1.6) [nd] {};
   \node (2) at (-0.25,2) [nd] {};
   \node (3) at (1.75,2) [nd] {};
   \draw (1) to (2);
   \draw (1) to (3);
  \node [inner sep = 0,below] at +(0.75,2.25) {{\Large Wedge}};
   \node (5) at (0.25,0) [nd] {};
   \node (6) at (0.25,1) [nd] {};
   \node (7) at (1.25,0) [nd] {};
   \node (8) at (1.25,1) [nd] {};
   \draw (5) to (6);
   \draw (5) to (7);
   \draw (5) to (8);
   \draw (6) to (8);
   \draw (7) to (8);    
  \node [inner sep = 0,below] at +(0.75,-0.5) {{\Large Diamond}};
}
\def\Outwedge {
   \node (1) at (0.75,0) [nd] {};
   \node (2) at (0,1) [nd] {};
   \node (3) at (1.5,1) [nd] {};
   \draw[myptr] (1) to (2);
   \draw[myptr] (1) to (3);
  \node [inner sep = 0,below] at +(0.75,-0.5) {{\Large Out-wedge}};
}
\def\Inoutwedge{
   \node (1) at (0.75,0) [nd] {};
   \node (2) at (0,1) [nd] {};
   \node (3) at (1.5,1) [nd] {};
   \draw[myptr] (1) to (2);
   \draw[myptr] (3) to (1);
  \node [inner sep = 0,below] at +(0.75,-0.5) {{\Large Inout-wedge}};
}
\def\pone {
   \node (1) at (0.75,1) [nd] {1};
   \node (2) at (0,2) [nd] {2};
   \node (3) at (1.5,2) [nd] {3};
   \node (4) at (0,0) [nd] {4};
   \node (5) at (1.5,0) [nd] {5};
   \draw (1) to (2);
   \draw (1) to (3);
   \draw (1) to (4);
   \draw (1) to (5);
  \node [inner sep = 0,below] at +(0.75,-0.5) {{\Large (1)}};
}
\def\ptwo {
   \node (1) at (1,2) [nd] {1};
   \node (2) at (0,1) [nd] {2};
   \node (3) at (1,1) [nd] {3};
   \node (4) at (2,1) [nd] {4};
   \node (5) at (0,0) [nd] {5};
   \draw (1) to (2);
   \draw (1) to (3);
   \draw (1) to (4);
   \draw (2) to (5);
   \node [inner sep = 0,below] at +(1,-0.5) {{\Large (2)}};
}
\def\pthree{
   \node (1) at (0.5,2) [nd] {1};
   \node (2) at (0,1) [nd] {2};
   \node (3) at (1,1) [nd] {3};
   \node (4) at (0,0) [nd] {4};
   \node (5) at (1,0) [nd] {5};
   \draw (1) to (2);
   \draw (1) to (3);
   \draw (2) to (4);
   \draw (3) to (5);
\node [inner sep = 0,below] at +(0.5,-0.5) {{\Large (3)}};
}
\def\pfour{
   \node (1) at (0,2) [nd] {1};
   \node (2) at (1,2) [nd] {2};
   \node (3) at (0.5,1) [nd] {3};
   \node (4) at (0,0) [nd] {4};
   \node (5) at (1,0) [nd] {5};
   \draw (1) to (2);
   \draw (1) to (3);
     \draw (2) to (3);
   \draw (3) to (4);
   \draw (3) to (5);
\node [inner sep = 0,below] at +(0.5,-0.5) {{\Large (4)}};
}
\def\pfive{
   \node (1) at (0.5,2) [nd] {1};
   \node (2) at (0,1) [nd] {2};
   \node (3) at (1,1) [nd] {3};
   \node (4) at (0.5,0.5) [nd] {4};
   \node (5) at (1,0) [nd] {5};
   \draw (1) to (2);
   \draw (1) to (3);
     \draw (2) to (3);	
   \draw (2) to (4);
   \draw (4) to (5);
\node [inner sep = 0,below] at +(0.5,-0.5) {{\Large (5)}};
}
\def\psix{
  \node (1) at (0.5,2) [nd] {1};
   \node (2) at (0,1) [nd] {2};
   \node (3) at (1,1) [nd] {3};
   \node (4) at (0,0) [nd] {4};
   \node (5) at (1,0) [nd] {5};
   \draw (1) to (2);
   \draw (1) to (3);
   \draw (2) to (3);
   \draw (2) to (4);
   \draw (3) to (5);
   
\node [inner sep = 0,below] at +(0.5,-0.5) {{\Large (6)}};
}
\def\pseven{
   \node (1) at (0,2) [nd] {1};
   \node (2) at (1,2) [nd] {2};
   \node (3) at (0,1) [nd] {3};
   \node (4) at (1,1) [nd] {4};
   \node (5) at (0.5,0) [nd] {5};
   \draw (1) to (2);
   \draw (2) to (4);
   \draw (3) to (4);
   \draw (3) to (1);
     \draw (3) to (5);
\node [inner sep = 0,below] at +(0.5,-0.5) {{\Large (7)}};
}
\def\peight{
   \node (1) at (0.1,2) [nd] {1};
   \node (2) at (0.9,2) [nd] {2};
   \node (3) at (-0.1,1) [nd] {3};
   \node (4) at (1.1,1) [nd] {4};
   \node (5) at (0.5,0) [nd] {5};
   \draw (1) to (2);
   \draw (1) to (3);
      \draw (4) to (2);
    	           \draw (3) to (5); 
       \draw (5) to (4);
\node [inner sep = 0,below] at +(0.5,-0.5) {{\Large (8)}};}
\def\pnine{
  \node (1) at (0,2) [nd] {1};
   \node (2) at (1,2) [nd] {2};
   \node (3) at (0.5,1) [nd] {3};
   \node (4) at (0,0) [nd] {4};
   \node (5) at (1,0) [nd] {5};
   \draw (1) to (2);
   \draw (1) to (3);
     \draw (2) to (3);
   \draw (3) to (4);
   \draw (3) to (5);
 \draw (4) to (5);
\node [inner sep = 0,below] at +(0.5,-0.5) {{\Large (9)}};
}
\def\pten{
   \node (1) at (0,2) [nd] {1};
   \node (2) at (1,2) [nd] {2};
   \node (3) at (0,1) [nd] {3};
   \node (4) at (1,1) [nd] {4};
   \node (5) at (0.5,0) [nd] {5};
   \draw (1) to (2);
   \draw (2) to (4);
   \draw (3) to (4);
   \draw (3) to (1);
     \draw (3) to (5);
      \draw (1) to (4);
\node [inner sep = 0,below] at +(0.5,-0.5) {{\Large (10)}};
}
\def\peleven{
   \node (1) at (0,2) [nd] {1};
   \node (2) at (1,2) [nd] {2};
   \node (3) at (0,1) [nd] {3};
   \node (4) at (1,1) [nd] {4};
   \node (5) at (0.5,0) [nd] {5};
   \draw (1) to (2);
   \draw (2) to (4);
   \draw (3) to (4);
   \draw (3) to (1);
     \draw (3) to (5);
      \draw (3) to (2);
\node [inner sep = 0,below] at +(0.5,-0.5) {{\Large (11)}};
}
\def\ptwelve{
   \node (1) at (0,2) [nd] {1};
   \node (2) at (1,2) [nd] {2};
   \node (3) at (0,1) [nd] {3};
   \node (4) at (1,1) [nd] {4};
   \node (5) at (0.5,0) [nd] {5};
   \draw (1) to (2);
   \draw (2) to (4);
   \draw (3) to (4);
   \draw (3) to (1);
     \draw (3) to (5);
 
      \draw (4) to (5);
\node [inner sep = 0,below] at +(0.5,-0.5) {{\Large (12)}};
}
\def\pthirteen{
   \node (1) at (0.9,2) [nd] {1};
   \node (2) at (0,1) [nd] {2};
   \node (3) at (0.5,1) [nd] {3};
   \node (4) at (1.5,1) [nd] {4};
   \node (5) at (0.9,0) [nd] {5};
   \draw (1) to (2);
   \draw (2) to (5);
   \draw (1) to (3);
   \draw (3) to (5);
\draw (1) to (4);
     \draw (4) to (5);
\node [inner sep = 0,below] at +(0.5,-0.5) {{\Large (13)}};
}
\def\pfourteen{
     \node (1) at (0.9,2) [nd] {1};
   \node (2) at (0,1) [nd] {2};
   \node (3) at (0.5,1) [nd] {3};
   \node (4) at (1.5,1) [nd] {4};
   \node (5) at (0.9,0) [nd] {5};
   \draw (1) to (2);
   \draw (2) to (5);
   \draw (1) to (3);
   \draw (3) to (5);
\draw (1) to (4);
     \draw (4) to (5);
\draw (1) to (5);
\node [inner sep = 0,below] at +(0.5,-0.5) {{\Large (14)}};
}
\def\pfifteen{
   \node (1) at (0,2) [nd] {1};
   \node (2) at (1,2) [nd] {2};
   \node (3) at (0,1) [nd] {3};
   \node (4) at (1,1) [nd] {4};
   \node (5) at (0.5,0) [nd] {5};
   \draw (1) to (2);
   \draw (2) to (4);
   \draw (3) to (4);
   \draw (3) to (1);
   \draw (3) to (5);
  \draw (1) to (4);
  \draw (3) to (2);
\node [inner sep = 0,below] at +(0.5,-0.5) {{\Large (15)}};
}
\def\psixteen{
   \node (1) at (0,2) [nd] {1};
   \node (2) at (1,2) [nd] {2};
   \node (3) at (0,1) [nd] {3};
   \node (4) at (1,1) [nd] {4};
   \node (5) at (0.5,0) [nd] {5};
   \draw (1) to (2);
   \draw (2) to (4);
   \draw (3) to (4);
   \draw (3) to (1);
     \draw (3) to (5);
      \draw (3) to (2);
      \draw (4) to (5);
\node [inner sep = 0,below] at +(0.5,-0.5) {{\Large (16)}};
}
\def\pseventeen{
    \node (1) at (0.1,2) [nd] {1};
   \node (2) at (0.9,2) [nd] {2};
   \node (3) at (-0.1,1) [nd] {3};
   \node (4) at (1.1,1) [nd] {4};
   \node (5) at (0.5,0) [nd] {5};
   \draw (1) to (2);
   \draw (1) to (3);
   \draw (1) to (4);
        \draw (3) to (2);
      \draw (4) to (2);
          \draw (3) to (5); 
       \draw (5) to (4);
\node [inner sep = 0,below] at +(0.5,-0.5) {{\Large (17)}};
}
\def\peighteen{
   \node (1) at (0.75,1) [nd] {1};
   \node (2) at (0,2) [nd] {2};
   \node (3) at (1.5,2) [nd] {3};
   \node (4) at (0,0) [nd] {4};
   \node (5) at (1.5,0) [nd] {5};
   \draw (1) to (2);
   \draw (1) to (3);
   \draw (1) to (4);
   \draw (1) to (5);
     \draw (3) to (2);
   \draw (5) to (3);
   \draw (5) to (4);
   \draw (2) to (4);

  \node [inner sep = 0,below] at +(0.75,-0.5) {{\Large (18)}};
}
\def\pnineteen{
  \node (1) at (0,2) [nd] {1};
   \node (2) at (1,2) [nd] {2};
   \node (3) at (0,1) [nd] {3};
   \node (4) at (1,1) [nd] {4};
   \node (5) at (0.5,0) [nd] {5};
   \draw (1) to (2);
   \draw (2) to (4);
   \draw (3) to (4);
   \draw (3) to (1);
     \draw (3) to (5);
      \draw (3) to (2);
      \draw (4) to (5);
      \draw (1) to (4);
\node [inner sep = 0,below] at +(0.5,-0.5) {{\Large (19)}};

}
\def\ptwenty{
  \node (1) at (0.1,2) [nd] {1};
   \node (2) at (0.9,2) [nd] {2};
   \node (3) at (-0.1,1) [nd] {3};
   \node (4) at (1.1,1) [nd] {4};
   \node (5) at (0.5,0) [nd] {5};
   \draw (1) to (3);
   \draw (1) to (4);
   \draw (1) to (5);
     \draw (3) to (2);
      \draw (4) to (2);
      \draw (2) to (5);
      \draw (3) to (4);
       \draw (3) to (5); 
    \draw (4) to (5);
\node [inner sep = 0,below] at +(0.5,-0.5) {{\Large (20)}};
}
\def\ptwentyone{
   \node (1) at (0.1,2) [nd] {1};
   \node (2) at (0.9,2) [nd] {2};
   \node (3) at (-0.1,1) [nd] {3};
   \node (4) at (1.1,1) [nd] {4};
   \node (5) at (0.5,0) [nd] {5};
   \draw (1) to (2);
   \draw (1) to (3);
   \draw (1) to (4);
   \draw (1) to (5);
     \draw (3) to (2);
      \draw (4) to (2);
      \draw (2) to (5);
      \draw (3) to (4);
       \draw (3) to (5); 
       \draw (5) to (4);
\node [inner sep = 0,below] at +(0.5,-0.5) {{\Large (21)}};
}
\def\IndSetFive{
   \node (1) at (0.1,2) [nd] {1};
   \node (2) at (0.9,2) [nd] {2};
   \node (3) at (-0.1,1) [nd] {3};
   \node (4) at (1.1,1) [nd] {4};
   \node (5) at (0.5,0) [nd] {5};
\node [inner sep = 0,below] at +(0.5,-0.5) {{\Large (1) Ind Set}};
}
\def\EdgeFive{
   \node (1) at (0.1,2) [nd] {1};
   \node (2) at (0.9,2) [nd] {2};
   \node (3) at (-0.1,1) [nd] {3};
   \node (4) at (1.1,1) [nd] {4};
   \node (5) at (0.5,0) [nd] {5};
   \draw (1) to (2);
\node [inner sep = 0,below] at +(0.5,-0.5) {{\Large (2) Only Edge}};
}
\def\MatFive{
   \node (1) at (0.1,2) [nd] {1};
   \node (2) at (0.9,2) [nd] {2};
   \node (3) at (-0.1,1) [nd] {3};
   \node (4) at (1.1,1) [nd] {4};
   \node (5) at (0.5,0) [nd] {5};
   \draw (1) to (2);
   \draw (3) to (4);
\node [inner sep = 0,below] at +(0.5,-0.5) {{\Large (3) Matching}};
}
\def\WedgeFive{
   \node (1) at (0.1,2) [nd] {1};
   \node (2) at (0.9,2) [nd] {2};
   \node (3) at (-0.1,1) [nd] {3};
   \node (4) at (1.1,1) [nd] {4};
   \node (5) at (0.5,0) [nd] {5};
   \draw (1) to (2);
   \draw (1) to (3);
\node [inner sep = 0,below] at +(0.5,-0.5) {{\Large (4) Only Wedge}};
}
\def\TriFive{
   \node (1) at (0.1,2) [nd] {1};
   \node (2) at (0.9,2) [nd] {2};
   \node (3) at (-0.1,1) [nd] {3};
   \node (4) at (1.1,1) [nd] {4};
   \node (5) at (0.5,0) [nd] {5};
   \draw (1) to (2);
   \draw (1) to (3);
   \draw (2) to (3);
\node [inner sep = 0,below] at +(0.5,-0.5) {{\Large (5) Only Triangle}};
}
\def\ThreeStarFive{
   \node (1) at (0.1,2) [nd] {1};
   \node (2) at (0.9,2) [nd] {2};
   \node (3) at (-0.1,1) [nd] {3};
   \node (4) at (1.1,1) [nd] {4};
   \node (5) at (0.5,0) [nd] {5};
   \draw (1) to (2);
   \draw (1) to (3);
   \draw (1) to (4);
\node [inner sep = 0,below] at +(0.5,-0.5) {{\Large (6) Only 3-Star}};
}
\def\ThreePathFive{
   \node (1) at (0.1,2) [nd] {1};
   \node (2) at (0.9,2) [nd] {2};
   \node (3) at (-0.1,1) [nd] {3};
   \node (4) at (1.1,1) [nd] {4};
   \node (5) at (0.5,0) [nd] {5};
   \draw (1) to (2);
   \draw (1) to (3);
   \draw (3) to (4);
\node [inner sep = 0,below] at +(0.5,-0.5) {{\Large (7) Only 3-Path}};
}
\def\TailedTriFive{
   \node (1) at (0.1,2) [nd] {1};
   \node (2) at (0.9,2) [nd] {2};
   \node (3) at (-0.1,1) [nd] {3};
   \node (4) at (1.1,1) [nd] {4};
   \node (5) at (0.5,0) [nd] {5};
   \draw (1) to (2);
   \draw (2) to (3);
   \draw (1) to (3);
   \draw (3) to (4);
\node [inner sep = 0,below] at +(0.5,-0.5) {{\Large (8) Only Tailed-Triangle}};
}
\def\FourCycleFive{
   \node (1) at (0.1,2) [nd] {1};
   \node (2) at (0.9,2) [nd] {2};
   \node (3) at (-0.1,1) [nd] {3};
   \node (4) at (1.1,1) [nd] {4};
   \node (5) at (0.5,0) [nd] {5};
   \draw (1) to (2);
   \draw (2) to (4);
   \draw (3) to (4);
   \draw (1) to (3);
\node [inner sep = 0,below] at +(0.5,-0.5) {{\Large (9) Only 4-cycle}};
}
\def\DiamondFive{
   \node (1) at (0.1,2) [nd] {1};
   \node (2) at (0.9,2) [nd] {2};
   \node (3) at (-0.1,1) [nd] {3};
   \node (4) at (1.1,1) [nd] {4};
   \node (5) at (0.5,0) [nd] {5};
   \draw (1) to (2);
   \draw (2) to (4);
   \draw (3) to (4);
   \draw (1) to (4);
   \draw (1) to (3);
\node [inner sep = 0,below] at +(0.5,-0.5) {{\Large (10) Only Diamond}};
}
\def\FourCliqueFive{
   \node (1) at (0.1,2) [nd] {1};
   \node (2) at (0.9,2) [nd] {2};
   \node (3) at (-0.1,1) [nd] {3};
   \node (4) at (1.1,1) [nd] {4};
   \node (5) at (0.5,0) [nd] {5};
   \draw (1) to (2);
   \draw (2) to (3);
   \draw (3) to (4);
   \draw (1) to (4);
   \draw (2) to (4);
   \draw (1) to (3);
\node [inner sep = 0,below] at +(0.5,-0.5) {{\Large (11) Only 4-clique}};
}
\def\FiveCycleOne{
   \node (1) at (0.1,2) [nd, fill=white] {$j$};
   \node (2) at (0.9,2) [nd] { };
   \node (3) at (-0.1,1) [nd] { };
   \node (4) at (1.1,1) [nd, fill=white] {$i$};
   \node (5) at (0.5,0) [nd] {};
   \draw[-stealth'] (5) to (3);
   \draw[-stealth'] (5) to (4);
   \draw[-stealth'] (3) to (1);
   \draw[-stealth'] (4) to (2);
   \draw[-stealth'] (2) to (1);
\node [inner sep = 0,below] at +(0.5,-0.5) {{\Large (a)}};
}
\def\FiveCycleTwo{
   \node (1) at (0.1,2) [nd, fill=white] {$j$};
   \node (2) at (0.9,2) [nd] { };
   \node (3) at (-0.1,1) [nd] { };
   \node (4) at (1.1,1) [nd, fill=white] {$i$};
   \node (5) at (0.5,0) [nd] {};
   \draw[-stealth'] (5) to (3);
   \draw[-stealth'] (5) to (4);
   \draw[-stealth'] (3) to (1);
   \draw[stealth'-] (4) to (2);
   \draw[-stealth'] (2) to (1);
\node [inner sep = 0,below] at +(0.5,-0.5) {{\Large (b)}};
}
\def\DirThreePath{
   \node (1) at (0.1,2) [nd] {};
   \node (3) at (-0.1,1) [nd] { };
   \node[fill=gray] (4) at (1.1,1) [nd] {};
   \node (5) at (0.5,0) [nd] { };
   \draw[myptr] (5) to (3);
   \draw[myptr] (5) to (4);
   \draw[myptr] (3) to (1);
\node [inner sep = 0,below] at +(0.5,-0.5) {{\Large Directed 3-path}};
}
\def\DirTailedTri{
   \node (1) at (0.1,2) [nd] {$i$};
   \node (3) at (-0.1,1) [nd] { };
   \node[fill=gray] (4) at (1.1,1) [nd] {$j$};
   \node (5) at (0.5,0) [nd] { };
   \draw[myptr] (5) to (3);
   \draw[myptr] (5) to (4);
   \draw[myptr] (4) to (3);
   \draw[myptr] (3) to (1);
\node [inner sep = 0,below] at +(0.5,-0.5) {{\Large Directed tailed-triangle}};
}
\def\DirForClique{
  \node (1) at (0.1,2) [nd] {};
   \node (2) at (0.9,2) [nd] {$\ell$};
   \node (3) at (-0.1,1) [nd ] {$k$};
   \node (4) at (1.1,1) [nd] {$j$};
   \node (5) at (0.5,0) [nd] {$i$};
   \draw[myptr] (3) to (1);
   \draw[myptr] (4) to (1);
   \draw[myptr] (4) to (1);
   \draw[myptr] (3) to (2);
   \draw[myptr] (2) to (4);
   \draw[myptr] (5) to (1);
   \draw[myptr] (5) to (2);
   \draw[myptr] (4) to (3);
   \draw[myptr] (5) to (3); 
   \draw[myptr] (5) to (4);
\node [inner sep = 0,below] at +(0.5,-0.5) {{\Large Directed bipyramid}};
}
\tikzset{%
  gnode/.style={shape=circle,minimum size=3mm,fill,draw=black}
}
\tikzset{myptr/.style={decoration={markings,mark=at position 1 with {\arrow[scale=2,>=stealth]{>}}},postaction={decorate}}}
\def\IndSetFour{
   \node (3) at (0,0) [nd] {3};
   \node (1) at (0,1) [nd] {1};
   \node (4) at (1,0) [nd] {4};
   \node (2) at (1,1) [nd] {2};
  \node [inner sep = 0,below] at +(0.75,-0.5) {{\Large (a) Ind set}};
}
\def\EdgeFour{
   \node (3) at (0,0) [nd] {3};
   \node (1) at (0,1) [nd] {1};
   \node (4) at (1,0) [nd] {4};
   \node (2) at (1,1) [nd] {2};
   \draw (1) to (2);
  \node [inner sep = 0,below] at +(0.75,-0.5) {{\Large (b) Only edge}};
}
\def\MatFour{
   \node (3) at (0,0) [nd] {3};
   \node (1) at (0,1) [nd] {1};
   \node (4) at (1,0) [nd] {4};
   \node (2) at (1,1) [nd] {2};
   \draw (1) to (2);
   \draw (3) to (4);
  \node [inner sep = 0,below] at +(0.75,-0.5) {{\Large (c) Matching}};
}
\def\WedgeFour{
   \node (3) at (0,0) [nd] {3};
   \node (1) at (0,1) [nd] {1};
   \node (4) at (1,0) [nd] {4};
   \node (2) at (1,1) [nd] {2};
   \draw (1) to (2);
   \draw (2) to (3);
  \node [inner sep = 0,below] at +(0.75,-0.5) {{\Large (d) Only wedge}};
}
\def\TriFour{
   \node (3) at (0,0) [nd] {3};
   \node (1) at (0,1) [nd] {1};
   \node (4) at (1,0) [nd] {4};
   \node (2) at (1,1) [nd] {2};
   \draw (1) to (2);
   \draw (1) to (3);
   \draw (2) to (3);
  \node [inner sep = 0,below] at +(0.75,-0.5) {{\Large (e) Only triangle}};
}
\def\ThreeStar {
   \node (1) at (0,0) [nd] {3};
   \node (2) at (0,1) [nd] {1};
   \node (3) at (1,0) [nd] {4};
   \node (4) at (1,1) [nd] {2};
   \draw (1) to (2);
   \draw (1) to (3);
   \draw (1) to (4);
  \node [inner sep = 0,below] at +(0.75,-0.5) {\Large (a) 3-star};
}
\def\ThreePath {
   \node (1) at (0,0) [nd] {3};
   \node (2) at (0,1) [nd] {1};
   \node (3) at (1,0) [nd] {4};
   \node (4) at (1,1) [nd] {2};
   \draw (1) to (2);
   \draw (1) to (3);
   \draw (2) to (4);
  \node [inner sep = 0,below] at +(0.75,-0.5) {\Large (b) 3-path};
}
\def\TailedTriangle {
   \node (1) at (0,0) [nd] {3};
   \node (2) at (0,1) [nd] {1};
   \node (3) at (1,0) [nd] {4};
   \node (4) at (1,1) [nd] {2};
   \draw (1) to (2);
   \draw (1) to (3);
   \draw (1) to (4);
   \draw (2) to (4);       
  \node [inner sep = 0,below] at +(0.75,-0.5) {{\Large (c) tailed triangle}};
}
\def\FourCycle {
   \node (1) at (0,0) [nd] {3};
   \node (2) at (0,1) [nd] {1};
   \node (3) at (1,0) [nd] {4};
   \node (4) at (1,1) [nd] {2};
   \draw (1) to (2);
   \draw (1) to (3);
   \draw (2) to (4);
   \draw (3) to (4);      
  \node [inner sep = 0,below] at +(0.75,-0.5) {{\Large (d) 4-cycle }};
}
\def\ChordalFourCycle {
   \node (1) at (0,0) [nd] {3};
   \node (2) at (0,1) [nd] {1};
   \node (3) at (1,0) [nd] {4};
   \node (4) at (1,1) [nd] {2};
   \draw (1) to (2);
   \draw (1) to (3);
   \draw(1) to (4);
   \draw (2) to (4);
   \draw (3) to (4);    
  \node [inner sep = 0,below] at +(0.75,-0.5) {{\Large (e) diamond}};
}
\def\FourClique  {
   \node (1) at (0,0) [nd] {3};
   \node (2) at (0,1) [nd] {1};
   \node (3) at (1,0) [nd] {4};
   \node (4) at (1,1) [nd] {2};
   \draw (1) to (2);
   \draw (1) to (3);
   \draw (1) to (4);
   \draw (2) to (3);      
   \draw (2) to (4);
   \draw (3) to (4);    
  \node [inner sep = 0,below] at +(0.75,-0.5) {{\Large (f) 4-clique}};
}
\def\FourCycleOne {
   \node (1) at (0,0) [nd] {};
   \node (2) at (0,1) [nd] {$i$};
   \node (3) at (1,0) [nd] {$j$};
   \node (4) at (1,1) [nd] {};
   \draw[-stealth'] (1) to (2);
   \draw[-stealth'] (1) to (3);
   \draw[stealth'-] (2) to (4);
   \draw[stealth'-] (3) to (4);      
  \node [inner sep = 0,below] at +(0.5,-0.5) {{\Large (a) }};
}
\def\FourCycleTwo {
   \node (1) at (0,0) [nd] {};
   \node (2) at (0,1) [nd] {$i$};
   \node (3) at (1,0) [nd] {$j$};
   \node (4) at (1,1) [nd] {};
   \draw[-stealth'] (1) to (2);
   \draw[stealth'-] (1) to (3);
   \draw[stealth'-] (2) to (4);
   \draw[-stealth'] (3) to (4);      
  \node [inner sep = 0,below] at +(0.5,-0.5) {{\Large (b) }};
}
\def\FourCycleThree {
   \node (1) at (0,0) [nd] {};
   \node (2) at (0,1) [nd] {$i$};
   \node (3) at (1,0) [nd] {$j$};
   \node (4) at (1,1) [nd] {};
   \draw[-stealth'] (1) to (2);
   \draw[stealth'-] (1) to (3);
   \draw[stealth'-] (2) to (4);
   \draw[stealth'-] (3) to (4);      
  \node [inner sep = 0,below] at +(0.5,-0.5) {{\Large (c) }};
}
\def\DirChordal{
   \node (1) at (0,0) [nd] {};
   \node (2) at (0,1) [nd] {};
   \node (3) at (1,0) [nd] {};
   \node (4) at (1,1) [nd] {};
   \draw[myptr] (1) to (2);
   \draw[myptr] (1) to (3);
   \draw[myptr] (1) to (4);
   \draw[myptr] (2) to (4);
   \draw[myptr] (3) to (4);      
  \node [inner sep = 0,below] at +(0.5,-0.5) {{\Large Directed Diamond}};
}
\def\Wedge {
   \node (1) at (0.75,1.6) [nd] {};
   \node (2) at (-0.25,2) [nd] {};
   \node (3) at (1.75,2) [nd] {};
   \draw (1) to (2);
   \draw (1) to (3);
  \node [inner sep = 0,below] at +(0.75,2.25) {{\Large Wedge}};
   \node (5) at (0.25,0) [nd] {};
   \node (6) at (0.25,1) [nd] {};
   \node (7) at (1.25,0) [nd] {};
   \node (8) at (1.25,1) [nd] {};
   \draw (5) to (6);
   \draw (5) to (7);
   \draw (5) to (8);
   \draw (6) to (8);
   \draw (7) to (8);    
  \node [inner sep = 0,below] at +(0.75,-0.5) {{\Large Diamond}};
}
\def\Outwedge {
   \node (1) at (0.75,0) [nd] {};
   \node (2) at (0,1) [nd] {};
   \node (3) at (1.5,1) [nd] {};
   \draw[myptr] (1) to (2);
   \draw[myptr] (1) to (3);
  \node [inner sep = 0,below] at +(0.75,-0.5) {{\Large Out-wedge}};
}
\def\Inoutwedge{
   \node (1) at (0.75,0) [nd] {};
   \node (2) at (0,1) [nd] {};
   \node (3) at (1.5,1) [nd] {};
   \draw[myptr] (1) to (2);
   \draw[myptr] (3) to (1);
  \node [inner sep = 0,below] at +(0.75,-0.5) {{\Large Inout-wedge}};
}
\def\pone {
   \node (1) at (0.75,1) [nd] {1};
   \node (2) at (0,2) [nd] {2};
   \node (3) at (1.5,2) [nd] {3};
   \node (4) at (0,0) [nd] {4};
   \node (5) at (1.5,0) [nd] {5};
   \draw (1) to (2);
   \draw (1) to (3);
   \draw (1) to (4);
   \draw (1) to (5);
  \node [inner sep = 0,below] at +(0.75,-0.5) {{\Large (1)}};
}
\def\ptwo {
   \node (1) at (1,2) [nd] {1};
   \node (2) at (0,1) [nd] {2};
   \node (3) at (1,1) [nd] {3};
   \node (4) at (2,1) [nd] {4};
   \node (5) at (0,0) [nd] {5};
   \draw (1) to (2);
   \draw (1) to (3);
   \draw (1) to (4);
   \draw (2) to (5);
   \node [inner sep = 0,below] at +(1,-0.5) {{\Large (2)}};
}
\def\pthree{
   \node (1) at (0.5,2) [nd] {1};
   \node (2) at (0,1) [nd] {2};
   \node (3) at (1,1) [nd] {3};
   \node (4) at (0,0) [nd] {4};
   \node (5) at (1,0) [nd] {5};
   \draw (1) to (2);
   \draw (1) to (3);
   \draw (2) to (4);
   \draw (3) to (5);
\node [inner sep = 0,below] at +(0.5,-0.5) {{\Large (3)}};
}
\def\pfour{
   \node (1) at (0,2) [nd] {1};
   \node (2) at (1,2) [nd] {2};
   \node (3) at (0.5,1) [nd] {3};
   \node (4) at (0,0) [nd] {4};
   \node (5) at (1,0) [nd] {5};
   \draw (1) to (2);
   \draw (1) to (3);
     \draw (2) to (3);
   \draw (3) to (4);
   \draw (3) to (5);
\node [inner sep = 0,below] at +(0.5,-0.5) {{\Large (4)}};
}
\def\pfive{
   \node (1) at (0.5,2) [nd] {1};
   \node (2) at (0,1) [nd] {2};
   \node (3) at (1,1) [nd] {3};
   \node (4) at (0.5,0.5) [nd] {4};
   \node (5) at (1,0) [nd] {5};
   \draw (1) to (2);
   \draw (1) to (3);
     \draw (2) to (3);	
   \draw (2) to (4);
   \draw (4) to (5);
\node [inner sep = 0,below] at +(0.5,-0.5) {{\Large (5)}};
}
\def\psix{
  \node (1) at (0.5,2) [nd] {1};
   \node (2) at (0,1) [nd] {2};
   \node (3) at (1,1) [nd] {3};
   \node (4) at (0,0) [nd] {4};
   \node (5) at (1,0) [nd] {5};
   \draw (1) to (2);
   \draw (1) to (3);
   \draw (2) to (3);
   \draw (2) to (4);
   \draw (3) to (5);
   
\node [inner sep = 0,below] at +(0.5,-0.5) {{\Large (6)}};
}
\def\pseven{
   \node (1) at (0,2) [nd] {1};
   \node (2) at (1,2) [nd] {2};
   \node (3) at (0,1) [nd] {3};
   \node (4) at (1,1) [nd] {4};
   \node (5) at (0.5,0) [nd] {5};
   \draw (1) to (2);
   \draw (2) to (4);
   \draw (3) to (4);
   \draw (3) to (1);
     \draw (3) to (5);
\node [inner sep = 0,below] at +(0.5,-0.5) {{\Large (7)}};
}
\def\peight{
   \node (1) at (0.1,2) [nd] {1};
   \node (2) at (0.9,2) [nd] {2};
   \node (3) at (-0.1,1) [nd] {3};
   \node (4) at (1.1,1) [nd] {4};
   \node (5) at (0.5,0) [nd] {5};
   \draw (1) to (2);
   \draw (1) to (3);
      \draw (4) to (2);
    	           \draw (3) to (5); 
       \draw (5) to (4);
\node [inner sep = 0,below] at +(0.5,-0.5) {{\Large (8)}};}
\def\pnine{
  \node (1) at (0,2) [nd] {1};
   \node (2) at (1,2) [nd] {2};
   \node (3) at (0.5,1) [nd] {3};
   \node (4) at (0,0) [nd] {4};
   \node (5) at (1,0) [nd] {5};
   \draw (1) to (2);
   \draw (1) to (3);
     \draw (2) to (3);
   \draw (3) to (4);
   \draw (3) to (5);
 \draw (4) to (5);
\node [inner sep = 0,below] at +(0.5,-0.5) {{\Large (9)}};
}
\def\pten{
   \node (1) at (0,2) [nd] {1};
   \node (2) at (1,2) [nd] {2};
   \node (3) at (0,1) [nd] {3};
   \node (4) at (1,1) [nd] {4};
   \node (5) at (0.5,0) [nd] {5};
   \draw (1) to (2);
   \draw (2) to (4);
   \draw (3) to (4);
   \draw (3) to (1);
     \draw (3) to (5);
      \draw (1) to (4);
\node [inner sep = 0,below] at +(0.5,-0.5) {{\Large (10)}};
}
\def\peleven{
   \node (1) at (0,2) [nd] {1};
   \node (2) at (1,2) [nd] {2};
   \node (3) at (0,1) [nd] {3};
   \node (4) at (1,1) [nd] {4};
   \node (5) at (0.5,0) [nd] {5};
   \draw (1) to (2);
   \draw (2) to (4);
   \draw (3) to (4);
   \draw (3) to (1);
     \draw (3) to (5);
      \draw (3) to (2);
\node [inner sep = 0,below] at +(0.5,-0.5) {{\Large (11)}};
}
\def\ptwelve{
   \node (1) at (0,2) [nd] {1};
   \node (2) at (1,2) [nd] {2};
   \node (3) at (0,1) [nd] {3};
   \node (4) at (1,1) [nd] {4};
   \node (5) at (0.5,0) [nd] {5};
   \draw (1) to (2);
   \draw (2) to (4);
   \draw (3) to (4);
   \draw (3) to (1);
     \draw (3) to (5);
 
      \draw (4) to (5);
\node [inner sep = 0,below] at +(0.5,-0.5) {{\Large (12)}};
}
\def\pthirteen{
   \node (1) at (0.9,2) [nd] {1};
   \node (2) at (0,1) [nd] {2};
   \node (3) at (0.5,1) [nd] {3};
   \node (4) at (1.5,1) [nd] {4};
   \node (5) at (0.9,0) [nd] {5};
   \draw (1) to (2);
   \draw (2) to (5);
   \draw (1) to (3);
   \draw (3) to (5);
\draw (1) to (4);
     \draw (4) to (5);
\node [inner sep = 0,below] at +(0.5,-0.5) {{\Large (13)}};
}
\def\pfourteen{
     \node (1) at (0.9,2) [nd] {1};
   \node (2) at (0,1) [nd] {2};
   \node (3) at (0.5,1) [nd] {3};
   \node (4) at (1.5,1) [nd] {4};
   \node (5) at (0.9,0) [nd] {5};
   \draw (1) to (2);
   \draw (2) to (5);
   \draw (1) to (3);
   \draw (3) to (5);
\draw (1) to (4);
     \draw (4) to (5);
\draw (1) to (5);
\node [inner sep = 0,below] at +(0.5,-0.5) {{\Large (14)}};
}
\def\pfifteen{
   \node (1) at (0,2) [nd] {1};
   \node (2) at (1,2) [nd] {2};
   \node (3) at (0,1) [nd] {3};
   \node (4) at (1,1) [nd] {4};
   \node (5) at (0.5,0) [nd] {5};
   \draw (1) to (2);
   \draw (2) to (4);
   \draw (3) to (4);
   \draw (3) to (1);
   \draw (3) to (5);
  \draw (1) to (4);
  \draw (3) to (2);
\node [inner sep = 0,below] at +(0.5,-0.5) {{\Large (15)}};
}
\def\psixteen{
   \node (1) at (0,2) [nd] {1};
   \node (2) at (1,2) [nd] {2};
   \node (3) at (0,1) [nd] {3};
   \node (4) at (1,1) [nd] {4};
   \node (5) at (0.5,0) [nd] {5};
   \draw (1) to (2);
   \draw (2) to (4);
   \draw (3) to (4);
   \draw (3) to (1);
     \draw (3) to (5);
      \draw (3) to (2);
      \draw (4) to (5);
\node [inner sep = 0,below] at +(0.5,-0.5) {{\Large (16)}};
}
\def\pseventeen{
    \node (1) at (0.1,2) [nd] {1};
   \node (2) at (0.9,2) [nd] {2};
   \node (3) at (-0.1,1) [nd] {3};
   \node (4) at (1.1,1) [nd] {4};
   \node (5) at (0.5,0) [nd] {5};
   \draw (1) to (2);
   \draw (1) to (3);
   \draw (1) to (4);
        \draw (3) to (2);
      \draw (4) to (2);
          \draw (3) to (5); 
       \draw (5) to (4);
\node [inner sep = 0,below] at +(0.5,-0.5) {{\Large (17)}};
}
\def\peighteen{
   \node (1) at (0.75,1) [nd] {1};
   \node (2) at (0,2) [nd] {2};
   \node (3) at (1.5,2) [nd] {3};
   \node (4) at (0,0) [nd] {4};
   \node (5) at (1.5,0) [nd] {5};
   \draw (1) to (2);
   \draw (1) to (3);
   \draw (1) to (4);
   \draw (1) to (5);
     \draw (3) to (2);
   \draw (5) to (3);
   \draw (5) to (4);
   \draw (2) to (4);

  \node [inner sep = 0,below] at +(0.75,-0.5) {{\Large (18)}};
}
\def\pnineteen{
  \node (1) at (0,2) [nd] {1};
   \node (2) at (1,2) [nd] {2};
   \node (3) at (0,1) [nd] {3};
   \node (4) at (1,1) [nd] {4};
   \node (5) at (0.5,0) [nd] {5};
   \draw (1) to (2);
   \draw (2) to (4);
   \draw (3) to (4);
   \draw (3) to (1);
     \draw (3) to (5);
      \draw (3) to (2);
      \draw (4) to (5);
      \draw (1) to (4);
\node [inner sep = 0,below] at +(0.5,-0.5) {{\Large (19)}};

}
\def\ptwenty{
  \node (1) at (0.1,2) [nd] {1};
   \node (2) at (0.9,2) [nd] {2};
   \node (3) at (-0.1,1) [nd] {3};
   \node (4) at (1.1,1) [nd] {4};
   \node (5) at (0.5,0) [nd] {5};
   \draw (1) to (3);
   \draw (1) to (4);
   \draw (1) to (5);
     \draw (3) to (2);
      \draw (4) to (2);
      \draw (2) to (5);
      \draw (3) to (4);
       \draw (3) to (5); 
    \draw (4) to (5);
\node [inner sep = 0,below] at +(0.5,-0.5) {{\Large (20)}};
}
\def\ptwentyone{
   \node (1) at (0.1,2) [nd] {1};
   \node (2) at (0.9,2) [nd] {2};
   \node (3) at (-0.1,1) [nd] {3};
   \node (4) at (1.1,1) [nd] {4};
   \node (5) at (0.5,0) [nd] {5};
   \draw (1) to (2);
   \draw (1) to (3);
   \draw (1) to (4);
   \draw (1) to (5);
     \draw (3) to (2);
      \draw (4) to (2);
      \draw (2) to (5);
      \draw (3) to (4);
       \draw (3) to (5); 
       \draw (5) to (4);
\node [inner sep = 0,below] at +(0.5,-0.5) {{\Large (21)}};
}
\def\IndSetFive{
   \node (1) at (0.1,2) [nd] {1};
   \node (2) at (0.9,2) [nd] {2};
   \node (3) at (-0.1,1) [nd] {3};
   \node (4) at (1.1,1) [nd] {4};
   \node (5) at (0.5,0) [nd] {5};
\node [inner sep = 0,below] at +(0.5,-0.5) {{\Large (1) Ind Set}};
}
\def\EdgeFive{
   \node (1) at (0.1,2) [nd] {1};
   \node (2) at (0.9,2) [nd] {2};
   \node (3) at (-0.1,1) [nd] {3};
   \node (4) at (1.1,1) [nd] {4};
   \node (5) at (0.5,0) [nd] {5};
   \draw (1) to (2);
\node [inner sep = 0,below] at +(0.5,-0.5) {{\Large (2) Only Edge}};
}
\def\MatFive{
   \node (1) at (0.1,2) [nd] {1};
   \node (2) at (0.9,2) [nd] {2};
   \node (3) at (-0.1,1) [nd] {3};
   \node (4) at (1.1,1) [nd] {4};
   \node (5) at (0.5,0) [nd] {5};
   \draw (1) to (2);
   \draw (3) to (4);
\node [inner sep = 0,below] at +(0.5,-0.5) {{\Large (3) Matching}};
}
\def\WedgeFive{
   \node (1) at (0.1,2) [nd] {1};
   \node (2) at (0.9,2) [nd] {2};
   \node (3) at (-0.1,1) [nd] {3};
   \node (4) at (1.1,1) [nd] {4};
   \node (5) at (0.5,0) [nd] {5};
   \draw (1) to (2);
   \draw (1) to (3);
\node [inner sep = 0,below] at +(0.5,-0.5) {{\Large (4) Only Wedge}};
}
\def\TriFive{
   \node (1) at (0.1,2) [nd] {1};
   \node (2) at (0.9,2) [nd] {2};
   \node (3) at (-0.1,1) [nd] {3};
   \node (4) at (1.1,1) [nd] {4};
   \node (5) at (0.5,0) [nd] {5};
   \draw (1) to (2);
   \draw (1) to (3);
   \draw (2) to (3);
\node [inner sep = 0,below] at +(0.5,-0.5) {{\Large (5) Only Triangle}};
}
\def\ThreeStarFive{
   \node (1) at (0.1,2) [nd] {1};
   \node (2) at (0.9,2) [nd] {2};
   \node (3) at (-0.1,1) [nd] {3};
   \node (4) at (1.1,1) [nd] {4};
   \node (5) at (0.5,0) [nd] {5};
   \draw (1) to (2);
   \draw (1) to (3);
   \draw (1) to (4);
\node [inner sep = 0,below] at +(0.5,-0.5) {{\Large (6) Only 3-Star}};
}
\def\ThreePathFive{
   \node (1) at (0.1,2) [nd] {1};
   \node (2) at (0.9,2) [nd] {2};
   \node (3) at (-0.1,1) [nd] {3};
   \node (4) at (1.1,1) [nd] {4};
   \node (5) at (0.5,0) [nd] {5};
   \draw (1) to (2);
   \draw (1) to (3);
   \draw (3) to (4);
\node [inner sep = 0,below] at +(0.5,-0.5) {{\Large (7) Only 3-Path}};
}
\def\TailedTriFive{
   \node (1) at (0.1,2) [nd] {1};
   \node (2) at (0.9,2) [nd] {2};
   \node (3) at (-0.1,1) [nd] {3};
   \node (4) at (1.1,1) [nd] {4};
   \node (5) at (0.5,0) [nd] {5};
   \draw (1) to (2);
   \draw (2) to (3);
   \draw (1) to (3);
   \draw (3) to (4);
\node [inner sep = 0,below] at +(0.5,-0.5) {{\Large (8) Only Tailed-Triangle}};
}
\def\FourCycleFive{
   \node (1) at (0.1,2) [nd] {1};
   \node (2) at (0.9,2) [nd] {2};
   \node (3) at (-0.1,1) [nd] {3};
   \node (4) at (1.1,1) [nd] {4};
   \node (5) at (0.5,0) [nd] {5};
   \draw (1) to (2);
   \draw (2) to (4);
   \draw (3) to (4);
   \draw (1) to (3);
\node [inner sep = 0,below] at +(0.5,-0.5) {{\Large (9) Only 4-cycle}};
}
\def\DiamondFive{
   \node (1) at (0.1,2) [nd] {1};
   \node (2) at (0.9,2) [nd] {2};
   \node (3) at (-0.1,1) [nd] {3};
   \node (4) at (1.1,1) [nd] {4};
   \node (5) at (0.5,0) [nd] {5};
   \draw (1) to (2);
   \draw (2) to (4);
   \draw (3) to (4);
   \draw (1) to (4);
   \draw (1) to (3);
\node [inner sep = 0,below] at +(0.5,-0.5) {{\Large (10) Only Diamond}};
}
\def\FourCliqueFive{
   \node (1) at (0.1,2) [nd] {1};
   \node (2) at (0.9,2) [nd] {2};
   \node (3) at (-0.1,1) [nd] {3};
   \node (4) at (1.1,1) [nd] {4};
   \node (5) at (0.5,0) [nd] {5};
   \draw (1) to (2);
   \draw (2) to (3);
   \draw (3) to (4);
   \draw (1) to (4);
   \draw (2) to (4);
   \draw (1) to (3);
\node [inner sep = 0,below] at +(0.5,-0.5) {{\Large (11) Only 4-clique}};
}
\def\FiveCycleOne{
   \node (1) at (0.1,2) [nd, fill=white] {$j$};
   \node (2) at (0.9,2) [nd] { };
   \node (3) at (-0.1,1) [nd] { };
   \node (4) at (1.1,1) [nd, fill=white] {$i$};
   \node (5) at (0.5,0) [nd] {};
   \draw[-stealth'] (5) to (3);
   \draw[-stealth'] (5) to (4);
   \draw[-stealth'] (3) to (1);
   \draw[-stealth'] (4) to (2);
   \draw[-stealth'] (2) to (1);
\node [inner sep = 0,below] at +(0.5,-0.5) {{\Large (a)}};
}
\def\FiveCycleTwo{
   \node (1) at (0.1,2) [nd, fill=white] {$j$};
   \node (2) at (0.9,2) [nd] { };
   \node (3) at (-0.1,1) [nd] { };
   \node (4) at (1.1,1) [nd, fill=white] {$i$};
   \node (5) at (0.5,0) [nd] {};
   \draw[-stealth'] (5) to (3);
   \draw[-stealth'] (5) to (4);
   \draw[-stealth'] (3) to (1);
   \draw[stealth'-] (4) to (2);
   \draw[-stealth'] (2) to (1);
\node [inner sep = 0,below] at +(0.5,-0.5) {{\Large (b)}};
}
\def\DirThreePath{
   \node (1) at (0.1,2) [nd] {};
   \node (3) at (-0.1,1) [nd] { };
   \node[fill=gray] (4) at (1.1,1) [nd] {};
   \node (5) at (0.5,0) [nd] { };
   \draw[myptr] (5) to (3);
   \draw[myptr] (5) to (4);
   \draw[myptr] (3) to (1);
\node [inner sep = 0,below] at +(0.5,-0.5) {{\Large Directed 3-path}};
}
\def\DirTailedTri{
   \node (1) at (0.1,2) [nd] {$i$};
   \node (3) at (-0.1,1) [nd] { };
   \node[fill=gray] (4) at (1.1,1) [nd] {$j$};
   \node (5) at (0.5,0) [nd] { };
   \draw[myptr] (5) to (3);
   \draw[myptr] (5) to (4);
   \draw[myptr] (4) to (3);
   \draw[myptr] (3) to (1);
\node [inner sep = 0,below] at +(0.5,-0.5) {{\Large Directed tailed-triangle}};
}
\def\DirForClique{
  \node (1) at (0.1,2) [nd] {};
   \node (2) at (0.9,2) [nd] {$\ell$};
   \node (3) at (-0.1,1) [nd ] {$k$};
   \node (4) at (1.1,1) [nd] {$j$};
   \node (5) at (0.5,0) [nd] {$i$};
   \draw[myptr] (3) to (1);
   \draw[myptr] (4) to (1);
   \draw[myptr] (4) to (1);
   \draw[myptr] (3) to (2);
   \draw[myptr] (2) to (4);
   \draw[myptr] (5) to (1);
   \draw[myptr] (5) to (2);
   \draw[myptr] (4) to (3);
   \draw[myptr] (5) to (3); 
   \draw[myptr] (5) to (4);
\node [inner sep = 0,below] at +(0.5,-0.5) {{\Large Directed bipyramid}};
}
\begin{document}

\setcopyright{none}

\doi{TBD}

\isbn{TBD}

\title{ESCAPE: Efficiently Counting All 5-Vertex Subgraphs}

\numberofauthors{3}

\author{
\alignauthor
Ali Pinar\\
\affaddr{Sandia National Laboratories\thanks{Sandia National Laboratories is a multi-program laboratory managed and operated by Sandia Corporation, a wholly owned subsidiary of Lockheed Martin Corporation, for the U.S. Department of Energy's National Nuclear Security Administration under contract DE-AC04-94AL85000.
}}\\
\affaddr{Livermore, CA}\\
\email{apinar@sandia.gov}
\alignauthor
C. Seshadhri\\
\affaddr{University of California}\\
\affaddr{Santa Cruz, CA}\\
\email{sesh@ucsc.edu}
\alignauthor
V. Vishal \\
\affaddr{ONU Technology}\\
\affaddr{Cupertino, CA}\\
\email{vishal@onutechnology.com}
}

\maketitle

\begin{abstract}
Counting the frequency of small subgraphs is a fundamental technique in network analysis across various domains, most notably in bioinformatics and social networks.  The special case of triangle counting has received much attention.  Getting results for 4-vertex or 5-vertex patterns is highly challenging, and there are few practical results known that can scale to massive sizes. 
 
 We  introduce an algorithmic framework that can be adopted to count any small pattern in a graph and  apply this framework to  compute exact counts for  \emph{all} 5-vertex  subgraphs.  Our framework is built on cutting a pattern into smaller ones, and using counts of smaller patterns to get larger counts. Furthermore, we exploit degree orientations of the graph to reduce runtimes even further.  These methods avoid the combinatorial explosion that typical subgraph counting algorithms face.  We prove that it suffices to enumerate only four specific subgraphs (three of them have less than 5 vertices) to exactly count all 5-vertex patterns.

 We perform extensive empirical experiments on a variety of real-world graphs. We are able to compute counts of graphs with tens of millions of edges in  minutes on a commodity machine. To the best of our knowledge, this is the first practical algorithm for $5$-vertex pattern counting that runs at this scale.  A stepping stone to our main algorithm is a fast method for counting all $4$-vertex
patterns. This algorithm is typically ten times faster than the state of the art $4$-vertex counters. 
\end{abstract}

\keywords{motif analysis, subgraph counting, graph orientations}

\section{Introduction}
\label{sec:introduction}

Subgraph counting is a fundamental network analysis technique used across diverse domains: bioinformatics, social sciences, and 
infrastructure networks studies~\cite{HoLe70,Co88,Po98,Milo2002,Burt04,PrzuljCJ04,Fa07,HoBe+07,BeBoCaGi08,Fa10,SzTh10,FoDeCo10,SonKanKim12}. 
The high frequencies of certain subgraphs in real networks gives a quantifiable method of proving
they are not Erd\H{o}s-R\'{e}nyi~\cite{HoLe70,WaSt98,Milo2002}. Distributions of small subgraphs are used
to evaluate network models, to summarize real networks, and classify vertex roles, among other things~\cite{HoLe70,PrzuljCJ04,Burt04,HoBe+07,Fa07,BeBoCaGi08,SeKoPi11,DuPiKo12,UganderBK13}.

The main challenge of motif counting is combinatorial explosion. 
As we see in our experiments, the
counts of $5$-vertex patterns are in the orders of billions to trillions,
even for graphs with a few million edges.
An enumeration algorithm is forced to touch each occurrence, and cannot terminate
in a reasonable time.
The key insight of this paper is to design a formal
framework of \emph{counting without enumeration} (or more precisely,
counting with minimal enumeration).
%
%
%
%
Most existing methods~\cite{HoBe+07,BetzlerBFKN11,WongBQH12,RaBhHa14}
work for graphs of at most 100K edges, limiting their uses to
(what we would now consider) fairly small graphs.
A notable exception is recent work by Ahmed et al on counting 4-vertex patterns, that scales
to hundreds of millions of edges~\cite{AhNe+15}.

\subsection{The problem}

%
Our aim is simple: to exactly count the number of all vertex subgraphs (aka patterns, motifs, and graphlets)  up to size $5$ on massive graphs. There are 21 such connected subgraphs, as shown in~\Fig{5vertex}. Additionally, there are 11 disconnected patterns, which we discuss in \Sec{disc} of the Appendix. 
Throughout the paper, we refer to these subgraphs/motifs by their number. (Our algorithm also counts all $3$ and $4$ vertex patterns.)
We give a formal description in \Sec{problem}.

Motif-counting is an extremely popular research topic, and has
led to wide variety of results in the past years. 
As we shall see,  numerous approximate algorithms
that have been proposed for this problem~\cite{WernickeRasche06,TsKaMiFa09,BhRaRa+12,SePiKo13,RaBhHa14,JhSePi15}. Especially for validation at scale, it is critical to have a scalable, exact algorithm.
ESCAPE directly addresses this issue.

\begin{figure*}[t]
\centering
\resizebox{6in}{3.6in}{
  \begin{tikzpicture}[scale=0.8, every node/.style={scale=0.8}, nd/.style={circle,draw,fill=teal!50,inner sep=2pt},framed]    
    \matrix[column sep=0.4cm, row sep=0.2cm,ampersand replacement=\&]
    {
     \ThreeStar \&    
     \ThreePath \&  
     \TailedTriangle  \&
     \FourCycle  \& 
     \ChordalFourCycle  \&
     \FourClique  \\ \hline \\ 
     \pone \&    
     \ptwo \&  
     \pthree \& 
     \pfour \& 
     \pfive \&
     \psix \&
     \pseven    \\
      \peight \&    \pnine \&       \pten \&      \peleven \&      \ptwelve \&     \pthirteen \&     \pfourteen    \\
      \pfifteen \&    
     \psixteen \&  
     \pseventeen\& 
     \peighteen \& 
     \pnineteen \&
     \ptwenty \&
     \ptwentyone   \\};
  \end{tikzpicture}
  }  
\caption{Connected 4 and 5-vertex patterns}
\label{fig:5vertex}
\end{figure*}
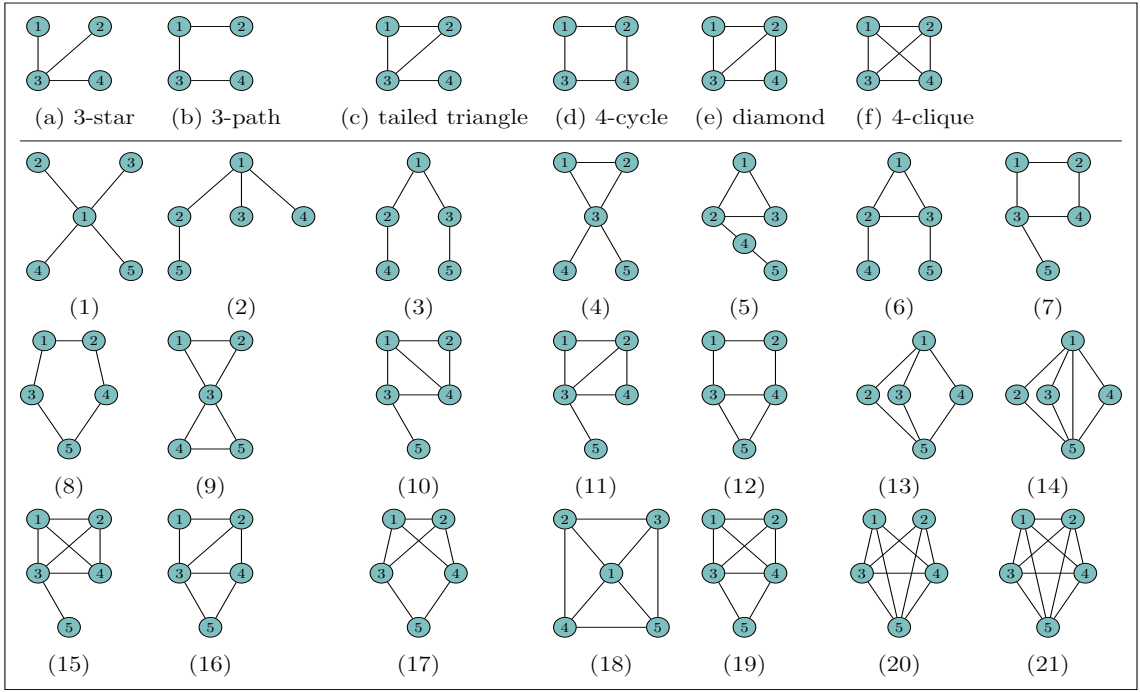

%
\subsection{Summary of our contributions}

We design the Efficient Subgraph Counting Algorithmic PackagE (ESCAPE), that produces
exact counts of all $\leq 5$-vertex subgraphs. We provide a detailed theoretical analysis
and run  experiments on a large variety of datasets, including 
web networks, autonomous systems networks, and social networks. All experiments are
done on a single commodity machine using 64GB memory.

\begin{asparaitem}
\item[\textbf{Scalability through careful algorithmics.}] Conventional wisdom is that $5$-vertex pattern
counting is not feasible because of size. There are a host of approximate methods, such as
color coding~\cite{HoBe+07,BetzlerBFKN11,ZhWaBu+12}, MCMC based sampling
algorithms~\cite{BhRaRa+12}, edge sampling algorithms~\cite{RaBhHa14, WernickeRasche06, Wernicke06}.
We challenge that belief. ESCAPE can do exact counting for  patterns up to 5 vertices on graphs with tens of millions of edges in a matter of minutes. 
(As shown in the experimental section of the above results, they do not scale
graphs of such sizes.) For instance, ESCAPE computes all 5 vertex counts on an router graph with 22M edges in under 5 minutes. 



\item[\textbf{Avoiding enumeration by clever counting.}] One of the key insights into ESCAPE 
is that it suffices to enumerate a very small set of patterns to compute all 5 vertex counts.
Essentially, we build a formal framework of ``cutting" a pattern
into smaller subpatterns, and show that it is practically viable. From this theoretical framework,
we can show that it suffices
 to exhaustively enumerate a special (small) subset of patterns
to actually count all $5$-vertex patterns. 
Counting ideas to avoid enumeration have appeared
in the past practical algorithms~\cite{orca13,AhNe+15,ElShBo15,ElShBo16} but in a more ad hoc 
manner (and never for 5-vertex patterns.) 

The framework is absolutely critical for exact counting, since
enumeration is clearly infeasible even for graphs with a million edges. For instance, 
an automonous systems graph with 11M edges has $10^{11}$ instances of pattern 17 (in \Fig{5vertex}).
We achieve exact counts with clever data structures and combinatorial counting arguments. 

Furthermore, using standard inclusion-exclusion arguments, we prove that the counts of all
connected patterns can be used to get the counts of all (possibly disconnected) patterns.
This is done without any extra work on the input graph.  

\item[\textbf{Exploiting orientations.}]  A critical idea developed in ESCAPE
is  orienting edges in a degeneracy style ordering.
 Such techniques have been successfully applied to triangle counting before~\cite{ChNi85,Co09,SuVa11}. Here we show how this technique can be  extended to general pattern counting. 
This is what allows ESCAPE to be feasible for 5-vertex pattern counting,
and makes it much faster for 4-vertex pattern counting.

\item[\textbf{Improvements for 4-vertex patterns counting.}] The recent PGD package of Ahmed et al.~\cite{AhNe+15} has advanced the state of  art significantly with better 4-vertex pattern algorithms. ESCAPE is significantly faster, even by a factor of thousands in some instances. 

\item[\textbf{Trends in 5-vertex pattern counts:}]  Our ability to count 5-vertex patterns provide a powerful graph mining tool. While a thorough  analysis of 5-vertex counts is beyond the scope of this paper, we show a few examples on how our results can be potentially used for  edge prediction and graph classification.

\end{asparaitem}

%

\begin{figure}[th]
\centering
 \includegraphics[width=0.9\columnwidth]{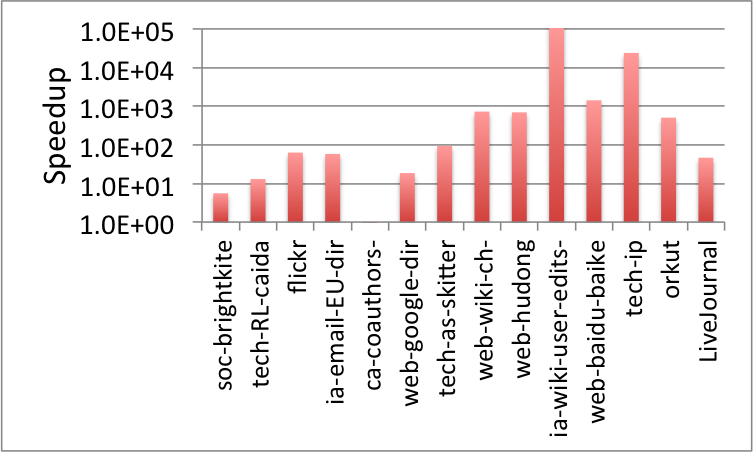}
 \caption{\label{fig:4speedup} \small Speedup achieved by Escape over PGD (computed as runtime of PGD/runtime of ESCAPE). } 
  \end{figure}

\subsection{Related Work} \label{sec:related}

In fields as varied as social sciences, biology, and physics,
it has been observed that the frequency of small pattern subgraphs
plays an important role in graph structure~\cite{HoLe70,Co88,WaFa94,WaSt98,EcMo02,Milo2002,Burt04,Fa07,BeBoCaGi08,FoDeCo10,Fa10,SzLaTh10,SzTh10,SonKanKim12}. 
Specifically in bioinformatics, pattern counts have significant
relevance in graph classiciation~\cite{Milo2002,PrzuljCJ04,HalesA08}.

In social networks, Ugander et al.~\cite{UganderBK13}, underlined the significance of 4-vertex patterns by  proposing a ``coordinate system'' for graphs based on the motifs distribution. This was applied to classification  of  comparatively small  networks (thousands of vertices).
We stress that this was useful even without graph attributes, and thus
the structure itself was enough for classification purposes.
A number of recent results have used small subgraph counts for detecting
communities and dense subgraphs~\cite{SaSe15,Ts15,BeGlLe16,TsPaMi16}.

From the practical algorithmics standpoint, triangle counting
has received much attention. We simply refer the reader
to the related work sections of~\cite{TsKoMi11,SePiKo13}.
Gonen and Shavitt~\cite{GonenS09} propose exact and approximate algorithms for computing {\em non-induced} counts of some 4-vertex motifs. 
They also consider counting number of motifs that a vertex participates in, an instance of a problem called {\em motif degree counting}, which has gained a lot of attention recently (see \cite{MilenkovicP08, StoicaP09, GonenS09, birmele2012}).
Marcus and Shavitt~\cite{MarcusS10} give exact algorithms for computing all 4-vertex motifs running in time $O(d \cdot m + m^2)$. Here $d$ is the maximum vertex degree  and $m$ is the number of edges. 
Their package RAGE does not scale to large graphs. The largest graph processed has 90K edges and takes 40 minutes. They compare with the bioinformatics FANMOD package~\cite{WernickeRasche06, Wernicke06}, which takes about 3 hours.

A breakthrough in exact 4-vertex pattern counting was recently achieved by Ahmed et al.~\cite{AhNe+15,AhNe-pre15}. Using techniques on graph transitions based on edge addition/removal,
their PGD (Parametrized Graphlet Decomposition) package handles graphs with tens of millions of edges and more, and is many orders of magnitude faster
than RAGE. It routinely processes 10 million edge graphs in under an hour.
There are other results on counting 4-vertex patterns, but none achieve the scalability of PGD~\cite{HoDe16,orca13}.
We consider PGD to be the state-of-the-art for 4-vertex pattern counting. They
do detailed comparisons and clearly outperform previous work. (Notably, the authors made
their code public~\cite{PGD}.)



Elenberg et al.~\cite{ElShBo15,ElShBo16} give algorithms for computing pattern \emph{profiles},
which involve computing pattern counts per vertex and edge. This is a significantly harder
problem, and Elenberg et al. employ approximate and distributed algorithms. The maximum
graph size they handle is in order of tens of millions of edges.

Many of the results above~\cite{AhNe+15,HoDe16,ElShBo16} use combinatorial strategies 
to cut down enumeration, which our cutting framework tries to formalize. For the special
case of vertex and edge profiles, Melckenbeeck et al. give an automated method
to generate combinatorial equations for profile counting~\cite{MAM+16}. These results
only generate linear equations, and do not prescribe the most efficient method of counting.
In contrast, our cutting framework generates polynomial formulas, and we deduce
the most efficient formula for 5-vertex patterns.

As an alternative exact counting, Jha et al.~\cite{JhSePi15} proposed 3-path sampling to estimate all 4-vertex counts.  Their technique builds on \emph{wedge sampling}~\cite{ScWa05-2,SePiKo13,KoPiPlSe13} and samples paths of length 3 to estimate
various 4-vertex statistics. 

To the best of our knowledge, there is no method (approximate or exact) that can count all 5-vertex
patterns for graphs with millions of edges.

\section{Preliminaries} 
\subsection{Formal description of the problem}\label{sec:problem}

Our input is an undirected simple graph $G=(V,E)$, with $n$ vertices and $m$ edges.
We distinguish subgraphs from \emph{induced subgraphs}~\cite{Diestel06}.
A subgraph is a subset of edges. An induced subgraph is obtained
by taking a subset $V'$ of vertices, and taking \emph{all edges}
among these vertices. 

\emph{We wish to get induced and non-induced counts for all patterns up to size $5$.}
As shown later in \Thm{disc}, it suffices to get counts for only connected patterns,
since all other counts can be obtained by simple combinatorics.
The connected $4$-vertex and $5$-vertex patterns are shown in \Fig{5vertex}. 
%
For convenience, the \emph{$i$th $4$-pattern}
refers to $i$th subgraph with 4 vertices  in \Fig{5vertex}.
For example, the $6$th $4$-pattern in the 
four-clique and the $8$th $5$-pattern is the five-cycle.

Without loss of generality, we focus on computing induced subgraph counts. 
We use $\ind_i$ (resp. $\nonind_i$) to denote the induced (resp. non-induced)
count of the $i$th $5$-pattern. \emph{Our aim is to compute all $\ind_i$ values.}
A simple (invertible) linear transformation gives all the $\ind_i$ values
from the $\nonind_i$ values. 
we provide details in \Sec{transform} of the Appendix.

%
%
%
%
%

\subsection{Notation} \label{sec:notation}

The input graph $G = (V,E)$ is undirected and has $n$ vertices and $m$ edges. For analysis, 
we assume that the graph is stored as an adjacency list, where each list is a hash table.
Thus, edge queries can be made in constant time.  

We denote the \emph{degree ordering} of $G$ by $\prec$. For vertices $i,j$, we say $i \prec j$,
if either $d(i) < d(j)$ or $d(i) = d(j)$ and $i < j$ (comparing vertex id). 
We construct the \emph{degree ordered DAG} $G^\rightarrow$ by orienting all edges in $G$ according to $\prec$.

Our results and proofs are somewhat heavy on notation, and important terms
are provided in \Tab{notation}. Counts of certain patterns, especially
those in  \Fig{5vertex-basis}, will receive
special notation. Note that some of these patterns are directed,
since we will require the count of them in $G^\rightarrow$.

We will also need \emph{per-vertex, per-edge}
counts for some patterns. For example, $T(G)$ denotes the total
number of triangles, while $T(i), T(e)$ denote the number of triangles
incident to vertex $i$ and edge $e$ respectively.

\begin{center}
\begin{tabular}{c || c}
Notation & Count \\ \hline \hline
$d(i)$ & degree of $i$ \\ \hline 
$W(G)$ & wedge \\ \hline
$W_{++}(G^\rightarrow), W_{+-}(G^\rightarrow)$ & out-wedge, inout-wedge \\ \hline
$W(i,j)$ & wedge between $i,j$ \\ \hline
$W_{++}(i,j)$ & outwedge between $i,j$ \\ \hline
$W_{+-}(i,j)$ & wedge from $i$ to $j$ \\ \hline
$T(G), T(i), T(e)$ & triangle \\ \hline
$C_4(G), C_4(i), C_4(e)$ & 4-cycle \\ \hline
$K_4(G), K_4(i), K_4(e)$ & 4-clique \\ \hline
$D(G)$ & diamond \\ \hline
$DD(G^\rightarrow)$ & directed diamond \\ \hline
$DP(G^\rightarrow)$ & directed 3-path \\ \hline
$DBP(G^\rightarrow)$ & directed bipyramid \\ \hline
\end{tabular} \label{tab:notation}
\end{center}

%

\section{Main theorems} \label{sec:mainthm}

Our final algorithms are quite complex and use a variety of combinatorial methods
for efficiency. Nonetheless, the final asymptotic runtimes are easy to express.
(While we do not focus on this, the leading constants in the $O(\cdot)$ are quite small.)
Our main insight is: despite the plethora of small subgraphs, it suffices to
enumerate a very small, carefully chosen set of subgraphs to count everything else.
Furthermore, these subgraphs can themselves be enumerated 
with minimal overhead.

%
\begin{theorem} \label{thm:fourvertex}  
There is an algorithm for exactly counting all connected 4-vertex patterns in $G$ whose runtime is $O(W_{++}(G^\rightarrow) + W_{+-}(G^\rightarrow) + \dcc(G^\rightarrow) + m + n)$ 
and storage is $O(n+m)$.
\end{theorem}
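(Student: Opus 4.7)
The plan is to work in the degree-oriented DAG $G^\rightarrow$, with in- and out-adjacency stored as hash tables so adjacency queries cost $O(1)$; this preprocessing is $O(n+m)$ time and space. The core idea is to obtain all six connected 4-vertex pattern counts through three focused enumeration passes --- over out-wedges, inout-wedges, and directed diamonds --- together with combinatorial post-processing from the degree sequence and from per-vertex and per-edge triangle counts.

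The first pass enumerates out-wedges to count triangles: for each $v$, iterate unordered pairs of out-neighbors $(a,b)$ and test $\{a,b\} \in E$. Each triangle has a unique source in $G^\rightarrow$, so is counted exactly once in $O(W_{++}(G^\rightarrow))$ total queries, and during the pass I accumulate $T(v)$ and $T(e)$. Five of the six pattern counts then follow in $O(n+m)$ time via easy identities: the 3-star count is $\sum_v \binom{d(v)}{3}$; the non-induced 3-path count is $\sum_{\{u,v\}\in E}(d(u)-1)(d(v)-1)$ corrected by a linear multiple of $T(G)$ for triangles that appear as degenerate 3-paths; the non-induced tailed-triangle count is $\sum_v T(v)(d(v)-2)$, since each tailed triangle is uniquely a triangle together with an outside neighbor of one of its vertices; and the non-induced diamond count is $\sum_e \binom{T(e)}{2}$, since each diamond is two triangles sharing an edge. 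For 4-cliques, a pass over directed diamonds suffices: every $K_4$ of $G$ sits on its unique source and sink as a directed diamond plus the one extra middle edge $v_2 \to v_3$, so testing that edge with a single $O(1)$ query per directed diamond yields $K_4(G)$ in $O(\dcc(G^\rightarrow))$ time.

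The remaining count is 4-cycles, which will be the main obstacle. A direct approach uses $C_4 = \frac{1}{2}\sum_{\{u,w\}}\binom{W(u,w)}{2}$ minus corrections for the already-known diamond and 4-clique counts (denser patterns contribute to the summation as well), but maintaining $W(u,w)$ per pair costs $\Theta(n^2)$. The crucial step will be to aggregate the contributions locally during the inout-wedge pass: pivoting on the middle vertex $v$, each directed 2-path $u \to v \to w$ contributes one wedge to the unordered pair $\{u,w\}$, and combined with the out-wedges produced at pivots below both $u$ and $w$, every wedge at every pair is accounted for exactly once. Using a scratch counter array of size $O(n)$ that is reset between pivots, the sum $\sum\binom{W(u,w)}{2}$ is accumulated without instantiating the full pair table; charging each increment to the enumerated inout-wedge or out-wedge that produced it bounds the total 4-cycle work by $O(W_{++}(G^\rightarrow) + W_{+-}(G^\rightarrow) + m)$ and storage by $O(n+m)$, matching the theorem. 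Conversion between non-induced and induced counts is a constant-size invertible linear transformation and adds no asymptotic overhead.
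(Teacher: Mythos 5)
Your handling of the easy patterns and of $4$-cliques matches the paper: triangles via out-wedge enumeration in $O(W_{++})$, the closed forms $\sum_v\binom{d(v)}{3}$, $\sum_{(u,v)\in E}(d(u)-1)(d(v)-1)-3T$, $\sum_v T(v)(d(v)-2)$, $\sum_e\binom{T(e)}{2}$ (these are the paper's Theorem~\ref{thm:4vertex-easy}), and the observation that every $K_4$ contains exactly one directed diamond whose missing edge you test in $O(1)$ (Theorem~\ref{thm:4clique}). That part is fine.

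The $4$-cycle step, however, has a real gap, and it is precisely where the paper's key lemma lives. The formula $\tfrac12\sum_{\{u,w\}}\binom{W(u,w)}{2}$ needs the \emph{full} undirected $W(u,w)$, and every wedge of $G$ must be touched to compute it; since $W(G)=W_{++}+W_{+-}+W_{--}$ and the in-wedge term $W_{--}$ (both edges pointing \emph{into} the middle vertex) dominates in the degree ordering (think of a star, where $W_{--}=\binom{n-1}{2}$ while $W_{++}=W_{+-}=0$), this blows the claimed $O(W_{++}+W_{+-})$ budget. Your assertion that inout-wedges together with out-wedges ``account for every wedge at every pair exactly once'' is simply false: it misses exactly the in-wedges, i.e.\ a wedge $u\text{--}v\text{--}w$ whose middle vertex $v$ is $\succ$ both $u$ and $w$. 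In a $4$-cycle with DAG order $a\prec b\prec c\prec d$ and edges $a$--$b$--$c$--$d$--$a$, the wedge $a$--$d$--$c$ between the pair $\{a,c\}$ is an in-wedge at $d$ and would be dropped, so the symmetrized sum over both diagonals with a $\tfrac12$ factor undercounts. The scratch-array-``reset between pivots'' mechanism also cannot be made to work when you pivot on the \emph{middle} vertex, since each wedge there names a \emph{pair} $\{u,w\}$, not a single index. The paper's resolution (Theorem~\ref{thm:4cycle}) is to pivot on the $\prec$-maximum vertex $i$ of the $4$-cycle and its opposite corner $j$: both wedge middles are then forced to be $\prec i$, so one can write $C_4(G)=\sum_{i\succ j}\binom{W_{++}(i,j)+W_{+-}(i,j)}{2}$ with no $\tfrac12$ and no subtraction. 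This sum is computable by, for each $i$, scanning in-neighbors $v\to i$ and their neighbors $j\prec i$ into a size-$O(n)$ scratch array $c[j]$, then summing $\binom{c[j]}{2}$, in $\sum_v d^{\mathrm{out}}(v)\,d(v)=O(W_{++}+W_{+-}+m)$ time. That reformulation, which eliminates the in-wedges entirely rather than enumerating them, is the ingredient your proposal is missing.
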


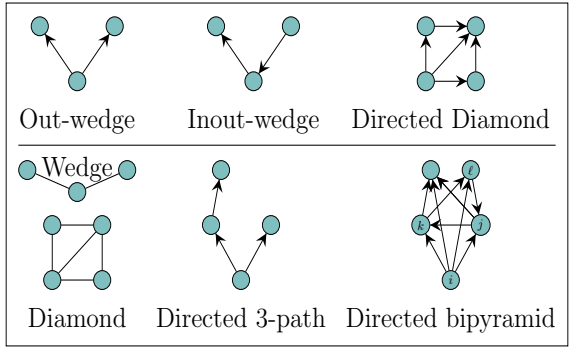
\begin{figure}[t]
\centering
\resizebox{3in}{1.8in}{
  \begin{tikzpicture}[nd/.style={scale=0.75,circle,draw,fill=teal!50,inner sep=2pt},minimum width = 13pt, framed]    
    \matrix[column sep=0.4cm, row sep=0.2cm,ampersand replacement=\&]
    {
     \Outwedge  \& 
     \Inoutwedge \&
     \DirChordal \\ \hline \\
     \Wedge \& 
     \DirThreePath \&
     \DirForClique \\
 };
  \end{tikzpicture}  
}
\caption{\small Fundamental patterns for 4-vertex (above) and 5-vertex (below) pattern counting}
\label{fig:5vertex-basis}
\end{figure}

\begin{theorem} \label{thm:fivevertex} There is an algorithm for exactly counting all connected 5-vertex patterns in $G$ 
whose runtime is $O(W(G) + \cc(G) + \dpath(G^\rightarrow) + \dbp(G^\rightarrow) + m + n)$ 
and storage complexity is $O(n+m+T(G))$.
\end{theorem}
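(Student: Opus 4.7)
The plan is to mirror the strategy behind \Thm{fourvertex} one level higher, via a ``cutting framework'': express every connected 5-vertex pattern count as a polynomial in (a) smaller pattern counts, already obtainable through the 4-vertex routine, and (b) counts of a small set of \emph{fundamental} structures whose total enumeration cost is exactly the stated runtime. Throughout, we compute non-induced counts first, then convert to induced counts by the invertible linear transformation of \Sec{transform}.

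\textbf{Step 1: inherit 3- and 4-vertex counts.} First, invoke \Thm{fourvertex} to obtain non-induced counts of every 3- and 4-vertex pattern, together with per-vertex and per-edge counts of triangles, $4$-cycles, diamonds, and $4$-cliques. Its runtime, $O(W_{++}(G^\rightarrow)+W_{+-}(G^\rightarrow)+\dcc(G^\rightarrow)+m+n)$, is absorbed in our budget because $W_{++}+W_{+-}=O(W(G))$ and every directed diamond is a diamond, so $\dcc(G^\rightarrow)=O(\cc(G))$. Its storage is $O(n+m)$, also within budget.

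\textbf{Step 2: enumerate the four fundamental structures.} Next, enumerate in $G^\rightarrow$ the four structures naming the runtime: wedges (time $O(W(G)+n+m)$ by scanning pairs of incident edges at each vertex), diamonds (by walking over each edge and pairing its common neighbors through per-edge triangle lists — this is where the $T(G)$ storage term is needed), directed $3$-paths $\dpath(G^\rightarrow)$ (by two-hop oriented walks), and directed bipyramids $\dbp(G^\rightarrow)$ (by extending each directed tailed-triangle / directed diamond by a fifth apex found in a common oriented neighborhood). During each enumeration we maintain auxiliary per-vertex, per-edge, and per-pair counters — degrees, triangle counts at vertices/edges, $4$-cycle counts at edges, diamond counts at edges, and $\dpath(i,j)$ counts indexed by endpoint pairs — so that downstream formulas reduce to reading table entries.

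\textbf{Step 3: a cutting formula for each of the 21 patterns.} For every pattern in \Fig{5vertex}, cut it along a vertex or an edge into two smaller pieces, multiply the relevant per-vertex / per-edge / per-pair counts of the two pieces, then subtract inclusion-exclusion corrections corresponding to degenerate overlays (these corrections are always either counts of patterns with fewer vertices or multiples of the same 5-vertex pattern being counted, which can be solved out). For the tree-like patterns (1, 2, 3) the formulas involve only degrees and 4-vertex counts; for triangle-containing sparse patterns (4--8) they use per-edge triangle counts and per-vertex wedge/path counts; for patterns with a 4-cycle or a diamond (9--17) the enumerations of diamonds and directed $3$-paths supply the intersection data; the dense patterns (18--21) are recovered from the directed bipyramid count combined with inclusion-exclusion against $K_5$ overcounts.

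\textbf{Main obstacle.} The chief difficulty is producing, for each of the 21 patterns, a cutting formula whose every summand is dominated by the claimed runtime — naive cuts often introduce terms like the count of $4$-paths or $3$-stars that can individually blow past $W(G)+\cc(G)+\dpath(G^\rightarrow)+\dbp(G^\rightarrow)$. The orientation $G^\rightarrow$ is essential here: expressing sums over neighbor-pairs in the oriented DAG keeps every intermediate quantity bounded by an oriented count of a fundamental pattern. A secondary challenge is the dense patterns (in particular $K_5-e$ and $K_5$), where enumerating even a single witness per pattern is already too expensive; the directed bipyramid, by pinning down the ordering of the two apex vertices under $\prec$, is enumerated exactly once per occurrence, and $K_5$ and $K_5-e$ counts then drop out from per-edge diamond/$4$-clique counters accumulated during that pass. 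Finally, the $O(n+m+T(G))$ storage bound forces us to store triangle adjacency lists (needed for diamond enumeration and for the per-edge $4$-cycle/diamond counters) in a compressed, per-edge form rather than per-triple, which is implementation work but not a conceptual obstacle.
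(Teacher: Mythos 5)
Your overall strategy — inherit 4-vertex counts, enumerate a small set of fundamental structures (wedges, diamonds, directed 3-paths, directed bipyramids), and assemble the remaining counts via cut formulas — is the same skeleton the paper uses. But several of your specific allocations of patterns to methods are wrong, and two of them would derail the argument.

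First, the directed bipyramid enumeration handles \emph{only} the 5-clique (pattern 21); you claim it also yields patterns 18, 19, 20. In the paper, $N_{19}=\sum_e K_4(e)\,(t(e)-2)$ and $N_{20}=\sum_{t}\binom{K_4(t)}{2}-4K_4(G)$ both fall out of the per-edge and per-triangle $4$-clique counts already produced by \Thm{4vertex-tri}; they need no new enumeration pass at all, let alone a bipyramid pass. Your suggestion that "$K_5-e$ counts drop out" of the bipyramid pass because the oriented pattern is "enumerated exactly once per occurrence" is not right as stated: orienting an undirected $K_5-e$ by $\prec$ produces one of several non-isomorphic DAGs depending on where the missing edge falls in the vertex order, so the specific directed bipyramid is not guaranteed to appear once per $K_5-e$. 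It \emph{is} guaranteed once per $K_5$, which is exactly why the paper uses it only for pattern 21. Meanwhile pattern 18 (the wheel), which you lump into this group, is actually the hardest case in \Thm{5cutwed}: its cut is the 3-vertex "diagonal" wedge, both fragments are diamonds, and the formula is $N_{18}=\sum_{i,j,k}\binom{CC(i,j,k)}{2}$, requiring a per-triple diamond count generated during the diamond enumeration — nothing to do with bipyramids or $K_4$.

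Second, you file the 5-cycle (pattern 8) under "triangle-containing sparse patterns," but it contains no triangle and cannot be handled by a vertex cut or an edge cut — the 5-cycle is $2$-connected but not $3$-connected, and its only usable cut is a non-adjacent vertex pair. That case is precisely why the paper has a separate \Thm{5cycle}: it observes the two DAG orientations of the 5-cycle, cuts at the antipodal pair $\{i,j\}$ with $i\prec j$, and obtains $N_8 = \sum_{i\prec j}P(i,j)\,(W_{++}(i,j)+W_{+-}(i,j))-Z$, where $P(i,j)$ is the directed 3-path count and $Z$ the directed tailed-triangle count. This is the sole reason $\dpath(G^\rightarrow)$ appears in the running time. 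Without this special handling, pattern 8 is left uncounted (or counted by something too expensive), which is a genuine gap in your proposal.
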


Note that previous theorems only handle connected patterns. But a routine inclusion-exclusion
argument yields the following theorem.

\begin{theorem} \label{thm:disc} Fix a graph $G$. Suppose we have
counts for all connected $r$-vertex patterns, for all $r \leq k$.
Then, the counts for all (even disconnected) $k$-vertex patterns
can be determined in constant time (only a function of $k$). 
\end{theorem}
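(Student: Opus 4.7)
The plan is to prove this by strong induction on $k$. The base cases $k \leq 2$ are immediate, so assume the claim holds for all $k' < k$; combining the induction hypothesis with the given connected counts, the complete table of counts of every pattern (connected or disconnected) on at most $k-1$ vertices, as well as every connected $k$-vertex pattern, is available in constant time.

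It is cleanest to work with the labeled count $\mathrm{InjHom}(F,G)$, the number of injective homomorphisms $V(F) \to V(G)$ that preserve edges. Unlabeled non-induced counts are obtained by dividing by $|\mathrm{Aut}(F)|$, and induced counts follow from non-induced counts by the standard edge-subset M\"obius inversion; both are fixed $O_k(1)$-sized linear transforms, so it suffices to compute $\mathrm{InjHom}(H,G)$ for every disconnected $k$-vertex $H$. Fix such an $H$ with connected components $H_1, \ldots, H_t$, $t \geq 2$, each on fewer than $k$ vertices.

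The key identity is
\[
    \prod_{i=1}^t \mathrm{InjHom}(H_i, G) \;=\; \sum_{\sim} \mathrm{InjHom}(H/\!\!\sim,\, G),
\]
where $\sim$ ranges over equivalence relations on $V(H_1) \sqcup \cdots \sqcup V(H_t)$ that never identify two vertices lying in the same component $H_i$, and $H/\!\sim$ is the resulting simple quotient graph (no loops arise, because the forbidden identifications are exactly the ones that would create them). Indeed, any tuple $(\phi_1,\ldots,\phi_t)$ of component embeddings determines a unique such equivalence via $v \sim w \iff \phi(v) = \phi(w)$, and for fixed $\sim$ the tuples with that collision pattern biject with $\mathrm{InjHom}(H/\!\sim, G)$. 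Isolating the trivial relation, which yields $H/\!\sim\, = H$, and rearranging:
\[
    \mathrm{InjHom}(H,G) \;=\; \prod_{i=1}^t \mathrm{InjHom}(H_i, G) \;-\!\! \sum_{\sim\,\ne\,\mathrm{trivial}} \mathrm{InjHom}(H/\!\!\sim,\, G).
\]
Every $H_i$ is connected with fewer than $k$ vertices, so each factor in the product is in the table; every nontrivial $H/\!\sim$ has strictly fewer than $k$ vertices, so each summand is known from the induction hypothesis regardless of whether $H/\!\sim$ is connected.

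The set of admissible $\sim$, together with the isomorphism type of each quotient, depends only on $H$ and hence only on $k$, so evaluating the expression costs $O_k(1)$ per pattern; since there are $O_k(1)$ patterns on $k$ vertices, the total work is $O_k(1)$ as claimed. The main bookkeeping hurdle is to set up the identity at the level of ordered tuples of globally labeled vertex maps rather than at the level of unlabeled subgraph copies: when $H$ has isomorphic components, a subgraph-level formulation overcounts by a component-permutation factor, and it also conflates ``disjoint embedding whose union happens to carry extra cross-edges'' with genuine copies of $H$. Routing the argument through $\mathrm{InjHom}$ and the quotient graphs $H/\!\sim$ sidesteps both issues cleanly.
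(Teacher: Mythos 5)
Your argument is correct and is essentially the paper's own proof dressed in slightly different notation: both rest on the same inclusion-exclusion identity expressing $\prod_i \mathrm{InjHom}(H_i,G)$ as a sum of labeled non-induced counts over quotients/merges of $H$ that never collapse two vertices of the same component, and then isolating the trivial quotient to solve for the count of $H$ itself. The only structural difference is the induction variable --- you induct on the total vertex count $k$ (observing that any nontrivial quotient drops the vertex count), whereas the paper inducts on the number of connected components $c$ (observing that any nontrivial merge, because its non-singleton blocks straddle components, drops the component count). Both are valid and amount to the same recursion; your version is arguably cleaner in one respect, since it makes explicit that one must work with labeled non-induced counts ($\mathrm{InjHom}$) and convert to unlabeled induced counts at the end, a point the paper glosses over (its formal definition of ``match'' is the induced one, but the proof of this theorem silently switches to the edge-preserving, non-induced sense).
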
 

\textbf{Outline of remaining paper:} \Sec{ideas} gives a high level overview
of our main techniques. \Sec{cut} discusses the cutting framework used to reduce
counting all patterns into enumeration of some specific patterns (namely, those
in \Fig{5vertex-basis}). In \Sec{fourvertex},
we apply this framework to 4-vertex pattern count, and prove \Thm{fourvertex}.
In \Sec{fivevertex}, we work towards 5-vertex pattern counting and prove \Thm{fivevertex}.

The proof and discussion of \Thm{disc} is omitted because of space constraints
and appears in \Sec{disc} of the Appendix. 
In the remainder of this main body of the paper, we only focus on connected patterns.

\section{Main ideas} \label{sec:ideas}

The goal of ESCAPE is to avoid the combinatorial explosion that occurs in a typical enumeration algorithm.
For example, the {\tt tech-as-skitter} graph has 11M edges, but \emph{2 trillion} 5-cycles.
Most 5-vertex pattern counts are upwards of many billions
for most of the graphs (which have only ~10M edges).
Any algorithm that explicitly touches each such pattern is bound to fail. The second difficulty is that 
the time for enumeration is significantly \emph{more} than the count of patterns. This is because we have to find
\emph{all potential} patterns, the number of which is more than the
count of patterns. A standard
method of counting triangles is to enumerate wedges, and check whether it participates in a triangle. 
The number of wedges in a graph is typically an order
of magnitude higher than the number of triangles.

\textbf{Idea 1: Cutting patterns into smaller patterns.} For a pattern $H$, a cut set 
is a subset of vertices whose removal disconnects $H$. Other than the clique,
every other pattern has a cut set that is a strict subset of the vertices (we call
this a non-trivial cut set).
Formally, suppose there is some set of $k$ vertices $S$, whose
removal splits $H$ into connected components $C_1, C_2, \ldots$. Let the graphs induced
by the union of $S$ and $C_i$ be $H_i$. The key observation is that if we determine
the following quantities, we can count the number of occurrences of $H$.
\begin{asparaitem}
    \item For each set $S$ of $k$ vertices in $G$, the number of occurrences of $H_1, H_2, \ldots$
that involve $S$.
    \item The number of occurrences of $H'$, for all $H'$ with fewer vertices than $H$.
\end{asparaitem}
The exact formalization of this requires a fair bit of notation and the language of graph
automorphisms. This gives a set of (polynomial) formulas for counting most
of the 5-vertex patterns. These formulas can be efficiently evaluated with appropriate data structures.

There is some art in choosing the right $S$ to design the most efficient algorithm.
In most of the applications, $S$ is often just a vertex or edge. Thus, if we know the number
of copies of $H_i$ incident to every vertex and edge of $G$, we can count $H$.
This information can be determined by enumerating all the $H_i$s, which 
is a much simpler problem. 

\textbf{Idea 2: Direction reduces search.} A classic algorithmic idea for triangle counting
is to convert the undirected $G$ into the DAG $G^\rightarrow$, 
and search for directed triangles~\cite{ChNi85,ScWa05,Co09}. We extend this approach
to 4 and 5-vertex patterns. The idea is to search for \emph{all} non-isomorphic DAG
versions of the pattern $H$ in $G^\rightarrow$. 
This is combined with Idea 1, where we break up  patterns in smaller ones.
These smaller patterns are enumerated through $G^\rightarrow$,
since the direction significantly cuts down the combinatorial expansion
of the enumeration procedure. The use of graph orientations has
been employed in theoretical algorithms for subgraph counting~\cite{AlYuZw94}.
We bring this powerful technique to practical counting of 4 and 5-vertex patterns.
%
%

\section{The cutting framework} \label{sec:cut} 
This section introduces   the framework of our algorithms. We start with  introducing the theory, and then  discuss how it can  used for algorithm design and present its application to 5-pattern 2 counting.

Let $H$ be a pattern we wish to count in $G$.
For any set of vertices $C$ in $H$, $H|_C$ is the subgraph of $H$ induced on $C$.
For this section, it is convenient to consider $G$ and $H$ as labeled.
This makes the formal analysis much simpler. (Labeled counts can
be translated to unlabeled counts by pattern automorphism counts.)

We formally define a match and a partial match of the patterm $H = (V(H),E(H))$.
As defined, a match is basically an induced subgraph of $G$ that is exactly $H$.

\begin{definition} \label{def:copy} A \emph{match} of $H$ is a bijection $\pi: S \rightarrow V(H)$
where $S \subseteq V$ and $\forall s_1, s_2 \in S$, $(s_1, s_2)$ is an edge of $G$
iff $(\pi(s_1), \pi(s_2))$ is an edge of $H$.
The set of distinct matches of $H$ in $G$ is denoted $\match(H)$. 

If $\pi$ is only an injection (so $|S| < |V(H)|$), then $\pi$ is a \emph{partial match}.

A match $\pi:S \rightarrow V(H)$ \emph{extends} a partial match $\sigma: T \rightarrow V(H)$
if $S \supset T$ and $\forall t \in T$, $\pi(t) = \sigma(t)$.
\end{definition} 

\begin{definition} \label{def:deg} Let $\sigma$ be a partial match of $H$ in $G$.
The $H$-degree of $\sigma$, denoted $\deg_H(\sigma)$,
 is the number of matches of $H$ that extend $\sigma$.
\end{definition}

We now define the \emph{fragment} of $G$ that is obtained by cutting $H$ into smaller patterns.

\begin{definition} \label{def:frag} Consider $H$ with some non-trivial cut set $C$ (so $|C| < |V(H)|$),
whose removal leads to connected components $S_1, S_2, \ldots$. The \emph{$C$-fragments} of $H$
are the subgraphs of $H$ induced by $C \cup S_1, C \cup S_2, \ldots$. This set is denoted by $\frag_C(H)$.
\end{definition}

Before launching into the next definition, it helps to explain the main cutting lemma.
Suppose we find a copy $\sigma$ of $H|_C$ in $G$. If $\sigma$ extends to a copy of every possible $F_i \in \frag_C(H)$
\emph{and} all these copies are disjoint, then they all combine to give a copy of $H$.
When these copies are not disjoint, we end up with another graph $H'$, which we call a \emph{shrinkage}.

\begin{definition} \label{def:shrinkage} Consider graphs $H$, $H'$, and a non-trivial cut set $C$
for $H$. Let the graphs in $\frag_C(H)$ be denoted by $F_1, F_2, \ldots$.
A \emph{$C$-shrinkage of $H$ into $H'$} is a set of maps $\{\sigma, \pi_1,$ $\pi_2, \ldots,$ $\pi_{|\frag_C(H)|}\}$
with the following properties.
\begin{asparaitem}
    \item $\sigma: H|_C \rightarrow H'$ is a partial match of $H'$.
    \item Each $\pi_i: F_i \rightarrow H'$ is a partial match of $H'$.
    \item Each $\pi_i$ extends $\sigma$.
    \item For each edge $(i,j)$ of $H'$, there are some index $c \in |\frag_C(H)|$
    and vertices $a,b \in F_i$ such that $\pi_i(a) = i$ and $\pi_i(b) = j$.
\end{asparaitem}

The set of graphs $H'$ such that there exists some $C$-shrinkage of $H$ in $H'$ is denoted
$\shrink_C(H)$. For $H' \in \shrink_C(H)$, the number of distinct $C$-shrinkages
is $\numshrink_C(H,H')$.
\end{definition}

The main lemma tells us that if we know $\deg_F(\sigma)$ for every copy $\sigma$
of $H|_C$ and for every $C$-fragment $F$, and we know the counts of every possible
shrinkage, we can deduce the count of $H$. 

\begin{lemma} \label{lem:cut} Consider pattern $H$ with cut set $C$. Then,
\begin{eqnarray*}
 \match(H) & = & \sum_{\sigma \in \match(H|_C)} \prod_{F \in \frag_C(H)} \deg_F(\sigma) \\
& & - \sum_{H' \in \shrink_C(H)} \numshrink_C(H,H') \match(H') 
\end{eqnarray*}
\end{lemma}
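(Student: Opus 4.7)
The plan is a double-counting argument on tuples $T = (\sigma, \pi_1, \ldots, \pi_k)$ with $k = |\frag_C(H)|$, where $\sigma \in \match(H|_C)$ and each $\pi_i$ is a match of the fragment $F_i$ extending $\sigma$. The first count is immediate from \Def{deg}: fixing $\sigma$, the number of matches of $F_i$ extending $\sigma$ is $\deg_{F_i}(\sigma)$, and since the cut set $C$ separates $H$ into disjoint connected components $S_1,\ldots,S_k$, the extensions to distinct fragments are chosen independently (they share only the already-fixed vertex images in $\sigma(C)$). Thus for each $\sigma$ the number of such tuples is $\prod_F \deg_F(\sigma)$, and summing over $\sigma \in \match(H|_C)$ gives the expression $\sum_\sigma \prod_F \deg_F(\sigma)$.

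The second count classifies each tuple by the graph it induces in $G$. Given $T$, piece $\sigma$ and the $\pi_i$'s together into a single map $\mu: V(H) \to V(G)$; this is well-defined because the $\pi_i$'s agree on $C$ (all extending $\sigma$) and the sets $V(F_i) \setminus C = S_i$ partition $V(H) \setminus C$. Let $H'$ be the subgraph of $G$ on the vertex set $\mu(V(H))$ together with every edge of $G$ among those vertices. If $\mu$ is injective and the image is an induced copy of $H$, then $\mu$ is itself a match of $H$ extending $\sigma$, and restricting a match to each fragment is the inverse of this construction; such ``clean'' tuples therefore contribute exactly $\match(H)$. Otherwise, the maps $\sigma$ and $\pi_1,\ldots,\pi_k$ regarded abstractly into $V(H')$ constitute a $C$-shrinkage of $H$ into $H'$ in the sense of \Def{shrinkage}: three axioms hold immediately, and the edge-coverage axiom holds because every edge of $H'$ is by construction an edge of $G$ between two vertices in the image of some single fragment $F_i$, hence is witnessed by $\pi_i$. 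Conversely, composing any $C$-shrinkage of $H$ into any $H'$ with a match $\mu' \in \match(H')$ recovers a distinct tuple, so the ``shrunk'' tuples are in bijection with pairs (shrinkage, match of $H'$) and contribute $\sum_{H'} \numshrink_C(H,H')\,\match(H')$. Equating the two counts and rearranging yields the lemma.

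The main difficulty will be carefully verifying the bijection in the non-clean case: that the abstract quadruple extracted from a tuple satisfies all four shrinkage axioms (especially edge-coverage); that distinct tuples factor through distinct (shrinkage, match) pairs; and that every (shrinkage, match) pair produces a valid tuple extending $\sigma$. A related technicality is separating the trivial identity shrinkage of $H$ into itself from the sum on the RHS, so that clean matches of $H$ are not double-counted between the two cases; equivalently, the sum ranges over proper shrinkages, those in which $\mu$ is non-injective or the induced image carries edges beyond the isomorphic copy of $H$.
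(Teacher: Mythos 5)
Your double-counting argument on tuples $(\sigma,\pi_1,\ldots,\pi_k)$ is exactly the paper's approach, and you correctly flag the need to exclude the trivial $H'=H$ case from the right-hand sum. However, your verification of the edge-coverage axiom has a genuine gap. You define $H'$ as the \emph{induced} subgraph of $G$ on $\mu(V(H))$ and then assert that every edge of $H'$ lies between two vertices in the image of a single fragment, but $G$ can perfectly well contain an edge between $\mu(a)$ and $\mu(b)$ where $a$ and $b$ sit in two \emph{different} components $S_i\neq S_j$ of $H\setminus C$. That edge is in your $H'$ yet is witnessed by no $\pi_c$, so $\{\sigma,\pi_1,\ldots,\pi_k\}$ is not a $C$-shrinkage of $H$ into $H'$ and the tuple falls outside your case analysis. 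Concretely, take $H$ to be the path $1\text{--}2\text{--}3\text{--}4$ with $C=\{2,3\}$; a tuple with $\mu(1)=w$, $\mu(4)=z$, and $wz\in E(G)$ has induced image a 4-cycle whose diagonal edge $\{w,z\}$ no fragment covers. Under the literal ``iff'' (induced) reading of Def.~1 such tuples are neither matches of $H$ nor shrinkages, so the identity you are trying to prove would actually be false as stated.

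The repair---and it is the reading consistent with the formulas the paper actually derives from \Lem{cut}, which use non-induced quantities such as $\deg_{F_1}(\sigma)=d(j)-1$ rather than $d(j)-1-T(i,j)$---is to interpret ``match'' as \emph{non-induced} throughout (edges of $H$ must map to edges of $G$, with no constraint on non-edges), and to define $H'$ not as the induced subgraph of $G$ but as the graph on $\mu(V(H))$ with edge set exactly $\mu(E(H))$, i.e.\ the quotient of $H$ by the fibers of $\mu$. Then edge-coverage holds by construction, and a tuple fails to yield a copy of $H$ precisely when $\mu$ is non-injective, giving a clean dichotomy. One further correction: your claimed bijection between shrunk tuples and (shrinkage, match of $H'$) pairs is off by a factor of $|\aut(H')|$ once you fix a labeled abstract $H'$, since each tuple factors through $H'$ in $|\aut(H')|$ ways. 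The lemma's formula is correct only if $\numshrink_C(H,H')$ is taken to count shrinkages modulo the $\aut(H')$-action on the target; you can check this against the paper's worked example, which reports $\numshrink_C=2$ for the shrinkage of 5-pattern~(2) into the tailed triangle even though there are $4$ shrinkages as raw sets of maps and $|\aut|=2$. The paper's own proof is terse and does not spell out either of these points, so you are not alone in eliding them, but as written your argument would not go through.
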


\begin{proof} Consider any copy $\sigma$ of $H|_C$. Take all tuples of the form $(\pi_1, \pi_2, \ldots, \pi_{|\frag_C(H)|})$
where $\pi_i$ is a copy $F_i \in \frag_C(H)$ that extends $\sigma$. The number of such tuples
is exactly $\sum_{\sigma \in \match(H|_C)} \prod_{F \in \frag_C(H)} \deg_F(\sigma)$. 

Abusing notation, let $V(\pi_i)$ be the set of vertices that $\pi_i$ maps to $F_i$. If all $V(\pi_i) \setminus V(C)$ are distinct,
by definition, we get a copy of $H$. If there is any intersection, this is a $C$-shrinkage of $H$ into 
some $H'$. Consider aggregating the above argument over all copies $\sigma$. Each match of $H$ is counted
exactly once.  Each match of $H' \in \shrink_C(H)$ is counted for every distinct $C$-shrinkage
of $H$ into $H'$, which is exactly $\numshrink_C(H,H')$. This completes the proof.
\end{proof}

\textbf{Algorithmically using this lemma:} Suppose $H$ is a $5$-vertex
pattern, and counts for all $\leq 4$-vertex patterns are known. In typical examples, $C$
is either a vertex or an edge. Thus, each $\sigma$ in the formula is
simply just every possible vertex or edge. If we can enumerate
all matches of each $F \in \frag_C(H)$, then we can store $\deg_F(\sigma)$
in appropriate data structures. Each $F$ has strictly less than $5$-vertices
(and in most cases, just $2$ or $3$), and thus, we can hope to enumerate $F$.

Once all $\deg_F(\sigma)$ are computed, we can iterate over all $\sigma$
to compute the first term in \Lem{cut}. We need to subtract out the summation
over $\shrink_C(H)$. Observe that $\numshrink_C(H,H')$ is an
absolute constant independent of $G$, so it can be precomputed. Each $H' \in \shrink_C(H)$
has less than $5$ vertices, so we already know $\match(H')$.

This yields $\match(H)$. To get the final \emph{unlabeled} frequency,
we must normalize to $\match(H)/|\aut(H)|$.
(Here, $\aut(H)$ is the set of automorphisms of $H$. The same unlabeled
pattern can be counted multiple times as a labeled match. For example,
every triangle gets counted three times in $\match$, and we divide
this out to get the final unlabeled frequency.)

%

\textbf{Application of lemma for pattern 2:} To demonstrate this lemma, 
let us derive counts for $5$-pattern (2). We use the labeling in \Fig{5vertex}.
 Let edge $(1,2)$ be the cut set $S$. Thus, the fragments are $F_1$, the wedge $\{(1,2), (2,5)\}$
and $F_2$, the three-star $\{(1,2), (1,3), (1,4)\}$. 
Every edge in $G$ is a match of $H|_S$. Consider $(i,j)$ with match
$\sigma(i) = 1$ and $\sigma(j) = 2$. The degree $\deg_{F_1}(\sigma)$ is $d(j) - 1$. 
The degree $\deg_{F_2}(\sigma)$ is $(d(i)-1)(d(j)-2)$. 

The only possible shrinkage of the patterns is into a tailed triangle. Let $H$
be the $5$-pattern (2), and $H'$ be the tailed triangle.
Note that $\numshrink_C(H,H')$ is $2$. 
In both cases, we set $\sigma'(1) = 3$ and $\sigma'(2) = 1$. Set $\pi_1(5) = 2$ and $\pi_2(3) = 2$, $\pi_2(4) = 4$.
Alternately, we can change $\pi_2(4) = 2$ and $\pi_2(3) = 4$. The set of maps $\{\sigma', \pi_1, \pi_2\}$ in both cases
forms a $C$-shrinkage of this pattern into tailed triangles.

This
\begin{eqnarray*}
 & & \match(H) = \sum_{(i,j) \in E} \big[ (d(j)-1)(d(i)-1)(d(i)-2) \\
 & & + (d(i)-1)(d(j)-1)(d(j)-2)\big] - 2\cdot\match(\textrm{tailed triangle}) 
\end{eqnarray*}

Note that $H$ has two automorphisms, as does the tailed triangle. Thus, the number of tailed
triangle matches (as a labeled graph) is twice the frequency. A simple argument shows that
the number of tailed triangles is $\sum_i t(i)(d(i)-2)$ (we can also derive this from \Lem{cut}).
Thus, 
$$ N_2 = \sum_{\langle i,j\rangle \in E} (d(j)-1){d(i)-1 \choose 2} - 2\sum_i t(i)(d(i)-2) $$

\section{Counting 4-vertex patterns} \label{sec:fourvertex}

A good introduction to these techniques is counting $4$-vertex patterns.
The following formulas have been proven in~\cite{JhSePi15,AhNe+15}, but can be derived using the framework of \Lem{cut}.

\begin{theorem} \label{thm:4vertex-easy} 
$\textrm{\# 3-stars}  =  \sum_i {d(i) \choose 3}$, 
$\textrm{\# diamonds}  =  \sum_e {t(e) \choose 2}$,

$\textrm{\# 3-paths}  =  \sum_{(i,j) \in E} (d(i)-1)(d(j)-1) - 3\cdot T(G)$,

$\textrm{\# tailed-triangles}  =  \sum_i t(i)(d(i)-2)$ 
\end{theorem}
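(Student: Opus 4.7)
The plan is to derive each formula as an instance of the cutting framework in \Lem{cut}, exploiting that every 4-vertex pattern under consideration admits a \emph{uniquely identifiable anchor substructure} --- a vertex or edge distinguished within the pattern by its degree sequence. Choosing this anchor as the cut set $C$ forces each occurrence of the pattern in $G$ to correspond to a unique match of $H|_C$, so the count reduces to summing a simple expression over $V$ or $E$, with at most one shrinkage correction.

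For the 3-star, the anchor is the unique degree-$3$ center: after fixing $i$, a 3-star centered at $i$ is determined by an unordered triple of distinct neighbors of $i$, giving $\binom{d(i)}{3}$. For the diamond, the anchor is the unique edge lying in two triangles of the pattern (the one joining the two degree-$3$ vertices); for fixed $e$, the pattern is determined by an unordered pair of distinct triangles through $e$, giving $\binom{t(e)}{2}$. For the tailed triangle, the anchor is the unique degree-$3$ attachment vertex: for fixed $i$, pick a triangle through $i$ ($t(i)$ choices) and then a tail endpoint among the $d(i)-2$ neighbors of $i$ outside that triangle. In each of these three cases the uniqueness of the anchor rules out shrinkage into a smaller graph, so the formulas follow directly with no correction term.

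The 3-path is the one case that genuinely invokes the shrinkage term of \Lem{cut}. Taking the middle edge (joining the two degree-$2$ vertices of the pattern) as $C$, the two $C$-fragments are wedges, and for fixed middle edge $\{i,j\}$ the number of ordered completions is $(d(i)-1)(d(j)-1)$. This overcounts precisely when the two extensions coincide at a common neighbor of $i$ and $j$, in which case the two fragments collapse into a triangle; each triangle of $G$ arises in this way from each of its three edges serving as the middle edge, contributing a total correction of $3\,T(G)$. The main step to justify carefully is the uniqueness of the anchor in each pattern (immediate once one reads off the degree sequence) together with, for the 3-path, confirming that the triangle is the only possible shrinkage graph and that the multiplicity of each triangle is exactly $3$; granted these, the four formulas drop out of \Lem{cut} with no further computation.
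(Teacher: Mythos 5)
Your four formulas and the direct combinatorial derivations underlying them are all correct, and since the paper defers the proof of \Thm{4vertex-easy} to~\cite{JhSePi15,AhNe+15} and to \Lem{cut} without writing one out, you are supplying an argument the paper omits. The one place your reasoning goes wrong is the claim that ``the uniqueness of the anchor rules out shrinkage into a smaller graph'' for the 3-star, diamond, and tailed triangle. That claim is contradicted by your own fourth case: the middle edge of the 3-path is likewise a uniquely identifiable anchor (it joins the two degree-$2$ vertices), and yet a shrinkage correction $-3T(G)$ is still required. Anchor uniqueness only guarantees that each occurrence of $H$ is counted once across choices of the cut set; it has nothing to do with whether the fragment extensions can collide. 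What actually eliminates the correction term in your first three formulas is that the expressions build disjointness in directly: $\binom{d(i)}{3}$ and $\binom{t(e)}{2}$ select unordered \emph{distinct} items, and the factor $d(i)-2$ already excludes the two triangle neighbors from the tail. A literal application of \Lem{cut}, which evaluates the ordered product $\prod_F \deg_F(\sigma)$, would instead give $d(i)^3$, (twice) $t(e)^2$, and $2t(i)d(i)$, and every one of those \emph{does} carry shrinkage terms (collapse of the 3-star into a wedge or an edge, coincidence of the two triangles of the diamond, the tail landing on a triangle vertex); your compact formulas silently absorb that subtraction together with the automorphism division $|\aut(H)|$. So you should either present these three as direct unordered counts---which is what you effectively do---or carry out the ordered product with the explicit shrinkage bookkeeping of the lemma; attributing the clean form to anchor uniqueness is the wrong reason, even though the final formulas are right.
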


For counting $4$-cycles, note any set of opposite
vertices (like $1$ and $4$ in 4-cycle of~\Fig{5vertex}) form a cut. It is easy to see that
$C_4(G) = \sum_{i < j} {W(i,j) \choose 2}/2$. These values are potentially expensive
to compute as a complete wedge enumeration is required. Employing the degree ordering,
we can prove a significant improvement. 
With a little care, we can get counts per edge. (We remind the reader
that $\succ$ refers to the degree ordering.)

\begin{theorem} \label{thm:4cycle} $ C_4(G) = \sum_{i \succ j} {W_{++}(i,j) + W_{+-}(i,j) \choose 2}$.
For edge $(i,k)$ where $i \succ k$, 

$C_4((i,k)) = \sum_{j \succ k}$ $[W_{++}(i,j) +$ $W_{+-}(i,j)$
$+W_{+-}(j,i)-1] + $ $\sum_{j \prec k} [W_{+-}(j,i) + W_{++}(i,j)-1]$.
\end{theorem}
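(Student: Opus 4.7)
The plan is to apply a top-vertex argument based on the degeneracy orientation $G^\rightarrow$: every 4-cycle has a unique $\prec$-maximum vertex $v^*$, and the pair $(v^*, v^-)$, where $v^-$ is the cycle-opposite of $v^*$, canonically identifies the cycle.

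For the global formula, I would show that each 4-cycle $C$ is counted exactly once. With $v^*$ the top vertex and $v^-$ the cycle-opposite, the two remaining vertices $u_1, u_2$ both satisfy $u_\ell \prec v^*$, so the edge $u_\ell v^*$ is oriented $u_\ell \to v^*$ in $G^\rightarrow$ (an out-arrow at the apex $u_\ell$). Depending on whether $u_\ell \prec v^-$ or $u_\ell \succ v^-$, the second apex-edge to $v^-$ is oriented outward or inward, placing $u_\ell$ in $W_{++}(v^*, v^-)$ or $W_{+-}(v^*, v^-)$ respectively. Both apices thus lie in the combined pool, so the unordered pair $\{u_1, u_2\}$ contributes exactly one term to $\binom{W_{++}(v^*, v^-) + W_{+-}(v^*, v^-)}{2}$. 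To rule out overcounting from the other diagonal $(u_1, u_2)$ (WLOG $u_1 \succ u_2$), observe that its pool $W_{++}(u_1, u_2) + W_{+-}(u_1, u_2)$ only counts apices $\prec u_1$, excluding $v^*$; at most $v^-$ qualifies, so the pool has size at most $1$ and contributes $0$. The reverse direction is a bijection: any pair of pool apices $\{l_1, l_2\}$ for a given $(i, j)$ with $i \succ j$ produces a 4-cycle whose top is $i$.

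For the per-edge formula, the plan is to parametrize each 4-cycle containing $(i, k)$ by the vertex $j$ opposite to $i$ in the cycle. Writing the cycle as $i - k - j - l - i$, existence requires $(k, j) \in E$ and $l$ to be a common neighbor of $i, j$ with $l \notin \{i, j, k\}$; hence for each valid $j$ the count is $W(i, j) - 1$, with the $-1$ removing the forced apex $l = k$. Expanding $W(i, j)$ using $G^\rightarrow$ and case-splitting on $j$'s position relative to $k$ and $i$ produces the orientation-specific terms $W_{++}(i, j)$, $W_{+-}(i, j)$, and $W_{+-}(j, i)$ appearing in the statement; each term vanishes outside the subcase where its apex-orientation is feasible, yielding the partitioned expression $\sum_{j \succ k}[\cdots] + \sum_{j \prec k}[\cdots]$.

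The main obstacle is the case analysis for the per-edge formula: tracking how common neighbors of $i$ and $j$ distribute across orientation buckets as $j$ ranges over the subregions $j \prec k$, $k \prec j \prec i$, and $j \succ i$, while verifying that the asymmetric brackets correctly capture each contribution and that the $-1$ per term aggregates over $j$ to exclude precisely the spurious apex $l = k$. The global count is simpler because the $v^*$-bijection is uniform; the per-edge version must juggle the several roles $i$ can play (top, cycle-opposite of top, or apex) within the cycles it appears in.
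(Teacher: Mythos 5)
Your global-count argument is the same as the paper's: both identify the $\prec$-maximum vertex of the cycle and its cycle-opposite as the canonical diagonal pair, observe that the two remaining apices both land in the pool $W_{++}(i,j)+W_{+-}(i,j)$ (since they are $\prec i$), and argue that the other diagonal contributes nothing. Your version is a bit more careful about the ``no overcounting via the other diagonal'' step and the reverse direction, but the decomposition is identical. That half is fine.

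For the per-edge formula you start exactly as the paper does: write $C_4((i,k)) = \sum_{j \in N(k)\setminus\{i\}} \bigl(W(i,j)-1\bigr)$ by parametrizing 4-cycles $i$--$k$--$j$--$l$--$i$ by the third wedge vertex $j$. The gap is in the next step. You assert that $W(i,j)$ expands into the three terms $W_{++}(i,j)$, $W_{+-}(i,j)$, $W_{+-}(j,i)$, each vanishing outside its feasible subcase; but this omits the in-wedges, i.e., common neighbors $l$ with $l \succ i$ and $l \succ j$. Such apices are perfectly valid fourth vertices of a 4-cycle through $(i,k)$, and they fall into none of the three named buckets. Concretely, take $G$ to be the 4-cycle on $\{1,2,3,4\}$ with $1\prec 2\prec 3\prec 4$, and $e=(2,3)$; then $i=3$, $k=2$, and the only $j$ is $1\prec k$, yet the unique completing vertex $l=4$ satisfies $l\succ i$ and $l\succ j$, so the bracketed term evaluates to $-1$ rather than $1$. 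The paper's own proof sketch commits the same omission (it says ``any out or inout wedge between $i$ and $j$''), so you have faithfully reproduced the paper's reasoning --- but the case analysis you flag as ``the main obstacle'' cannot in fact be closed to yield the theorem's per-edge identity without an additional term (and the $W_{+-}(j,i)$ in the $j\prec k$ bracket, which is identically zero there, should presumably be $W_{+-}(i,j)$). When you carry this out you should either find the correction or flag the theorem statement as needing one, rather than assuming the three-term expansion is exhaustive.
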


\begin{proof} Consider all DAG versions of the 4-cycle, as given in \Fig{4cycle-DAG}.
Let $i$ denote the highest vertex according to $\prec$, and let $j$ be the opposite
end (as shown in the figure). The key observation is that wedges between $i$ and $j$
are either 2 out-wedges, 2 inout wedges, or one of each. Summing over all possible $j$s,
we complete the proof of the basic count.

Consider edge $(i,k)$ where $i \succ k$. To determine the 4-cycles on this edge, we look
at all wedges that involve $(i,k)$. Suppose the third vertex on such a wedge is $j$.
We have two possibilities. (i) $i \succ k \prec j$: Thus, $(i,k,j)$ is an out wedge,
and could be a part of a 4-cycle of type (a) or (c). Any out or inout wedge 
between $i$ and $j$ creates a 4-cycle. We need a $-1$ term to subtract out the
wedge $(i,k,j)$ itself.

(ii) $i \succ k \succ j$: This is an inout wedge and can be part of a 4-cycle
of type (b) or (c). Again, any other out or inout wedge between $i$ and $j$
forms a 4-cycle. A similar argument to the above completes the proof.
%
%
\end{proof}

%
%
Now, we show how to count $4$-cliques.

\begin{theorem} \label{thm:4clique} (We remind the reader that $\dcc$ is the number
of directed diamonds, as shown in \Fig{5vertex-basis}.) The number of four-cliques per-vertex and per-edge 
can be found in time $O(W_{++}(G^\rightarrow) + \dcc(G^\rightarrow))$ 
and $O(m)$ additional space.
\end{theorem}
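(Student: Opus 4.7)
The plan is to enumerate 4-cliques through their canonical orientation in $G^\rightarrow$: a 4-clique $\{v_1, v_2, v_3, v_4\}$ with $v_1 \prec v_2 \prec v_3 \prec v_4$ restricts to the directed diamond whose source is $v_1$, whose sink is $v_4$, and whose two middles are $\{v_2, v_3\}$, together with the extra ``middle'' edge $(v_2, v_3)$. Because $\prec$ is a total order, this (source, sink, middle-pair) data is uniquely determined by the clique, so 4-cliques in $G$ are in bijection with directed diamonds in $G^\rightarrow$ whose middle edge is also present in $G$. This reduces the task to (i) enumerating directed triangles and bucketing them by their (source, sink) edge, and (ii) testing, for each co-bucketed pair of middles, whether the missing edge exists.

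For step (i), I would loop over each vertex $i$ and iterate through all unordered pairs of out-neighbors $\{u, v\}$ of $i$; a single $O(1)$ adjacency query per pair detects whether $uv \in E$, and when it does we obtain a directed triangle with source $i$ and sink $\max_\prec(u, v)$. The total work is $\sum_i \binom{d^+(i)}{2} = W_{++}(G^\rightarrow)$. For step (ii), I would append the middle vertex of each discovered triangle $(i, u, v)$ to a list $L_i[v]$ keyed by the sink $v$, and then for every sink $v$ with $|L_i[v]| \ge 2$ iterate through all $\binom{|L_i[v]|}{2}$ unordered pairs $\{u_1, u_2\} \subseteq L_i[v]$, testing $u_1 u_2 \in E$ in $O(1)$. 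Whenever this test succeeds, $\{i, u_1, u_2, v\}$ is a 4-clique, and I would increment its four vertex counters and six edge counters in constant time each.

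By the opening bijection every 4-clique is reported exactly once, and the pair-enumeration work sums to $\sum_{i, v} \binom{|L_i[v]|}{2} = \dcc(G^\rightarrow)$, because a directed diamond in $G^\rightarrow$ is specified uniquely by its source, sink, and unordered pair of middles. For the space bound, I would reuse a single length-$n$ array of dynamic buckets across source vertices, resetting after each $i$ by walking only its out-neighborhood; the transient storage during processing of $i$ is $\sum_v |L_i[v]| \le (d^+(i))^2$, and the standard degree-orientation inequality $d^+(i) \le \sqrt{2m}$ then yields $O(m)$ additional storage overall. The main obstacle in a clean writeup is the accounting against the paper's formal definition of $\dcc$: one must verify that the pairs enumerated in step (ii) are in true bijection with directed diamonds, and that no 4-clique is double-counted from a competing (source, sink) labeling — both facts follow, but only after carefully invoking the uniqueness of the $\prec$-minimum and $\prec$-maximum vertex of a 4-clique.
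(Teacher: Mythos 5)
Your proof is correct and takes essentially the same route as the paper: each 4-clique contains exactly one directed diamond of $G^\rightarrow$ (source $=\prec$-min, sink $=\prec$-max), so it suffices to enumerate out-out wedges to list triangles, pair middles that share a (source, sink), and test the single missing edge, giving time $O(W_{++}+\dcc)$. Your write-up is in fact slightly more careful than the paper's two-sentence sketch, since bucketing explicitly by sink guarantees both middles lie $\prec$-between the source and sink (the paper phrases this loosely as pairing triangles with $i$ as the smallest vertex), and your space argument via $d^+(i)\le\sqrt{2m}$ correctly justifies the $O(m)$ bound.
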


\begin{proof} Let $H$ denote the directed diamond of \Fig{5vertex-basis}.
The key observation is that every four-clique in the original graph
must contain one (and exactly one) copy of $H$ as a subgraph.
It is possible to enumerate
all such patterns in time linear in $\dcc(G^\rightarrow)$. 
We simply loop over all edges $(i,j)$,
where $i \prec j$. We enumerate all the outout wedges involving $(i,j)$,
and determine all triangles involving $(i,j)$ with $i$ as the smallest vertex.
Every pair of such triangles creates a copy of $H$, where $(i,j)$ forms
the diagonal. For each such copy,
we check for the missing edge to see if it forms a four-clique.
Since we enumerate all four-cliques, it is routine to find the per-vertex and per-edge
counts.
\end{proof}

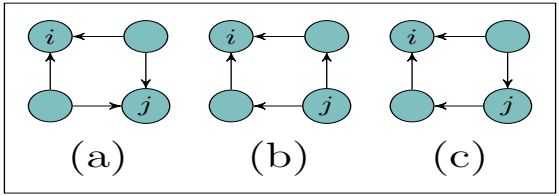
\begin{figure}[t]
\centering
\resizebox{3in}{1in}{
  \begin{tikzpicture}[nd/.style={circle,draw,fill=teal!50,inner sep=2pt},minimum width = 13pt, framed]    
    \matrix[column sep=0.4cm, row sep=0.2cm,ampersand replacement=\&]
    {
     \FourCycleOne  \& 
     \FourCycleTwo \&
     \FourCycleThree \\
 };
  \end{tikzpicture}  
  }
\caption{All acyclic orientations of the 4-cycle}
\label{fig:4cycle-DAG}
\end{figure}

We state below a stronger version of the 4-vertex counting theorem, \Thm{fourvertex}.
This will be useful for 5-vertex pattern counting.
%

\begin{theorem} \label{thm:4vertex-tri} In $O(W_{++} + W_{+-} + \dcc + m)$ time and $O(T)$ additional space,
there is an algorithm that computes (for all vertices $i$, edges $e$, triangle $t$): all $T(i)$, $T(e)$, $C_4(i)$, $C_4(e)$, $K_4(i)$, $K_4(e)$, $K_4(t)$ counts,
and for every edge $e$, the list of triangles incident to $e$.
\end{theorem}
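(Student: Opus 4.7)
The plan is to stage everything in three phases tied to the degree-oriented DAG $G^\rightarrow$, reusing per-edge triangle lists as the backbone data structure that carries information between phases.

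\textbf{Phase 1: triangles and triangle lists.} I enumerate triangles by iterating over every vertex $v$ and every pair of out-neighbors $(a,b)$ in $G^\rightarrow$, testing whether the closing edge exists via the adjacency hash table; each triangle is discovered exactly once (at its $\prec$-minimum vertex), so the total work is $O(W_{++}(G^\rightarrow)+m)$. For each triangle $\{x,y,z\}$ I increment $T(x),T(y),T(z)$ and $T(e)$ for the three edges, assign the triangle a unique identifier, and append that identifier (together with the opposing vertex) to each of the three incident edges' triangle lists. Since each triangle is stored in exactly three lists, total list storage is $3T=O(T)$.

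\textbf{Phase 2: per-edge and per-vertex $C_4$.} I apply the formula of \Thm{4cycle}, which expresses $C_4((i,k))$ as a sum over partner vertices $j$ of directed wedge counts $W_{++}(i,j),W_{+-}(i,j),W_{+-}(j,i)$. To avoid materializing all $n^2$ pair counts, I process one ``top'' vertex $i$ at a time. For fixed $i$, I build two transient hash maps $W_{++}(i,\cdot)$ and $W_{+-}(i,\cdot)$ by walking all 2-paths rooted at $i$ in $G^\rightarrow$: each in-neighbor $a$ of $i$ with an out-neighbor $j$ contributes to $W_{++}(i,j)$; each out-neighbor $a$ of $i$ with a neighbor $j\neq i$ contributes to $W_{+-}(i,j)$ or $W_{+-}(j,i)$ depending on the direction of $(a,j)$. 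I then iterate over every $k$ with $k\prec i$ and evaluate the per-edge formula by traversing $k$'s neighborhood, indexing into the hash maps. After finishing $i$ I discard its maps. Summed over $i$, the 2-path traversals cost $O(W_{++}(G^\rightarrow)+W_{+-}(G^\rightarrow))$; the per-vertex count $C_4(i)$ is recovered from the edge counts by the usual aggregation and automorphism normalization.

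\textbf{Phase 3: $K_4$ per vertex, edge, and triangle.} Following \Thm{4clique}, for every edge $(i,j)$ with $i\prec j$ I iterate over all unordered pairs of triangles incident to $(i,j)$ using the triangle lists from Phase 1. Each pair is a directed diamond with diagonal $(i,j)$, and summed over edges this enumerates every directed diamond exactly once, costing $O(\dcc(G^\rightarrow))$ time. For each such pair I test whether the edge between the two apex vertices exists; if it does, a 4-clique $K$ has been found. I then increment $K_4(v)$ for each $v\in K$, $K_4(e)$ for each of the six edges of $K$, and $K_4(t)$ for each of the four triangles of $K$. Two of those triangles are precisely the sides of the diamond (whose identifiers are already in hand from the enumeration); the remaining two triangles use the apex--apex edge and one of $i,j$, and their identifiers are fetched in $O(1)$ from the triangle list of the apex--apex edge.

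\textbf{Main obstacle.} The delicate point is keeping $K_4(t)$ accurate without either enumerating four-cliques multiple times or paying for extra searches per triangle. Enumerating by directed diamond guarantees single discovery of each 4-clique; the per-edge triangle lists built in Phase 1 are exactly what is needed to look up triangle identifiers in constant time during Phase 3, so the bookkeeping for $K_4(t)$ piggy-backs on the $O(\dcc)$ enumeration. Space stays within the stated bound: the triangle lists are $O(T)$, and Phase 2's hash maps are freed between iterations and never exceed $O(n)$, which is subsumed by the ambient $O(n+m)$ storage. Numerical correctness of every formula is already supplied by \Thm{4cycle} and \Thm{4clique}; what remains is just the data-structure accounting above.
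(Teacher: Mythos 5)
Your proposal matches the paper's own proof in structure and substance: the paper's Theorem~\ref*{thm:4vertex-tri} is argued simply by combining the triangle-enumeration idea (out-wedges of $G^\rightarrow$), the 4-cycle formula of Theorem~\ref*{thm:4cycle} via out-/inout-wedge enumeration, and the 4-clique enumeration via directed diamonds from Theorem~\ref*{thm:4clique}, which is precisely your three phases. You fill in data-structure bookkeeping the paper leaves implicit (per-edge triangle lists, per-vertex transient wedge maps, $K_4(t)$ lookups); the one thing to tighten is the inout-wedge case in Phase~2 — you should also enumerate in-neighbors of in-neighbors of $i$ to pick up the $W_{+-}$ paths terminating at $i$, since a configuration $i\to a \leftarrow j$ is an in-wedge and does not contribute to $W_{+-}$ at all — but this is a local fix that does not change the $O(W_{++}+W_{+-}+m)$ cost of the phase.
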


\begin{proof} A classic theorem basically states that all triangles
can be enumerated in $O(W_{++}(G^\rightarrow))$ time~\cite{ChNi85,ScWa05}. (We used the
same argument to prove \Thm{4clique}.) By \Thm{4vertex-easy}, once 
we have per-vertex and per-edge triangle counts, we can count everything
other that $4$-cycles and $4$-cliques in linear time. By \Thm{4cycle}, enumerating
outout and inout wedges suffices for 4-cycle counting. We add the bound
on \Thm{4clique} to complete the proof.
\end{proof}

%
%
%
%

\section{Onto 5-vertex counts} \label{sec:fivevertex}

With the cut framework of \Sec{cut}, we can generate
\emph{efficient} formulas for all 5-vertex patterns, barring the 5-cycle (pattern 8)
and the 5-clique (pattern 21). We give the formulas that \Lem{cut} yields. 
It is cumbersome and space-consuming to give proofs of all of these, so we omit them. 
We break the formulas into four groups, depending on whether the cut chosen in a vertex, edge, triangle, or wedge. 
We use $\tailedtri(G)$ to denote the tailed triangle count in $G$.
After stating these formulas, we will later explain the algorithm that
computes the various $N_i$s.

\begin{theorem} \label{thm:5cutvertex} [Cut is vertex]
$    N_1  =  \sum_i {d(i) \choose 4}$
 
$    N_3  =  \sum_i \sum_{(i,j) \in E} (d(j)-1) - 4\cdot C_4(G) - 2\cdot \tailedtri(G) - 3\cdot T(G)$
 
$    N_4  =  \sum_i t(i)(d(i)-2)$
 
$    N_7  =  \sum_i C_4(i)(d(i)-2) - 2\cdot \cc(G)$
 
$    N_9  =  \sum_i {t_i \choose 2} - 2 \cdot \cc(G)$
 
$    N_{15}  =  \sum_i K_4(i) (d(i)-3)$
\end{theorem}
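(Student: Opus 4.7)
The plan is to derive all six identities by a uniform application of \Lem{cut} with the cut set $C$ chosen to be a single vertex of the pattern $H$. My recipe for each $N_i$ is: (i) select a canonical cut vertex $c$ of the pattern; (ii) identify the fragments in $\frag_{\{c\}}(H)$, which in each case are smaller patterns (edges, wedges, triangles, 4-cycles, or 4-cliques); (iii) express each $\deg_F(v)$ in terms of the per-vertex quantities $d(v), t(v), C_4(v), K_4(v)$ already made available by \Thm{4vertex-tri}; (iv) enumerate every $H' \in \shrink_{\{c\}}(H)$ with its multiplicity $\numshrink_{\{c\}}(H,H')$; and (v) translate the resulting $\match$-expression into an unlabeled count by dividing by $|\aut(H)|$.

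The \emph{core-plus-pendant} family $N_1, N_4, N_7, N_9, N_{15}$ admits an almost mechanical treatment. For each of these patterns there is a canonical cut vertex incident to both a core subpattern (singleton for $N_1$, triangle for $N_4$ and $N_9$, 4-cycle for $N_7$, 4-clique for $N_{15}$) and the added pendant(s). The main term is then a single sum over vertices of a product $\deg_{\mathrm{core}}(v) \cdot \deg_{\mathrm{pendant}}(v)$, which after $|\aut(H)|$ normalization collapses to the stated $\sum_i \binom{d(i)}{4}$, $\sum_i t(i)(d(i)-2)$, $\sum_i C_4(i)(d(i)-2)$, $\sum_i \binom{t(i)}{2}$, and $\sum_i K_4(i)(d(i)-3)$ respectively. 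Shrinkage analysis is light for $N_1, N_4, N_{15}$, since the only collisions are absorbed by the binomial or descending-factorial normalizations. For $N_7$ and $N_9$ the sole nontrivial shrinkage is a diamond: in $N_7$ it arises when the pendant lands on the vertex of the 4-cycle opposite the apex (which is possible only when a chord is present, i.e.\ at a diamond), while in $N_9$ it arises when the two triangles through the cut vertex share a full edge rather than only the cut vertex. A direct enumeration shows $\numshrink = 2$ in both cases, producing the $-2 \cdot \cc(G)$ correction.

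For $N_3$ (the 5-vertex path) I would cut at the middle vertex of the path, giving two wedge fragments rooted at the cut vertex with $\deg_F(v) = \sum_{j \sim v}(d(j)-1)$. \Lem{cut} then yields a main term expressible as an iterated degree sum over incident edges. Three distinct shrinkage patterns arise from overlaps of the two wedges: (a) identifying both far endpoints of the two wedges produces a 4-cycle; (b) identifying a far endpoint of one wedge with the center of the other produces a tailed triangle; and (c) a three-way collapse produces a triangle. Counting the number of labeled shrinkages mapping to each smaller pattern and dividing by the respective automorphism groups yields the coefficients $4, 2, 3$ in front of $C_4(G), \tailedtri(G), T(G)$.

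The main obstacle is the shrinkage enumeration, in particular for $N_3$: one must verify that the three subtraction constants together peel off every degenerate overlap of the two wedge fragments exactly once. I would handle this systematically by tabulating, for each pair of vertex identifications between the two fragments, the resulting collapsed graph, identifying it with a smaller pattern, and quotienting by the automorphisms of $H$ that stabilize the cut vertex. Once these shrinkage tables are in hand, each of the six identities follows from a single application of \Lem{cut} combined with the per-vertex counts of \Thm{4vertex-tri}; no further algorithmic input is needed.
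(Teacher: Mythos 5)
The paper itself gives no proof of \Thm{5cutvertex} --- it only remarks that each identity follows from \Lem{cut} with a vertex cut, and omits the details --- so your proposal is a fleshed-out version of exactly the intended argument. Your treatment of $N_1$, $N_7$, $N_9$, and $N_{15}$ is sound (in particular the shrinkage multiplicities $2$ for the diamond in $N_7$ and $N_9$ are correct). But there is a genuine gap in the $N_3$ derivation. You take the cut vertex to be the middle of the path and the two fragments to be the two wedges rooted there, so the main term is $\sum_v \deg_{F_1}(v)\deg_{F_2}(v)$ with $\deg_{F_i}(v)=\sum_{j\sim v}(d(j)-1)$; this counts \emph{ordered} pairs of outgoing length-$2$ paths from $v$, including pairs that share their middle vertex. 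That overlap --- identifying the two wedge \emph{centers} (not the far endpoints, not a far endpoint with a center) --- is a shrinkage you never list: the collapsed graph is a $3$-star, not a $4$-cycle, tailed triangle, or triangle. A quick sanity check kills the three-term correction: on $G=K_{1,4}$ the main term evaluates to $12$, while $C_4=\tailedtri=T=0$ and $N_3=0$; on $G=K_4$ one gets $60$ (after halving by $|\aut|$) versus $60-4\cdot 3-2\cdot 12-3\cdot 4 = 12 \ne 0$. In both cases the discrepancy is exactly $3\cdot(\#\text{$3$-stars})$, so the missing shrinkage term has multiplicity $3$. (One can instead redefine the main term so the two wedges are required to leave $v$ through \emph{distinct} neighbors, which removes the $3$-star shrinkage and leaves coefficients $4,2,3$; but that is a different main term than the raw $\prod_F\deg_F(\sigma)$ of \Lem{cut} that you invoke, and you would then need to say so and justify it.)

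A smaller issue: for $N_4$ (triangle with two pendants hanging off one vertex) you assert that cutting at the apex and normalizing yields $\sum_i t(i)(d(i)-2)$. That expression is the $4$-vertex tailed-triangle count, not a count of a $5$-vertex pattern; correctly applied, the cut framework gives $\sum_i t(i)\binom{d(i)-2}{2}$, since two pendant neighbors must be chosen. Your phrase ``absorbed by the binomial or descending-factorial normalizations'' gestures in the right direction but does not actually produce the stated linear factor, and you should not accept the printed formula without checking it: a single vertex with $t(i)=1$ and $d(i)=4$ already distinguishes $(d(i)-2)=2$ from $\binom{d(i)-2}{2}=1$. The fix for both $N_3$ and $N_4$ is the same discipline you advocate in your final paragraph --- exhaustively tabulate the vertex identifications between fragments and quotient by automorphisms --- you simply did not carry it out far enough.
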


\begin{theorem} \label{thm:5cutedge} [Cut is edge]
$    N_2  =  \sum_{\langle i,j\rangle \in E} (d(j)-1){d(i)-1 \choose 2} - 2\cdot \tailedtri(G) $

$    N_5  =  \sum_{e = \langle i,j\rangle \in E} (d(i)-1)(d(j)-T(e)) - 4\cdot \cc(G)$
 
$    N_6  =  \sum_{e = (i,j) \in E} t_e(d(i)-2)(d(j)-2) - 2\cdot \cc(G)$
 
$    N_{11}  =  \sum_{e = \langle i,j\rangle \in E} {T(e) \choose 2} (d(i)-3)$
 
$    N_{12}  =  \sum_{e \in E} C_4(e) T(e) - 4\cdot \cc(G)$
 
$    N_{14}  =  \sum_e {T(e) \choose 3}$
 
$    N_{19}  =  \sum_e K_4(e)(t(e)-2)$
\end{theorem}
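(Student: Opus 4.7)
The plan is to derive each of the seven formulas by applying \Lem{cut} with an edge of $H$ as the cut set $C$, paralleling the derivation of $N_2$ already carried out in \Sec{cut}. For a given pattern $H$, we choose an edge $e^*=(a,b)\in E(H)$ whose two endpoints form a non-trivial vertex cut (i.e., $H-\{a,b\}$ is disconnected). The $C$-fragments are the induced subgraphs $F_t=H[\{a,b\}\cup S_t]$ for each connected component $S_t$ of $H-\{a,b\}$. Each $F_t$ has at most $4$ vertices and is one of the small patterns already understood (edge, wedge, triangle, $3$-star, $3$-path, $4$-cycle, tailed triangle, diamond, $4$-clique), so for any edge-match $\sigma:(a,b)\mapsto(i,j)$ of $(a,b)$ to an edge of $G$, the quantity $\deg_{F_t}(\sigma)$ admits a closed form in the per-vertex and per-edge counts tabulated in \Sec{notation}: $d(i)-1$, $d(j)-1$, $T(e)$, $C_4(e)$, $K_4(e)$.

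The steps for each pattern are: (i) pick the cut edge of $H$; (ii) identify the fragments and write each $\deg_{F_t}(\sigma)$ in closed form; (iii) take the product over fragments and sum over (ordered) edges of $G$ to form the first term of \Lem{cut}; (iv) enumerate all shrinkages $H'\in\shrink_C(H)$ that can arise when distinct fragments accidentally share a non-cut vertex, and compute the multiplicity $\numshrink_C(H,H')$; (v) apply \Lem{cut}, substitute the counts of the shrinkage patterns, and divide the resulting labeled count by $|\aut(H)|$ to obtain $N_H$. For instance, $N_{14}$ is the book of three triangles glued along a common edge: cutting that common edge produces three identical triangle fragments each contributing $T(e)$, no nontrivial 5-vertex shrinkage occurs, and after dividing by $|\aut(H)|=12$ we get $\sum_e\binom{T(e)}{3}$. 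For $N_{19}$ ($K_5$ minus two edges sharing a vertex), we cut the edge joining the two degree-$4$ vertices; the fragments are a $K_4$ and a triangle on the cut edge, and the only shrinkage arises when the triangle's non-cut vertex coincides with one of the $K_4$'s non-cut vertices, giving $K_4(e)(T(e)-2)$ after correction. Patterns $N_5, N_6, N_{11}, N_{12}$ admit analogous derivations in which the diamond count $D(G)$ or the tailed-triangle count $\tailedtri(G)$ appears as the shrinkage correction, producing the subtractions of $2D(G)$, $4D(G)$, or $2\tailedtri(G)$ listed in the theorem.

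The main obstacle is the careful bookkeeping of shrinkages. For each $H$ one must enumerate every way two or more partial matches can collide on a vertex outside the cut, identify the resulting $H'$ (always a pattern on $\le 4$ vertices whose count is supplied by \Thm{4vertex-easy}, \Thm{4cycle}, and \Thm{4clique}), and compute the integer multiplicity $\numshrink_C(H,H')$, which depends on the automorphism groups of both $H$ and $H'$. Once these constants and $|\aut(H)|$ are fixed per pattern, and one is consistent about whether the edge sum ranges over ordered or unordered pairs (the asymmetric cut edges in $N_2,N_5,N_6,N_{11}$ require ordered sums), the remaining work is routine algebraic simplification into the stated closed forms.
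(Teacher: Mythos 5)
Your proposal follows exactly the approach the paper advocates: apply \Lem{cut} with an edge of $H$ as the cut set, compute $\deg_F(\sigma)$ for each fragment in terms of per-vertex and per-edge counts, subtract shrinkage corrections, and normalize by $|\aut(H)|$. The paper explicitly omits the proofs of this theorem ("cumbersome and space-consuming to give proofs of all of these"), giving only the worked $N_2$ derivation in \Sec{cut}, so your sketch is essentially a more detailed realization of the paper's implicit proof; your identification of cut edges, fragments, and shrinkage patterns (e.g., $K_4$ shrinkages for $N_{19}$, tailed-triangle and diamond shrinkages for $N_2,N_5,N_6,N_{12}$) is consistent with what \Lem{cut} would produce. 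One small imprecision: for $N_{14}$ you say ``no nontrivial $5$-vertex shrinkage occurs,'' but shrinkages to smaller patterns (coincident triangle vertices) do occur and are exactly what converts the raw product $T(e)^3$ into the falling factorial $T(e)(T(e)-1)(T(e)-2)$ before dividing by $|\aut|=12$; your conclusion $\sum_e \binom{T(e)}{3}$ is still correct.
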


For the next theorem, we give a short proof sketch of how the formulas are obtained.

\begin{theorem} \label{thm:5cuttri} [Cut is triangle]

$    N_{10}  =  \sum_{t = \langle i,j,k\rangle \ \textrm{triangle}} [(t(i,j)-1)(d(k)-1)] - 4\cdot K_4(G)$
 
$    N_{16}  =  \sum_{t = \langle i,j,k\rangle \ \textrm{triangle}} (t(i,j)-1)(t(i,k)-1)$
 
$    N_{20}  =  \sum_{t \ \textrm{triangle}} {K_4(t) \choose 2} - 4\cdot K_4(G) $
\end{theorem}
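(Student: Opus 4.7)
The plan is to apply \Lem{cut} to each of $N_{10}, N_{16}, N_{20}$ with a carefully chosen triangle of the pattern as the cut set $C$. For each pattern I would (i) identify the $C$-fragments; (ii) express each $\deg_F(\sigma)$ in terms of quantities already computed during the 4-vertex phase -- namely $d(k)$, $t(i,j)$, and $K_4(t)$, all available from \Thm{4vertex-tri} and \Thm{4clique}; (iii) enumerate the graphs $H' \in \shrink_C(H)$ and the multiplicities $\numshrink_C(H,H')$; and (iv) pass from the labeled count $\match(H)$ to the unlabeled count $N_i$ by dividing by $|\aut(H)|$. Once the formulas are in hand, each inner sum can be accumulated in a single pass over the triangles of $G$.

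For $N_{10}$ I would cut on the triangle $\{i,j,k\}$ whose edge $(i,j)$ is common to the two triangles of the diamond portion of the pattern and whose third vertex $k$ carries the pendant. The two $C$-fragments are a diamond (a fourth vertex adjacent to both $i$ and $j$, contributing $t(i,j)-1$ after excluding the cut triangle itself) and a tailed triangle (a neighbor of $k$, contributing a linear function of $d(k)$). The only nontrivial shrinkage merges the diamond's fourth vertex with the pendant neighbor and produces a $K_4$; once automorphism factors are folded in, this yields the $-4\cdot K_4(G)$ correction. For $N_{16}$ the cut is the triangle $\{i,j,k\}$ whose two edges $(i,j)$ and $(i,k)$ through the pivot $i$ each support an additional pattern triangle; both fragments are then diamonds sharing the cut triangle, so the leading term is the clean product $(t(i,j)-1)(t(i,k)-1)$. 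No nontrivial shrinkage occurs because the two extending vertices attach to distinct edges of the cut and are forced to be distinct vertices, which is why the formula carries no correction. For $N_{20}$ I would cut on the common triangle of the two $K_4$'s; both fragments are $K_4$'s, so the leading term is $\binom{K_4(t)}{2}$ per triangle. Shrinkage arises when the two added vertices happen to be adjacent in $G$, forcing the induced 5-vertex subgraph to be $K_5$ rather than $K_5$ minus an edge; the subtractive term $-4\cdot K_4(G)$ absorbs this by double-counting through the 4-cliques contained in each 5-clique.

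The main obstacle is the combinatorial bookkeeping of automorphism factors and shrinkage multiplicities. \Lem{cut} is stated in terms of the labeled $\match(H)$, whereas the theorem is stated in terms of the unlabeled count $N_i$, so every constant in the final formula -- in particular the $-4$'s and the \emph{absence} of a correction for $N_{16}$ -- must be verified by computing $|\aut(H)|$, $|\aut(H')|$, and $\numshrink_C(H,H')$ for each $H'$ that arises. The most delicate case is $N_{20}$: one must confirm by direct enumeration that each $K_5$ subgraph of $G$ contributes to $\sum_t \binom{K_4(t)}{2}$ in exactly the right proportion to be cancelled by $4\cdot K_4(G)$ after normalization. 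Once these constants are pinned down, each of the three formulas follows from one direct invocation of \Lem{cut}.
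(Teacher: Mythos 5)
You have the right skeleton: apply \Lem{cut} with a triangle as the cut set $C$, which is exactly the route the paper takes (the paper simply names the cut triangle by the vertex labels of \Fig{5vertex} and leaves the rest implicit), and your identification of the cuts and the resulting fragments is essentially correct. But the shrinkage bookkeeping---the part you yourself flag as the "main obstacle"---is where the argument breaks, and in two of the three cases the error is a misreading of what a $C$-shrinkage in the sense of \Def{shrinkage} actually is, not just a dropped constant.

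For $N_{16}$ you claim there is no nontrivial shrinkage "because the two extending vertices attach to distinct edges of the cut and are forced to be distinct vertices." That inference is false: the first fragment's extending vertex need only be a common neighbor of $i,j$ and the second's a common neighbor of $i,k$, and a single vertex $w$ of $G$ adjacent to all of $i,j,k$ satisfies both constraints at once. When the two images coincide, the configuration collapses to a $K_4$, exactly the same shrinkage you (correctly) identify for $N_{10}$, so your structural justification for the absence of a correction does not stand. For $N_{20}$ you describe shrinkage as "when the two added vertices happen to be adjacent in $G$, forcing the induced 5-vertex subgraph to be $K_5$ rather than $K_5$ minus an edge." But in \Lem{cut} a shrinkage arises only when the sets $V(\pi_i)\setminus V(C)$ \emph{intersect}, i.e.\ when the added vertices literally coincide, collapsing to a $4$-vertex graph (again $K_4$). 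Two distinct but adjacent added vertices still yield a perfectly valid non-induced copy of pattern $20$---the extra edge is irrelevant to a non-induced count---and contribute no correction. The genuine $w_1=w_2$ coincidence is precisely what converts the labeled product $K_4(t)^2$ into $K_4(t)\bigl(K_4(t)-1\bigr)=2\binom{K_4(t)}{2}$, so it is absorbed inside the binomial rather than producing a separate $K_4(G)$ term; the mechanism you offer to explain the $-4\cdot K_4(G)$ does not work. Finally, you should pin down the tailed-triangle degree in the $N_{10}$ fragment rather than leave it as an unspecified "linear function of $d(k)$": both endpoints of the chord edge must be excluded as images of the pendant, giving $d(k)-2$, and these off-by-one constants are exactly what the paper's one-line proof sketch sweeps under the rug.
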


\begin{proof} Refer to \Fig{5vertex} for labels. We will
apply \Lem{cut}, where $C$ will be a triangle.
For pattern 10, we use vertices $\{1, 2, 4\}$ as $C$; for pattern 16, the cut is $\{2, 3, 4\}$;
for pattern 20, the cut is $\{3, 4, 5\}$. The formulas can be derived using \Lem{cut}.
\end{proof}

\begin{theorem} \label{thm:5cutwed} [Cut is pair or wedge] Define $\cc(i,j)$ to be the number
of diamonds involving $i$ and $h$ where $i$ and $j$ are not connected to the chord
(in \Fig{5vertex}, $i$ maps 1 and $j$ maps to $4$). Let $CC(i,j,k)$ be the number
of diamonds where $i$ maps to 1, $j$ maps to 2 and $k$ maps to 4.

$    N_{13}  =  \sum_{i \prec j} {W(i,j) \choose 3}$
 
$    N_{17}  =  \sum_{i \prec j} (W(i,j)-2)\cc(j,i)$
 
$    N_{18}  =  \sum_{i,j,k} {\cc(i,j,k) \choose 2} $
\end{theorem}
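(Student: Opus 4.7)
For each of the three formulas, the plan is to apply \Lem{cut} with a carefully chosen cut set $C$---pairs for $N_{13}$ and $N_{17}$, and a wedge for $N_{18}$---then identify the fragment degrees and the resulting shrinkages.

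For $N_{13}$, take $C=\{1,5\}$, the two degree-three endpoints of the $\theta$-graph. Then $H|_C$ is the edgeless pair and the three $C$-fragments are identical wedges $\{1,v,5\}$. For each unordered pair $\{i,j\}$ with $i\prec j$, each fragment is completed by an independent choice of wedge-middle in $W(i,j)$; such a middle is automatically distinct from $i$ and $j$, so the only constraint from \Lem{cut} is that the three chosen middles be distinct. The product of fragment degrees thus collapses to $\binom{W(i,j)}{3}$, and no nontrivial shrinkage contributes.

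For $N_{17}$, take $C=\{3,4\}$, the two non-chord vertices of the embedded diamond. The $C$-fragments are the diamond $F_1=\{1,2,3,4\}$ (with $\{3,4\}$ as non-chord) and the wedge $F_2=\{3,4,5\}$, giving $\deg_{F_1}(\sigma)=\cc(i,j)$ and $\deg_{F_2}(\sigma)=W(i,j)$. The only $C$-shrinkage arises when the fifth vertex coincides with one of the two chord vertices of the chosen diamond; both of these chord vertices are themselves wedge-middles between $i$ and $j$, so there are exactly two shrinkages per diamond, each collapsing into the diamond itself. Subtracting this correction amounts to replacing $W(i,j)$ by $W(i,j)-2$, yielding the stated formula.

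For $N_{18}$, take the wedge $C=\{1,2,5\}$ with $1$ as middle; removing it separates $3$ from $4$ in the $4$-wheel. The two $C$-fragments $F_1=\{1,2,3,5\}$ and $F_2=\{1,2,4,5\}$ are interchangeable diamonds, and completing $\sigma$ to either fragment amounts to adjoining a common neighbor of $\sigma(1),\sigma(2),\sigma(5)$, whose count is $CC(\sigma)$. Hence $\deg_{F_1}(\sigma)\deg_{F_2}(\sigma)=CC(\sigma)^2$, the only $C$-shrinkage (both completions coinciding) collapses into a single diamond and contributes $CC(\sigma)$, and the symmetry swapping $F_1$ with $F_2$ supplies the factor $2$ that converts the corrected product into $\binom{CC(\sigma)}{2}$ per labeled triple $(i,j,k)$. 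The main obstacle in all three cases is this shrinkage-and-automorphism bookkeeping: one must verify via \Def{shrinkage} that the only possible fragment-completion overlap is the one described, that $\numshrink_C(H,H')$ matches the combinatorial factor being subtracted, and that the automorphism count of each pattern agrees with the $i\prec j$ (or labeled-triple) index range so that no spurious multiplicative constant remains.
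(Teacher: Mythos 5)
Your approach is the same as the paper's: cut $\{1,5\}$ for the $\theta$-graph, cut $\{3,4\}$ for pattern~17 (the two non-chord vertices of the embedded diamond), and the ``diagonal'' wedge $\{2,1,5\}$ for the wheel, with the two diamond fragments and their shrinkage into a single diamond. The paper gives only a terse sketch (it even calls $N_{13}$ ``straightforward''), and your elaboration of $N_{17}$ and $N_{18}$ is faithful to what \Lem{cut} actually yields.

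One imprecision worth flagging: for $N_{13}$ you write that ``no nontrivial shrinkage contributes'' and that the product of fragment degrees ``collapses'' to $\binom{W(i,j)}{3}$. Applied literally, \Lem{cut} gives the first term $W(i,j)^3$, and coinciding middles do produce shrinkages (into a $C_4$ or a wedge) that must be subtracted; the binomial coefficient is what survives \emph{after} this subtraction and division by $|\aut(K_{2,3})|$, not the raw fragment-product. The cleaner route for $N_{13}$ is the direct count (choose $3$ distinct common neighbours of a fixed pair), bypassing \Lem{cut} entirely — which is likely what the paper means by ``straightforward.'' Similarly, for $N_{18}$ you acknowledge but do not actually carry out the automorphism bookkeeping that would reconcile the factor of $2$ (swap $F_1,F_2$) with $|\aut(W_4)|=8$ and the two possible choices of diagonal per wheel; the paper is silent on this as well, so your argument is at the same level of rigor as the source, but a fully careful reader would want the normalization made explicit.
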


\begin{proof} The formula for $N_{13}$ is straightforward. For pattern 17, we choose vertices 3 and 4
as the cut. Observe that vertices 1, 2, 3, and 4 form a diamond. For the wheel (pattern 18), we
use the ``diagonal" 2, 1, 5 as the cut. The fragments are both diamonds sharing those vertices.
\qed
\end{proof}

Finally, we put everything together. The following theorem (and proof) show an algorithm
that uses the formulas given above.

\begin{theorem} \label{thm:5formula} Assume we have all the information from \Thm{4vertex-tri}. 
All counts in \Thm{5cutvertex}, \Thm{5cutedge}, and \Thm{5cuttri} can be computed in time 
$O(W(G) + \cc(G) + n + m)$ and $O(n+m+T(G)$ storage.
\end{theorem}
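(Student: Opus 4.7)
The plan is to verify, formula by formula, that each right-hand side in \Thm{5cutvertex}, \Thm{5cutedge}, and \Thm{5cuttri} can be evaluated from quantities supplied by \Thm{4vertex-tri} with only constant work per vertex, per edge, or per triangle. First I would precompute all global scalars that appear in the formulas ($T(G), C_4(G), K_4(G), \cc(G), \tailedtri(G)$) by aggregating the per-vertex or per-edge data already in hand; for example $\tailedtri(G)=\sum_i T(i)(d(i)-2)$ and $\cc(G)=\sum_e {T(e) \choose 2}$ from \Thm{4vertex-easy}. Each of these aggregations runs in $O(n+m)$ time.

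Next I would handle the vertex-cut and edge-cut formulas. Every formula in \Thm{5cutvertex} is a sum over vertices of a constant number of products in $d(i), T(i), C_4(i), K_4(i)$, plus a correction by a precomputed global scalar; the only one with a nested sum is $N_3$, whose inner sum $\sum_{(i,j)\in E}(d(j)-1)$ is handled by a single pass over all directed edges. Similarly, every formula in \Thm{5cutedge} is a sum over edges of $O(1)$ work in $d(i), d(j), T(e), C_4(e), K_4(e)$ plus a scalar correction. Together these two blocks contribute $O(n+m)$ time and only $O(1)$ extra storage per accumulator.

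The crucial step is the triangle-cut block (\Thm{5cuttri}). I would enumerate each triangle of $G$ exactly once by walking the per-edge triangle lists built in \Thm{4vertex-tri}, using a canonical rule (say, the degree-smallest edge ``owns'' its triangles) to avoid triple counting. For a triangle $\{a,b,c\}$, each of $T(a,b), T(a,c), T(b,c)$ and $d(a), d(b), d(c)$ is accessible in $O(1)$, and $K_4(t)$ is delivered directly by \Thm{4vertex-tri}. Hence $N_{10}, N_{16}, N_{20}$ can each be incremented in constant time per triangle, summing over the at-most-three role assignments demanded by each formula. This phase costs $O(T(G))$ time; since every triangle contains three wedges, $T(G) = O(W(G))$, and so the overall runtime lies in $O(W(G)+n+m) \subseteq O(W(G)+\cc(G)+n+m)$. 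Storage is dominated by the $O(T(G))$ triangle lists inherited from \Thm{4vertex-tri}, plus $O(n+m)$ for the graph itself.

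The main obstacle is the bookkeeping in the triangle-cut step: we must traverse each triangle exactly once for every role assignment dictated by the cutting lemma, and simultaneously retrieve the three per-edge triangle counts (and, for $N_{20}$, the per-triangle $K_4$ value) in constant time. This is precisely what the per-edge triangle lists and per-triangle $K_4$ guarantees of \Thm{4vertex-tri} buy us, so the remaining work is choosing a canonical iteration order and correctly accounting for pattern automorphisms; no further algorithmic ideas are required.
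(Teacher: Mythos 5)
Your proposal correctly handles the vertex-cut, edge-cut, and triangle-cut blocks, and the bookkeeping you describe (aggregating global scalars, iterating once per vertex/edge/triangle with the per-vertex, per-edge, and per-triangle data from \Thm{4vertex-tri}) is exactly what those three blocks need. But you have missed the part of the argument that actually justifies the $W(G)$ and $\cc(G)$ terms in the stated runtime.

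Although the theorem statement names only \Thm{5cutvertex}, \Thm{5cutedge}, and \Thm{5cuttri}, the paper's own proof of this theorem (and its subsequent use in the proof of \Thm{fivevertex}) clearly also covers the wedge/pair-cut formulas of \Thm{5cutwed}, namely $N_{13}$, $N_{17}$, and $N_{18}$. The theorem statement appears to have dropped \Thm{5cutwed} inadvertently: observe that $T(G) \le W(G)$ and, by convexity, $T(G)\le \cc(G)+T(G)$, so your blocks alone would give the tighter bound $O(T(G)+n+m)$; the $W(G)$ and $\cc(G)$ terms in the claimed bound only become necessary when one handles $N_{13}$ (which requires enumerating all wedges to accumulate $\binom{W(i,j)}{3}$) and $N_{17}, N_{18}$ (which require the per-pair and per-triple diamond counts $\cc(i,j)$ and $\cc(i,j,k)$). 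The paper's proof devotes essentially all its effort to this: for each vertex $i$ it does a two-step BFS to build $W(i,\cdot)$, and it walks the per-edge triangle lists twice to enumerate diamonds with $i$ and $j$ as opposite endpoints, yielding the $\cc(\cdot,\cdot)$ arrays; the cost is precisely $O(W(G)+\cc(G))$. Your proof is therefore correct as far as it goes, but it proves a weaker (and, read literally, easier) statement than what the paper intends and needs downstream, and leaves the nontrivial diamond-enumeration step unaddressed.
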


\begin{proof}  The counts of \Thm{5cutvertex}, \Thm{5cutedge}, and \Thm{5cuttri}
can be computed in $O(n)$, $O(m)$, and $O(T)$ time respectively. We can obviously
count $N_{13}$ in $O(W)$ time, by enumerating all wedges. For the remaining, we need
to generate $\cc(i,j)$ and $\cc(i,j,k)$ counts.

Let us describe the algorithm for $N_{17}$. Fix a vertex $i$. For every edge $(i,k)$, we have the list
of triangles incident to $(i,k)$ (from \Thm{4vertex-tri}). For each such triangle $(i,k,\ell)$,
we can get the list of triangles incident to $(k,\ell)$. For each such triangle $(k,\ell,j)$,
we have generated a diamond with $i$ and $j$ at opposite ends. By performing this enumeration
over all $(i,k)$, and all $(i,k,\ell)$, we can generate $CC(i,j)$ for all $j$.
By doing a 2-step BFS from $i$, we can also generate all $W(i,j)$ counts. Thus, we compute
the summand, and looping over all $i$, we compute $N_{17}$. The total running time
is the number of diamonds plus wedges. An identical argument holds for $N_{18}$ and is omitted.
\end{proof}

\subsection{The 5-cycle and 5-clique} \label{sec:5cycle}

The final challenge is to count the 5-cycle and the 5-clique. The main tool
is to use the DAG $G^\rightarrow$, analogous to 4-cycles and 4-cliques.

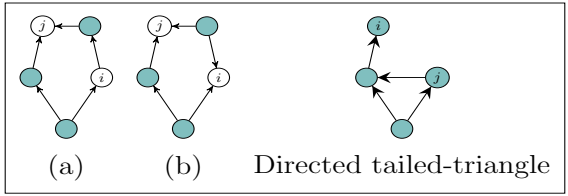
\begin{figure}[t]
\centering
\resizebox{3in}{1.0in}{
  \begin{tikzpicture}[nd/.style={scale=0.8,circle,draw,fill=teal!50,inner sep=2pt},minimum width = 13pt, framed]    
    \matrix[column sep=0.4cm, row sep=0.2cm,ampersand replacement=\&]
    {
     \FiveCycleOne \& 
     \FiveCycleTwo \&
     \DirTailedTri \\
 };
  \end{tikzpicture}
  }  
\caption{Directed patterns for 5-cycle and 5-clique counting}
\label{fig:5cycle}
\end{figure}

\begin{theorem} \label{thm:5cycle} Consider the 3-path in \Fig{5cycle}, and let $P(i,j)$ be the
number of \emph{directed 3-paths} between $i$ and $j$, as oriented in the figure. 
Let $Z$ be the number of directed tailed-triangles, as shown in \Fig{5cycle}.
The number of 5-cycles
is $\sum_{i\prec j} P(i,j)\cdot (W_{++}(i,j) + W_{+-}(i,j)) - Z$.
\end{theorem}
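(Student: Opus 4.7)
The plan is to set up a cut-based decomposition of each 5-cycle in $G^\rightarrow$ as a directed 3-path plus a wedge, both sharing two endpoints $i, j$, and then to correct for the degenerate overlaps by inclusion-exclusion. For each pair $(i,j)$ with $i \prec j$, the product $P(i,j)\cdot(W_{++}(i,j)+W_{+-}(i,j))$ enumerates ordered tuples consisting of a DirThreePath-shaped directed 3-path with endpoints $\{i,j\}$ together with a wedge middle $v$ between $i$ and $j$ of type $W_{++}$ or $W_{+-}$. When the five vertices (the four of the 3-path plus $v$) are all distinct, the tuple assembles into a $C_5$; otherwise four vertices collapse and we land in a specific 4-vertex structure.

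To establish that each 5-cycle is counted exactly once, I would analyze the acyclic orientations of $C_5$ induced by $\prec$. Up to isomorphism there are two kinds, a one-source one-sink orientation and a two-source two-sink orientation; in each case I would identify the unique pair $(i,j)$ with $i \prec j$ for which the length-3 side of the cycle has the DirThreePath shape (with its Y-center at a cycle-local source) and the length-2 side is an out-wedge $W_{++}$ or an in-out wedge $W_{+-}$ of the right direction. This turns the sum $\sum_{i \prec j} P(i,j)\cdot(W_{++}(i,j)+W_{+-}(i,j))$ into the number of 5-cycles plus an overcount from tuples whose wedge middle coincides with an interior vertex of the 3-path. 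A short case check on the overlap position ($v$ equal to the Y-center or to the chain middle) and on the wedge type shows that each such overlap yields a directed triangle on three vertices plus a pendant tail edge, i.e., a directed tailed triangle of the shape shown in \Fig{5cycle}; conversely each such tailed triangle in $G^\rightarrow$ is produced by exactly one overlap tuple. Subtracting $Z$ therefore removes the overcount exactly and gives the claimed formula.

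The main obstacle is the two-layered case analysis. For the 5-cycle step, I need to check uniqueness of the decomposition across the two orientation types of $C_5$ and in particular exclude spurious decompositions with in-wedges (the ``$W_{--}$'' type, which is intentionally omitted from the formula) or with 3-paths whose internal directions do not match DirThreePath. For the overcount step, I need to verify that the map from (DirThreePath, $v$) overlap tuples to directed tailed triangles is a bijection: each combination of overlap position and wedge type must either be impossible under the constraints imposed by $\prec$ or correspond to exactly the one directed tailed-triangle shape counted by $Z$, with no double counting across the three possible roles of the triangle vertex that carries the tail.
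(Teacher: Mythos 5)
Your proposal takes essentially the same route as the paper: decompose each $C_5$ (via the cut $\{i,j\}$) into a DirThreePath-shaped $3$-path and an out/inout wedge, argue by analyzing the two acyclic orientations of $C_5$ that the product $\sum_{i\prec j} P(i,j)\bigl(W_{++}(i,j)+W_{+-}(i,j)\bigr)$ counts each $5$-cycle exactly once, and treat the overlap tuples (the shrinkages in the sense of \Lem{cut}) as the directed tailed-triangles counted by $Z$. The paper's own proof is terser---it simply asserts that the shrinkage of either directed $5$-cycle yields the tailed triangle of \Fig{5cycle} and that this is counted once---whereas you explicitly flag the two-layered case analysis (wedge middle coinciding with the Y-center or the chain middle, together with wedge type) that must be carried out; that is exactly the load-bearing step the paper leaves to the reader, and you should make sure, when you do the case check, that you enumerate all orientations of the resulting $4$-vertex $4$-edge configuration (tail incident to the triangle's source, middle, or sink), since the $\prec$-constraints do not by themselves rule out more than one of them, and the bookkeeping of which overlaps survive and how they are accounted for in $Z$ is where all the care is needed.
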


\begin{proof} \Fig{5cycle} shows the different possible 5-cycle DAGs. There are only two (up to isomorphism).
In both cases, we choose $i$ and $j$ (as shown) to be the cut. 
(Wlog, we assume that $i \prec j$.)
The vertices have the same directed three-path between
them. They also have either an outwedge or inout-wedge connecting them.
Thus, the product $\sum_{i \prec j} P(i,j)\cdot (W_{++}(i,j) + W_{+-}(i,j))$ counts each 5-cycle exactly once.
The shrinkage of either directed 5-cycle yields the directed tailed-triangle of \Fig{5cycle}(d). 
This pattern is also counted exactly once in the product above. 
(One can formally derive this relation using \Lem{cut}.) Thus, we substract $Z$ out to get the number
of 5-cycles.
\end{proof}

\begin{theorem} \label{thm:5clique} (We remind the reader that $\dbp$ is the count of the
directed bipyramid in \Fig{5vertex-basis}.) The number of 5-cliques can be counted 
in time $O(\dbp(G) + \cc(G) + T(G) + n + m)$.
\end{theorem}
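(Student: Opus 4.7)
The strategy mirrors \Thm{4clique}: treat the directed bipyramid of \Fig{5vertex-basis} as a canonical witness, enumerate all copies of it in $G^\rightarrow$, and check whether the single missing apex-to-apex edge is also in $E$, which completes the bipyramid into a 5-clique. I would first verify (by case analysis of the acyclic orientations of $K_5$) that any oriented 5-clique contains a fixed number $\kappa$ of subgraphs isomorphic to the directed bipyramid template, so that the 5-clique count equals $\kappa^{-1}$ times the number of witness bipyramids whose missing edge lies in $E$.

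The bulk of the argument is an efficient enumeration of directed bipyramids. My plan is to view a bipyramid as a central triangle $t = \{x,y,z\}$ together with an unordered pair of non-adjacent apex vertices, both lying in $N(x) \cap N(y) \cap N(z)$. Reusing the preprocessing of \Thm{4vertex-tri} --- the per-edge triangle lists, per-triangle 4-clique counts $K_4(t)$, and the diamond/$K_4$ enumeration machinery --- I would iterate, for each triangle $t$, over every pair of vertices in its 4-clique extension set, classifying each pair as a bipyramid (if non-adjacent) or as a contribution to a 5-clique (if adjacent). The total number of (triangle, extension-pair) iterations is $\sum_t \binom{K_4(t)}{2} = \dbp(G) + 10\, K_5(G)$, since each bipyramid has a unique central triangle while each 5-clique is counted ten times (one per triangle it contains); combined with $K_5(G) \le \dbp(G)/10$, this bounds the main loop by $O(\dbp(G))$. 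The $\cc(G)$ and $T(G)$ terms in the running time come from the preprocessing needed to generate these extension pairs, driven by the diamond and triangle enumerations already performed in \Thm{4vertex-tri}.

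The main obstacle is producing the per-triangle extension pairs within the $O(n + m + T(G))$ storage budget, since naively storing the extension lists would incur total size $\Theta(K_4(G))$, which may exceed $T(G)$, and a naive construction via nested traversal of triangle lists could cost $\Omega(\sum_e T(e)^2)$. A viable plan is to emit $(t, v)$ extension pairs on-the-fly during the $K_4$ enumeration of \Thm{4vertex-tri}: each time a 4-clique $\{a,b,c,d\}$ is discovered, it produces the four pairs $(\{a,b,c\}, d)$, $(\{a,b,d\}, c)$, $(\{a,c,d\}, b)$, $(\{b,c,d\}, a)$, which can then be grouped by triangle via an auxiliary indexing pass. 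Each resulting extension pair is tested for apex-to-apex adjacency with a single $O(1)$ hash lookup; those that pass contribute to 5-cliques, and we divide by $\kappa$ at the end to obtain the final count.
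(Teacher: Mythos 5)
Your high-level plan matches the paper's: treat a directed bipyramid as a witness, view it as a (central triangle, pair of apex extensions), and detect 5-cliques by checking the one possibly-missing apex--apex edge. However, the cost accounting in your main loop has a genuine gap.

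You drop the \emph{directed} filter on the extension vertex. The paper, for each triangle $(i,j,k)$, only keeps those extensions $\ell$ for which $i,j,k,\ell$ form the specific directed diamond in $G^\rightarrow$; the cost of building these filtered lists is charged to $O(\cc(G)+T(G))$ via the per-edge triangle lists, and the number of resulting \emph{pairs} is, by definition, exactly $\dbp(G^\rightarrow)$. You instead iterate over every pair in the full 4-clique extension set, which costs $\sum_t \binom{K_4(t)}{2}$. This quantity equals $(\text{induced undirected bipyramid count}) + 10\,K_5(G)$ --- not $\dbp(G^\rightarrow)+10\,K_5(G)$. The two are not interchangeable: $\dbp$ counts occurrences of \emph{one particular acyclic orientation} of the bipyramid, so a bipyramid whose apices do not have the right ranks in the degree ordering contributes to the undirected count but not to $\dbp$. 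Consequently, $\sum_t \binom{K_4(t)}{2}$ can exceed $\dbp + \cc + T$ by an unbounded factor. For instance, take a large independent set $A$ ($|A|=n$, low degree) fully joined to a clique $B$ ($|B|=k$, high degree): every pair from $A$ plus a triangle from $B$ is an induced bipyramid with the apices at the two \emph{smallest} ranks, which mismatches the template, giving $BP = \Theta(n^2k^3)$ while $\cc = \Theta(n^2k^2)$, $T = \Theta(nk^2)$ and $\dbp = O(nk^4)$; your loop is a factor $\Theta(k)$ over budget. The auxiliary inequality $K_5(G)\le \dbp(G)/10$ is also unjustified --- an oriented 5-clique contains exactly one subgraph matching the fixed directed-bipyramid template, not ten, so the correct inequality is $K_5 \le \dbp$, and even that does not rescue the bound because of the $BP$ term.

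A secondary issue is the storage plan. Emitting $(t,v)$ pairs during 4-clique enumeration and then ``grouping by triangle via an auxiliary indexing pass'' implicitly requires materializing or sorting $\Theta(K_4(G))$ items, which can exceed the $O(T(G))$ storage budget stated in \Thm{fivevertex}. The paper sidesteps this entirely: for each triangle it scans the triangle list of a single edge $(j,k)$ to build the filtered $\ell$-list locally, needing only $O(\max_e T(e))$ scratch space at a time. To repair your argument, you would need to (i) add the directed filter on $\ell$ so the pair count is genuinely $\dbp(G^\rightarrow)$, (ii) replace the 4-clique emission scheme with the per-edge triangle-list scan that yields the $O(\cc(G)+T(G))$ list-generation bound, and (iii) replace the divide-by-$\kappa$/divide-by-10 reasoning with the observation that, under the directed filter, each 5-clique is generated exactly once.
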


\begin{proof} First observe  that every 5-clique in $D$ contains one
of these directed bipyramids. Thus, it suffices to enumerate them to enumerate all 5-cliques.
The key is to enumerate this pattern with minimal overhead. From \Thm{4vertex-tri}, 
we have the list of triangles incident to every edge. For every triangle $t$, we determine
all of these patterns that contain $t$ as exactly the triangle $(i,j,k)$ in \Fig{5vertex-basis}.

Suppose triangle $t$ consists of vertices $i,j,k$.
We enumerate every other triangle incident to $j, k$ using the data structure of \Thm{4vertex-tri}.
Such a triangle has a third vertex, say $\ell$. We check if $i,j,k,\ell$ form the desired
directed configuration.
Once we generate all possible $\ell$ vertices, every pair among them gives the desired directed pattern.

The time required to generate the list of $\ell$ vertices over all triangles
is at most $\sum_{t = (i,j,k)} t(j,k) \leq \sum_{j,k} t(j,k)^2 = O(\cc(G) + T(G))$.
Once these lists are generated, the additional time is exactly $\dbp$ to generate each
directed pattern.
\end{proof}

At long last, we can prove \Thm{fivevertex}.

\begin{proof} (of \Thm{fivevertex}) We simply combine
all the relevant theorems: \Thm{4vertex-tri}, \Thm{5formula}, \Thm{5cycle},
and \Thm{5clique}. The runtime of \Thm{4vertex-tri} is $O(W_{++}(G^\rightarrow) + 
W_{+-}(G^\rightarrow) + \dcc(G^\rightarrow) + m + n)$.
The overhead of \Thm{4vertex-tri} is $O(W(G) + \cc(G) + m + n)$. Note that this dominates
the previous runtime, since it involves undirected counts. To generate $P(i,j)$ counts,
as in \Thm{5cycle}, we can easily enumerate all such three-paths from vertex $i$.
We can also generate $W(i,j)$ counts to compute the product, and the eventual sum.
Enumerating these three-paths will also find all of the directed tailed triangles of \Fig{5cycle}.
Thus, we pay an additional cost of $\dpath(G^\rightarrow)$. We add in the time of \Thm{5clique} to get
the main runtime bound of $O(W + \cc + \dpath(G^\rightarrow) + \dbp(G^\rightarrow) + m + n)$. 
The storage is dominated by \Thm{4vertex-tri}, since we explicitly store every triangle of $G$.
\end{proof}

\section{Experimental Results} 
We implemented our algorithms in {\tt C++} and ran our experiments on a
computer equipped with a 2x2.4GHz Intel Xeon processor with 6~cores
and  256KB  L2 cache (per core), 12MB L3 cache, and  64GB memory. 
We ran ESCAPE on a large 
collection of graphs from the Network Repository~\cite{network} and SNAP~\cite{Snap}.
In all cases, directionality is ignored, and duplicate edges and self loops  are omitted. 
\Tab{properties} has the properties of all these graphs.

The entire ESCAPE package is available as open source code (including
the code used in these results) at~\cite{escape}. 


%
%
\begin{table*}
\scriptsize
\begin{center}
\caption{\label{tab:properties} Properties of the graphs.} 
\begin{tabular}{l|rrr|rrr|}
\multicolumn{1}{c|}{} & \multicolumn{3}{c|}{} & \multicolumn{3}{c|}{Runtimes in seconds}\\
 & 	$|V|$	& $|E|$	&  $|T|$ & PGD & ESC-4  & ESC-5 \\\hline 
soc-brightkite & 	56.7K & 	426K	 & 494K	 & 1.20	 & 0.22	 & 6.54 \\
tech-RL-caida & 	191K & 	1.22M	 & 455K	 & 3.21	 & 0.25 & 	5.47 \\
flickr	      &         244K & 	3.64M	 & 15.9M & 	809K	 & 12.9 & 	961K \\
ia-email-EU-dir & 	265K & 	729K & 	267K & 	10.6 & 	0.18	 & 8.69 \\
ca-coauth-dblp & 	540K	 & 3.05M	 & 444M & 	585 & 	615 & 	47.4K \\
web-google-dir & 	876K	 & 8.64M	 & 13.4M & 	54.5 & 	2.94	 & 71.8 \\
tech-as-skitter & 	1,69M & 	22.2M	 & 28.8M	 & 1.90K	 & 20.3	 & 1.41K \\
web-wiki-ch-int & 	1.93M	 & 9.16M	 & 2.63M	 & 4.91K	 & 6.80 & 	798 \\
web-hudong      & 	1.98M	 & 14.6M & 	5.07M	 & 9.40K & 	13.6 & 	534 \\
wiki-user-edits & 	2.09M	 & 11.1M	 & 6.68M & 	439K	 & 2.92	& 9.15K \\
web-baidu-baike & 	2.14M	 & 17.4M & 	3.57M	 & 22.9K & 	16.2 & 	 9.46K \\
tech-ip     &   	2.25M	 & 21.6M & 	298K & 	613K	 & 25.7	 & 295 \\
orkut	      &        3.07M & 	234M & 	628M & 	598K	 & 1.19K & 	217K \\
LiveJournal & 	        4.84M	 & 85.7M	 & 286M	 & 25.9K	 & 538	 & 37.1K \\
\hline
\end{tabular}
\end{center}
\end{table*}


\medskip 

\textbf{4-vertex pattern counting:}  We compare ESCAPE with the
Parallel Parameterized Graphlet Decomposition (PGD) Library~\cite{AhNe+15}. 
PGD can exploit parallelism using multiple threads. But our focus
is on the basic algorithms, so we ran ESCAPE and PGD on a single thread.
%
The runtimes of PGD and ESCAPE are given in \Tab{properties}, 
 and the speedups, computed as  ratio of  PGD runtime to  ESCAPE runtime,  are presented in \Fig{4speedup}.
PGD has not completed after over 170 and 121 hours for tech-ip and ia-wiki-user-edits graphs, respectively and thus we use these times  as lower bounds for runimtes of PGD for these graphs.  
The only instance where PGD was faster is {\tt ca-coauthors-dblp}, where
the runtimes were comparable.
In almost all medium sized instances (< 10M edges), 
we observe a one order of magnitude of speedup on medium sized instances.
For large instances (100M edges), ESCAPE gives two orders of magnitude speedup 
over PGD. For instance on the orkut graph with 234M edges, ESCAPE runs more than 500 times faster than PGD.  
We  should also note that  PGD is already a well-designed code  based on strong algorithms.  Most notably, overall runtimes are in the order of seconds for these very large graphs, as displayed on the right most column in \Tab{properties}.  For instance,  computing exact counts on the {\tt as-skitter} graph with 1.7M vertices and 11.1M edges took only 21.79 seconds. 
We assert that exact 4-vertex pattern counting is quite feasible,
with reasonable runtimes, for even massive graphs. We present  counts of all 5-patterns in \Sec{fullcounts} of the Appendix.

%

\medskip 

\textbf{5-vertex Pattern Counting:}  ESCAPE runtimes for counting  5-patterns are also presented in \Tab{properties}.
We note that 5-vertex pattern counting can be done in minutes for graphs with less
than 10M edges. For instance, ESCAPE computes all 5-patterns for  tech-ip  with 2.25M nodes and 21.6M edges in less than 5  minutes. Thus, randomization is quite unnecessary for graphs of such size.
No other method we know of can handle even such medium size graphs for this problem.
It is well-documented (refer to~\cite{JhSePi15} for an analysis
of 4-cliques, and to~\cite{AhNe+15} for comparisons to PGD) 
that existing methods cannot scale for 10M edge graphs:
FANMOD~\cite{WernickeRasche06}, edge sampling methods~\cite{TsKaMiFa09,RaBhHa14}, 
ORCA~\cite{orca13}. 

\begin{figure}[h]
\centering
 \includegraphics[width=0.9\columnwidth]{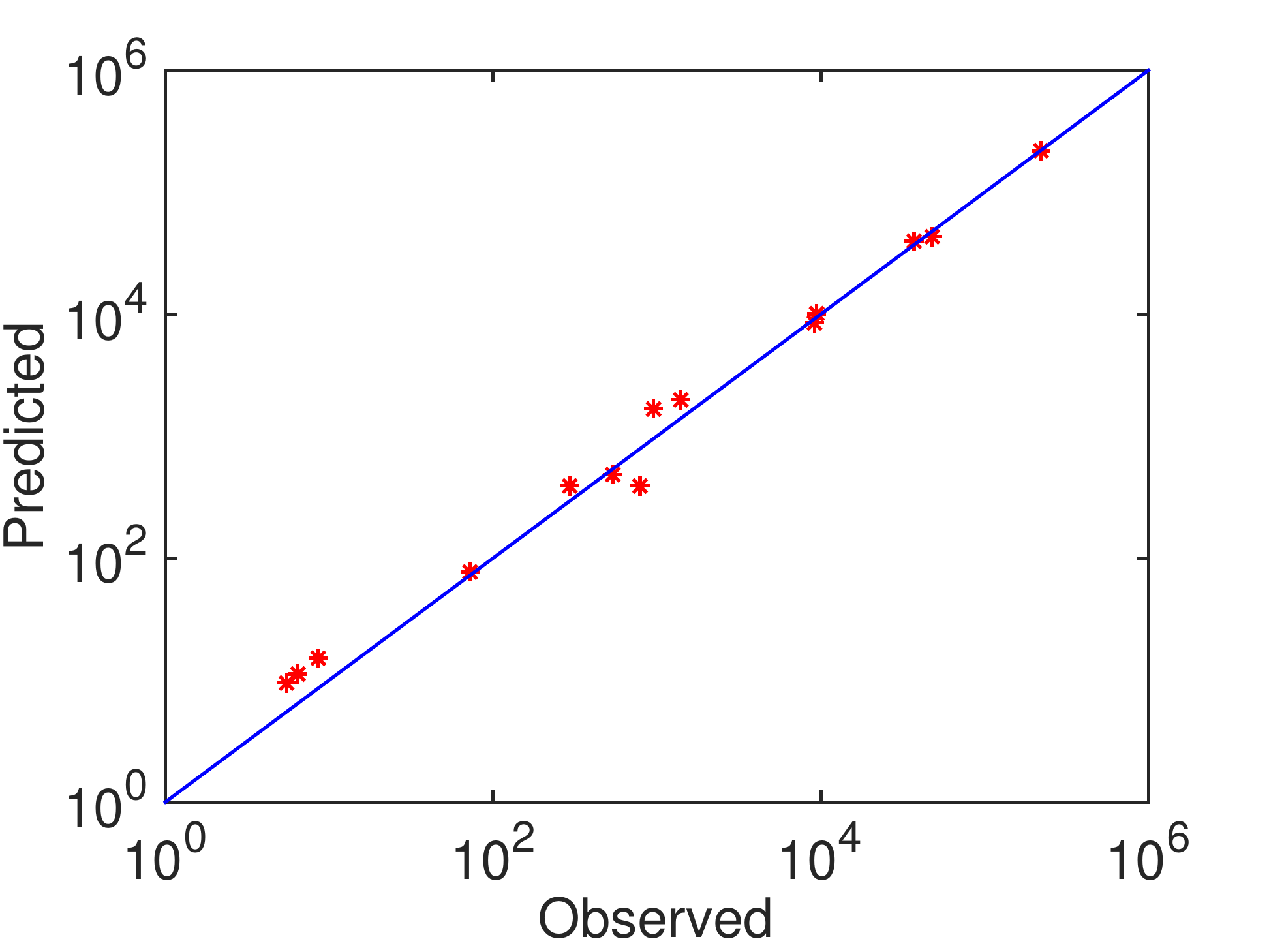}
 \caption{\label{fig:predict} \small Predicting ESCAPE runtime for 5-patterns.  The line is defined by  $1.0E-4*(1.39W(G)+1.09cc(G)+ 24.28\dpath(G^\rightarrow)+4.41\dbp(G^\rightarrow))$. } 
 \end{figure}

\textbf{Runtime Predictions}  
\Thm{fourvertex} and \Thm{fivevertex} claim that the runtime of the ESCAPE algorithm is
bounded by the counts of specific patterns (as shown in~\Fig{5vertex-basis}).
Here we present our validation for only 5-patterns due to space.  We fit a line 
using coefficients for $W(G)$, $\cc(G)$, $\dpath(G^\rightarrow)$ and $\dbp(G^\rightarrow)$. We do not use  $m$ and $n$ to limit degrees of freedom.  
And the result is presented in \Fig{predict}, which shows that the runtime can be accurately predicted as a function of  counts for base patterns as described in  \Thm{fivevertex}.

\textbf{Trends in pattern counts:} We analyze the actual counts of the various
patterns, and glean the following trends. 
\begin{asparaitem}
    \item I{\it nduced vs non-induced:} For all patterns, we look
    at the ratio $C_i/N_i$, the fraction of non-induced matches of a pattern
    that are also induced. Conversely, one can interpret $1- C_i/N_i$
    as the ``likelihood" that a copy of pattern-i contains another edge.  We present the results in \Fig{inducedratio}.
    The surprising observation that across all the graphs, certain
    patterns are extremely rarely induced. It is extremely infrequent to observe 5-patterns 16--20 as induced patterns, which can be good tool for edge prediction.  Note that  wedges/triangles are commonly used for edge prediction, but across all graphs wedge-to-triangle closure is not frequent at all. This ratio ca be high for some graphs, for an arbitrary graph wedges by themselves are not good edge predictors.  Also note that 5-pattern 1 frequently  remains as a induced pattern, even though it has 6 potential missing edges.  These results show that going beyond 3 and 4-vertex patterns  can reveal more interesting structures and provide predictive power.


\begin{figure}[th]
\centering
\includegraphics[width=0.7\columnwidth]{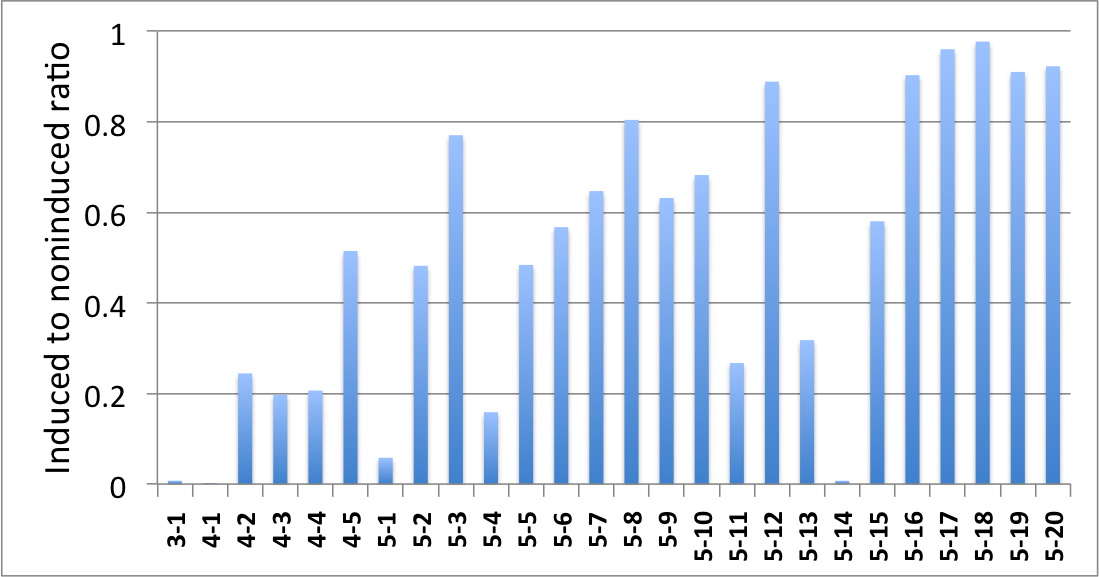}
 \caption{\label{fig:inducedratio} \small Likelihood that a copy of pattern-i contains another edge, measured as $1- C_i/N_i$ across all graphs. Patterns are labeled as i-j for j-th i-pattern. 3-1 refers to a two path.  } 
 \end{figure}

    \item {\it A measure of transitivity}: What is the likelihood that to vertices
    with two neighbors are connected by an edge? What if it was three neighbors
    instead? An alternate (not equivalent) method to measure this is the see the fraction of 
    4-cycles that form diamonds, and the fraction of (13) that form (14).
    (The latter is basically taking a pattern where two vertices have three neighbors,
    and see how often those vertices have an edge.) \Fig{transitivity} shows that  having 3 common neighbors significantly increases likelihood of an edge, especially for  social networks. 
\begin{figure}[th]
\centering
 \includegraphics[width=0.9\columnwidth]{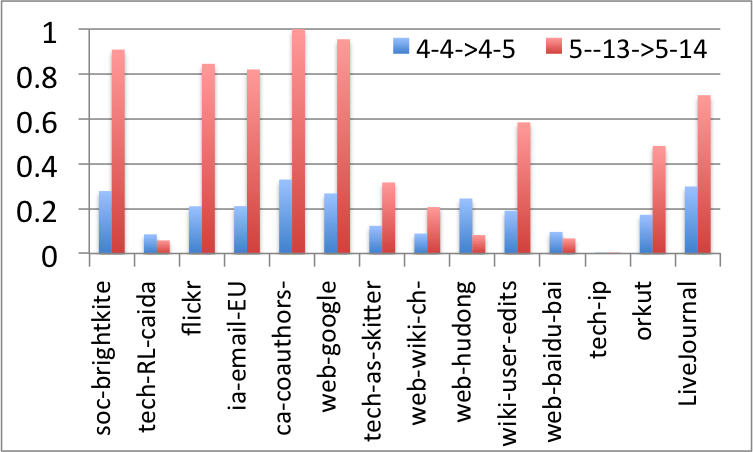}
 \caption{\label{fig:transitivity} Comparing transitivity of 3 common neighbors and  2 common neighbors } 
\end{figure}

    \item {\it The lack of wheels}: The intriguing fact is that wheels (pattern (18))
    are much rarer that one would expect. It appears to be an ``unstable" pattern. \Fig{inducedratio} already shows that they are infrequent as induced patterns. Here we will go a step further and how often pattern 18 has an additional edge to turn into pattern 20, and as a basis for comparison we will compare it with that of pattern 19, which has the same number of edges. Results are presented in \Fig{wheel}, which shows that P18 is more than twice as likely to tun into Pattern 20 compared to  pattern 19. 
 \begin{figure}[th]
\centering
\includegraphics[width=0.9\columnwidth]{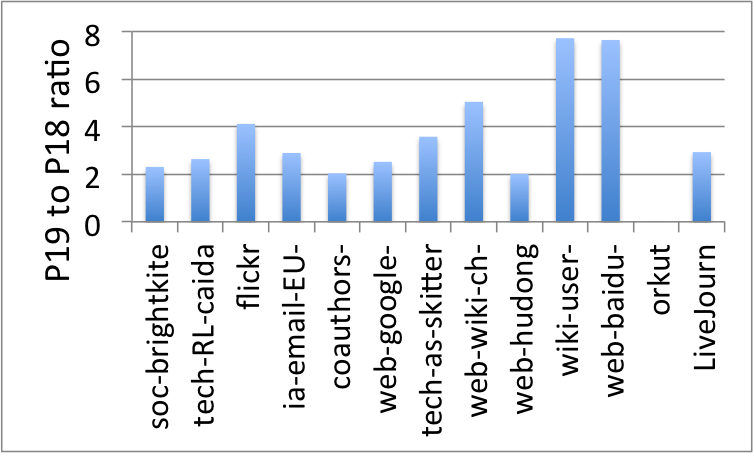}
\caption{\label{fig:wheel} Ratio of  5-pattern 19 to 5-pattern 18.  } 
\end{figure}
\end{asparaitem}

\bibliographystyle{abbrv}
\bibliography{motifcount}

\appendix
\section{Disconnected patterns} \label{sec:disc}

Disconnected patterns can be easily counted
in extra \emph{constant} time, if we have all connected patterns
up to a given size. 

\begin{theorem} Fix a graph $G$. Suppose we have
counts for all connected $r$-vertex patterns, for all $r \leq k$.
Then, the counts for all (even disconnected) $k$-vertex patterns
can be determined in constant time (only a function of $k$). 
\end{theorem}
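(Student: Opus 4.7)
The plan is to work entirely with non-induced counts, since (as noted in the transform discussion) the induced and non-induced counts on a fixed $k$-vertex pattern are related by a finite invertible linear transformation whose coefficients depend only on $k$. Hence it suffices to show that for every disconnected $k$-vertex pattern $H$, the non-induced count $N(H,G)$ is expressible as a fixed polynomial (with rational coefficients depending only on $k$ and the isomorphism type of $H$) in the non-induced counts of connected patterns on at most $k$ vertices. Those connected counts are given by hypothesis.

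First I would decompose $H$ into its connected components $H_1,\ldots,H_t$, each on strictly fewer than $k$ vertices (since $H$ is disconnected). A labeled non-induced copy of $H$ in $G$ corresponds to an ordered tuple $(\pi_1,\ldots,\pi_t)$ of non-induced embeddings $\pi_i:H_i\to G$ whose images are pairwise vertex-disjoint, quotiented by an absolute constant accounting for permutations of isomorphic components. The unrestricted product $\prod_i N(H_i,G)$ overcounts by including tuples whose images overlap on some vertices, so I would correct for this by inclusion-exclusion over the set-partitions of $V(H_1)\sqcup\cdots\sqcup V(H_t)$ that identify vertices across distinct components. Each such identification produces a new pattern $H'$ on strictly fewer than $k$ vertices, appearing with a signed coefficient that is absolute (depending only on the combinatorics of the gluing, not on $G$).

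The resulting identity expresses $N(H,G)$ as $\prod_i N(H_i,G)$ minus a finite fixed linear combination of values $N(H',G)$ for glued patterns $H'$ on $<k$ vertices. If $H'$ is connected, $N(H',G)$ is available by hypothesis; if $H'$ is disconnected, I would recurse, applying the same reduction to $H'$. Termination is immediate since the vertex count strictly drops at every recursive call. Likewise, each $N(H_i,G)$ with $H_i$ connected and $|V(H_i)|<k$ is directly given. Unrolling the recursion yields a single polynomial formula whose number of terms is bounded purely as a function of $k$, so the value is computed in constant time per disconnected pattern, and there are only finitely many such patterns for fixed $k$.

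The main bookkeeping obstacle is a clean statement of the inclusion-exclusion coefficients: one must index gluings by surjections $V(H_1)\sqcup\cdots\sqcup V(H_t)\twoheadrightarrow V(H')$, assign Möbius-style signs across the partition lattice, and correctly divide by automorphism factors coming from repeated isomorphic components. This is structurally identical to the exponential-formula relationship between connected and disconnected species, and creates no genuine mathematical difficulty beyond careful notation; once formalized, the theorem follows.
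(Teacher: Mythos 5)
Your proposal takes essentially the same route as the paper: write $\match(H)$ as $\prod_i \match(H_i)$ over the connected components $H_i$, minus corrections indexed by gluings (partitions of $V(H)$ that identify vertices across distinct components), and recurse on the strictly smaller glued patterns. The one refinement worth noting is that the paper avoids the M\"obius-sign and repeated-component bookkeeping you flag as the main obstacle: it works with labeled matches throughout and indexes each correction by the \emph{exact} fiber partition of the offending map, which partitions the overcounted set outright, so every glued pattern $H_\cS$ enters with a plain coefficient of $-1$ (no alternating signs, no automorphism divisions), and the induction runs on the strictly decreasing number of connected components rather than on vertex count.
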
 

\begin{proof} Basically, we show that the counts of disconnected
patterns is just a polynomial function of counts of
connected patterns with fewer vertices. Formally, we prove
by induction on $c$, the number of connected components in the pattern.
The base case, $c=1$, is trivially true by assumption. For induction, suppose
we have counts for all patterns (of at most $k$ vertices) 
with at most $c$ connected components.
Consider pattern $H$ with connected components $H_1, H_2, \ldots, H_{c+1}$.
The count is the number of 1-1 matches $f: V(H) \to V(G)$ mapping
edges of $H$ to edges of $G$. Observe that the function ``partitions"
into 1-1 functions $f_i: V(H_i) \to V(G)$ ($\forall i \leq c+1$)
such that all images are disjoint. But the number of $f_i$s is
exact the number of matches of $H_i$. We now use a one-step 
inclusion-exclusion to determine $\match(H)$.
$$\match(H) = \prod_{i=1}^{c+1} \match(H_i) - |\cF|$$
where $\cF$ is the set of maps $f:V(H) \to V(G)$ such that
each $f|_{V(H_i)}$ is 1-1, but $f$ is not 1-1. This
can only happen if $f$ maps to a single vertex some set $S \subseteq V(H)$
that is \emph{not} contained in a single $V(H_i)$.

We now partition $\cF$ according to the following scheme.
Consider a non-trivial partition $\cS$ of $V(H)$ with the following
condition: a non-singleton in $\cS$ cannot be contained in some
$V(H_i)$. Define $\cF_\cS$ to be the set of maps
where each $S \in \cS$ is mapped to a single vertex, but these
vertices are different for distinct $\cS$. Observe that for any
$f \in \cF_\cS$, $f|_{V(H_i)}$ is 1-1 for all $i \leq c+1$.
Thus, $\cF$ is partitioned into the $\cF_\cS$s, and $|\cF| = \sum_{\cS} |\cF_\cS|$.

Crucially, we note that $\cF_\cS$ is exactly the set of matches
for the pattern $H_\cS$ created by merging $V(H)$
as follows: for each $S \in \cS$, merge $S$ into a single vertex.
Since $\cS$ is non-trivial, it contains some non-singleton $S$.
This non-singleton is not contained in any $V(H_i)$; thus,
merging $S$ reduces the number of connected components in $H$.
Thus, $H_{\cS}$ has strictly less than $c+1$ connected components,
and by induction, we already know its count.

Thus, we write our main equation as follows, and observe
that the right hand size involves counts that are already know
(by induction).
$$\match(H) = \prod_{i=1}^{c+1} \match(H_i) - \sum_{\cS} \match(H_\cS)$$
\end{proof}

\begin{figure}[t]
\centering
\resizebox{3in}{1.4in}{
  \begin{tikzpicture}[nd/.style={circle,draw,fill=teal!50,inner sep=2pt},framed]    
    \matrix[column sep=0.4cm, row sep=0.2cm,ampersand replacement=\&]
    {
     \IndSetFour \&    
     \EdgeFour \&  
     \MatFour  \\
     \WedgeFour  \& 
     \TriFour     \\
 };
  \end{tikzpicture}  
  }
\caption{Disconnected 4-vertex patterns}
\label{fig:4vertex-disc}
\end{figure}
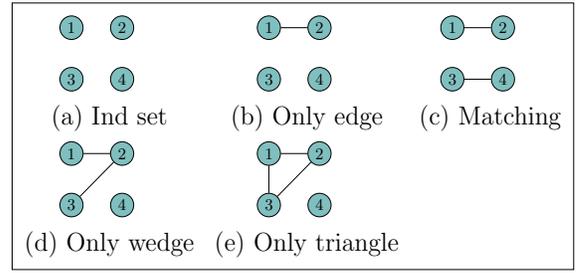

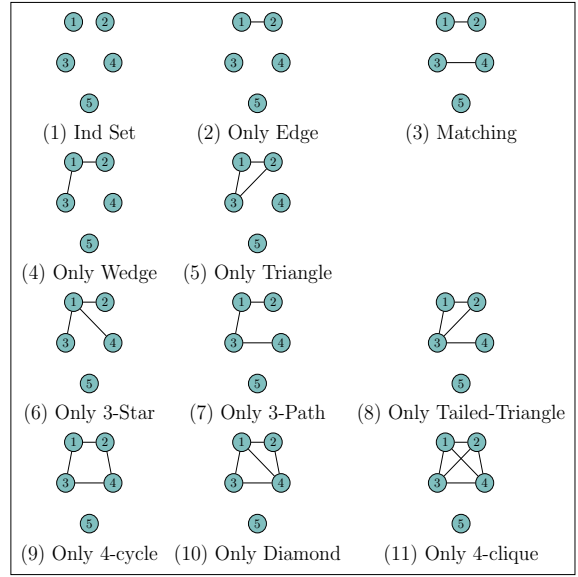
\begin{figure}[t]
\centering
\resizebox{3in}{3.0in}{
  \begin{tikzpicture}[nd/.style={circle,draw,fill=teal!50,inner sep=2pt},framed]    
    \matrix[column sep=0.4cm, row sep=0.2cm,ampersand replacement=\&]
    {
     \IndSetFive \&    
     \EdgeFive \&  
     \MatFive  \\
     \WedgeFive  \& 
     \TriFive    \\
     \ThreeStarFive \&
     \ThreePathFive \&
     \TailedTriFive \\
     \FourCycleFive \&
     \DiamondFive \&
     \FourCliqueFive \\
 };
  \end{tikzpicture}  
  }
\caption{Disconnected 5-vertex patterns}
\label{fig:5vertex-disc}
\end{figure}

\section{Conversion between Induced and Noninduced counts}
\label{sec:transform}
We show how to convert induced pattern counts to non-induced
counts, and vice versa. Algorithmically, it is often easier
to get non-induced counts, and a linear transformation of these
counts suffices to get induced counts.
It is standard that each non-induced
pattern count can be obtained as a linear combination of
induced pattern counts (of that size). 
We use $\ind_i$ (resp. $\nonind_i$) to denote the induced (resp. non-induced)
count of the $i$th $5$-pattern. We can think of the list
of counts as vectors $\ind, \nonind$.
There is a matrix $\bA$ such that for all graphs, $\nonind = \bA \ind$.
The matrix is given in \Fig{A}. It is constructed by noting that $\bA_{i,j}$ is the number 
of copies of pattern $j$ in pattern $i$. 
Naturally, $\ind = \bA^{-1} \nonind$, and that is how we get induced counts
from non-induced counts. The inverse matrix is given in~\Fig{Ainv}.

\section{Detailed induced  counts}\label{sec:fullcounts} 
 Here we present detailed counts for  various patterns unto 5 vertices.  First, \Tab{prop-full} presents all connected 4-patterns and some of the 3-vertex patterns relevant to our algorithms.  
 We also report the induced counts  for all  connected 5-vertex patterns in \Tab{5vertex1} and \Tab{5vertex2}.   
 
\begin{figure*}
\[ A=
\left(
\begin{array}{ccccccccccccccccccccc}

1 &   &   & 1 &   &   &   &   & 1 &   & 1 &   &   & 2 & 1 & 1 &   & 1 & 2 & 3 & 5 \\ 
  & 1 &   & 2 & 1 & 2 & 2 &   & 4 & 4 & 5 & 4 & 6 & 12 & 9 & 10 & 10 & 20 & 20 & 36 & 60 \\ 
  &   & 1 &   & 2 & 1 & 2 & 5 & 4 & 4 & 2 & 7 & 6 & 6 & 6 & 10 & 14 & 24 & 18 & 36 & 60 \\ 
  &   &   & 1 &   &   &   &   & 2 &   & 2 &   &   & 6 & 3 & 3 &   & 4 & 8 & 15 & 30 \\ 
  &   &   &   & 1 &   &   &   & 4 & 2 &   & 2 &   &   & 3 & 6 & 6 & 16 & 12 & 30 & 60 \\ 
  &   &   &   &   & 1 &   &   &   & 2 & 2 & 1 &   & 6 & 6 & 5 & 4 & 12 & 14 & 30 & 60 \\ 
  &   &   &   &   &   & 1 &   &   & 1 & 1 & 2 & 6 & 6 & 3 & 4 & 8 & 16 & 12 & 30 & 60 \\ 
  &   &   &   &   &   &   & 1 &   &   &   & 1 &   &   &   & 1 & 2 & 4 & 2 & 6 & 12 \\ 
  &   &   &   &   &   &   &   & 1 &   &   &   &   &   &   & 1 &   & 2 & 2 & 6 & 15 \\ 
  &   &   &   &   &   &   &   &   & 1 &   &   &   &   & 3 & 2 & 2 & 8 & 8 & 24 & 60 \\ 
  &   &   &   &   &   &   &   &   &   & 1 &   &   & 6 & 3 & 2 &   & 4 & 10 & 24 & 60 \\ 
  &   &   &   &   &   &   &   &   &   &   & 1 &   &   &   & 2 & 4 & 12 & 6 & 24 & 60 \\ 
  &   &   &   &   &   &   &   &   &   &   &   & 1 & 1 &   &   & 1 & 2 & 1 & 4 & 10 \\ 
  &   &   &   &   &   &   &   &   &   &   &   &   & 1 &   &   &   &   & 1 & 3 & 10 \\ 
  &   &   &   &   &   &   &   &   &   &   &   &   &   & 1 &   &   &   & 2 & 6 & 20 \\ 
  &   &   &   &   &   &   &   &   &   &   &   &   &   &   & 1 &   & 4 & 4 & 18 & 60 \\ 
  &   &   &   &   &   &   &   &   &   &   &   &   &   &   &   & 1 & 4 & 1 & 9 & 30 \\ 
  &   &   &   &   &   &   &   &   &   &   &   &   &   &   &   &   & 1 &   & 3 & 15 \\ 
  &   &   &   &   &   &   &   &   &   &   &   &   &   &   &   &   &   & 1 & 6 & 30 \\ 
  &   &   &   &   &   &   &   &   &   &   &   &   &   &   &   &   &   &   & 1 & 10 \\ 
  &   &   &   &   &   &   &   &   &   &   &   &   &   &   &   &   &   &   &   & 1 \\ 
\end{array}
\right)
\]
\caption{Matrix transforming induced 5-vertex pattern counts to non-induced counts}
\label{fig:A}
\end{figure*}

\begin{figure*}\[ 
A^{-1}=
\left(
\begin{array}{rrrrrrrrrrrrrrrrrrrrr}
1 &   &   & -1 &   &   &   &   & 1 &   & 1 &   &   & -2 & -1 & -1 &   & 1 & 2 & -3 & 5 \\ 
  & 1 &   & -2 & -1 & -2 & -2 &   & 4 & 4 & 5 & 4 & 6 & -12 & -9 & -10 & -10 & 20 & 20 & -36 & 60 \\ 
  &   & 1 &   & -2 & -1 & -2 & -5 & 4 & 4 & 2 & 7 & 6 & -6 & -6 & -10 & -14 & 24 & 18 & -36 & 60 \\ 
  &   &   & 1 &   &   &   &   & -2 &   & -2 &   &   & 6 & 3 & 3 &   & -4 & -8 & 15 & -30 \\ 
  &   &   &   & 1 &   &   &   & -4 & -2 &   & -2 &   &   & 3 & 6 & 6 & -16 & -12 & 30 & -60 \\ 
  &   &   &   &   & 1 &   &   &   & -2 & -2 & -1 &   & 6 & 6 & 5 & 4 & -12 & -14 & 30 & -60 \\ 
  &   &   &   &   &   & 1 &   &   & -1 & -1 & -2 & -6 & 6 & 3 & 4 & 8 & -16 & -12 & 30 & -60 \\ 
  &   &   &   &   &   &   & 1 &   &   &   & -1 &   &   &   & 1 & 2 & -4 & -2 & 6 & -12 \\ 
  &   &   &   &   &   &   &   & 1 &   &   &   &   &   &   & -1 &   & 2 & 2 & -6 & 15 \\ 
  &   &   &   &   &   &   &   &   & 1 &   &   &   &   & -3 & -2 & -2 & 8 & 8 & -24 & 60 \\ 
  &   &   &   &   &   &   &   &   &   & 1 &   &   & -6 & -3 & -2 &   & 4 & 10 & -24 & 60 \\ 
  &   &   &   &   &   &   &   &   &   &   & 1 &   &   &   & -2 & -4 & 12 & 6 & -24 & 60 \\ 
  &   &   &   &   &   &   &   &   &   &   &   & 1 & -1 &   &   & -1 & 2 & 1 & -4 & 10 \\ 
  &   &   &   &   &   &   &   &   &   &   &   &   & 1 &   &   &   &   & -1 & 3 & -10 \\ 
  &   &   &   &   &   &   &   &   &   &   &   &   &   & 1 &   &   &   & -2 & 6 & -20 \\ 
  &   &   &   &   &   &   &   &   &   &   &   &   &   &   & 1 &   & -4 & -4 & 18 & -60 \\ 
  &   &   &   &   &   &   &   &   &   &   &   &   &   &   &   & 1 & -4 & -1 & 9 & -30 \\ 
  &   &   &   &   &   &   &   &   &   &   &   &   &   &   &   &   & 1 &   & -3 & 15 \\ 
  &   &   &   &   &   &   &   &   &   &   &   &   &   &   &   &   &   & 1 & -6 & 30 \\ 
  &   &   &   &   &   &   &   &   &   &   &   &   &   &   &   &   &   &   & 1 & -10 \\ 
  &   &   &   &   &   &   &   &   &   &   &   &   &   &   &   &   &   &   &   & 1 \\
\end{array}
\right)
\]
\caption{Matrix transforming non-induced 5-vertex pattern counts to induced counts}
\label{fig:Ainv}
\end{figure*}

 \begin{table*}
\scriptsize
\caption{\label{tab:prop-full} 4 and 5-vertex pattern (induced) counts for various graphs} 
\begin{tabular}{l|cc|ccc|cccccc|}
 & 	V	& E	& $\frac{W_{++}}{W}$&$\frac{W_{+-}}{W}$&T	&3-star & 3-path& tailed & 	4-cycle & Chordal&  4-clique \\ 
 \multicolumn{1}{c|}{} & \multicolumn{2}{c|}{} & \multicolumn{3}{c|}{} &\multicolumn{2}{c}{}& \multicolumn{1}{c}{triangle} & \multicolumn{1}{c}{}& \multicolumn{1}{c}{4-cycle} &\multicolumn{1}{c|}{}  \\ \hline 
soc-brightkite & 5.7e+4 & 4.3e+5 &0.076 & 0.194& 4.9e+5 & 1.3e+9 & 5.3e+8 & 1.1e+8 & 2.7e+6 & 1.2e+7 & 2.9e+6 \\
tech-RL-caida & 1.9e+5 & 1.2e+6 &0.082 & 0.173& 4.5e+5 & 1.7e+9 & 5.8e+8 & 7.7e+7 & 4e+7 & 7.4e+6 & 4.2e+5   \\
flickr  & 	2.4e+5& 	3.6e+6 &0.013 &0,029 & 1.6e+7& 1.1e+13	& 4.9e+11	& 1.4e+11	& 2.4e+09& 	4.7e+09	& 1.2e+08 \\
ia-email-EU-dir & 2.7e+5 & 7.3e+5 &0.005 & 0.023& 2.7e+5 & 2.2e+11 & 4.4e+9 & 3.4e+8 & 6.7e+6 & 1e+7 & 5.8e+5   \\
ca-coauthors-dblp & 5.4e+5 & 3e+7 &0.232 & 0.313& 4.4e+8 & 2.7e+10 & 4.2e+10 & 6.7e+10 & 3.1e+7 & 3.4e+9 & 1.5e+10 \\
web-google-dir & 8.8e+5 & 4.7e+6 &0.025 & 0.035& 2.5e+6 & 8.3e+10 & 1.5e+9 & 5.1e+8 & 1.4e+7 & 2.4e+7 & 2e+6  \\
tech-as-skitter & 1.7e+6 & 2.2e+7 &0.006 & 0.013& 2.9e+7 & 9.6e+13 & 8.2e+11 & 1.6e+11 & 4.3e+10 & 2e+10 & 1.5e+8  \\
web-wiki-ch-internal & 1.9e+6 & 1.8e+7 &0.015 & 0.047& 1.8e+7 & 3.1e+13 & 1.3e+12 & 1.3e+11 & 4.2e+9 & 2.1e+9 & 3e+7  \\
web-hudong & 2e+6 & 1.5e+7 &0.008 & 0.015& 5.1e+6 & 1.5e+13 & 2.2e+11 & 6.3e+9 & 2.8e+8 & 2.5e+8 & 8.3e+7  \\
wiki-user-edits-page & 2.1e+6 & 1.1e+7 &0.000 & 0.000& 6.7e+6 & 8.8e+16 & 4.8e+12 & 2e+12 & 4.4e+10 & 7e+10 & 1e+7  \\
web-baidu-baike & 2.1e+6 & 1.7e+7 &0.008 & 0.021& 3.6e+6 & 7.2e+13 & 4.8e+11 & 2.3e+10 & 5.9e+8 & 1.8e+8 & 7.1e+5  \\
tech-ip & 2.3e+6 & 2.2e+7 &0.001 & 0.000& 3.0e+5 & 1.3e+17 & 1.4e+13 & 8.6e+9 & 8e+11 & 3e+7 & 2  \\
orkut&3.1e+6	& 2.3e+8 &0.089 & 0.188& 6.3e+8&   9.8e+13	& 1.9e+13	& 1.5e+12& 	7.0e+10	& 4.8e+10& 	3.2e+9 \\
LiveJournal & 4.8e+6& 	8.6e+7& 0.094 & 0.190 &2.9e+8  &6.6e+12& 	1.1e+12& 1.2e+11	& 5.0e+9	& 1.7e+10	& 9.9e+9 \\ \hline
%
\hline
\end{tabular}
\end{table*}

\begin{table*}
\caption{\label{tab:5vertex1} Induced counts  for  5 vertex patterns;  Patterns 1--12 } 
\scriptsize
\begin{tabular}{l|ccccccccccc|}
& P1& P2& P3& P4& P5& P6& P7& P8& P9& P10& P11\\ \hline 
soc-brightkite & 2.3e+11 & 1.3e+11 & 2.2e+10 & 1.7e+10 & 4e+9 & 9.7e+9 & 1.2e+9 & 3.5e+7 & 2.2e+8 & 1.2e+9 & 2.7e+9  \\
tech-RL-caida & 2.4e+11 & 8e+10 & 2.1e+10 & 8.8e+9 & 2.1e+9 & 5.1e+9 & 1.1e+10 & 3.4e+7 & 7e+7 & 5.1e+8 & 1.8e+9  \\
flickr &4.6e+16 	& 2.2e+15	& 1.3e+14	& 6.7e+14& 	8.9e+12& 	2.1e+14& 	2.0e+13	& 2.0e+11	& 7.4e+11	& 4.4e+12& 	5.8e+13\\
ia-email-EU-dir & 3.1e+14 & 4.3e+12 & 3.9e+11 & 1.9e+11 & 1.5e+10 & 1.4e+11 & 1e+10 & 1.4e+8 & 2e+8 & 4.7e+9 & 1.1e+10  \\
ca-coauthors-dblp & 5.7e+12 & 5e+12 & 2.8e+12 & 4.9e+12 & 4.2e+12 & 4.2e+12 & 1.1e+10 & 9.8e+8 & 1.7e+12 & 4.2e+11 & 3.6e+11  \\
web-google-dir & 4e+13 & 1.1e+12 & 4.9e+10 & 3.6e+11 & 3.5e+9 & 2e+10 & 7.3e+9 & 2.7e+7 & 3.5e+8 & 1e+9 & 1e+10  \\
tech-as-skitter & 6e+17 & 6.4e+15 & 7.1e+14 & 1.3e+15 & 7.3e+12 & 2.7e+14 & 7e+14 & 3e+11 & 3.6e+11 & 3.3e+12 & 2.9e+14  \\
web-wiki-ch-internal & 1.7e+17 & 8e+15 & 3.8e+14 & 1e+15 & 1.5e+13 & 2.2e+14 & 5.2e+13 & 2.4e+11 & 6.5e+11 & 2.6e+12 & 3.9e+13  \\ 
web-hudong & 7.2e+16 & 1.2e+15 & 6.1e+13 & 4.2e+13 & 3.2e+11 & 1.2e+13 & 2.2e+12 & 3.7e+9 & 3.2e+9 & 3.2e+10 & 8e+11  \\
wiki-user-edits-page & 2.5e+18 & 1e+18 & 2.7e+16 & 4.7e+17 & 1.1e+13 & 6.7e+16 & 1.4e+16 & 1.2e+11 & 1.4e+12 & 5.5e+12 & 3.8e+16  \\
web-baidu-baike & 7.6e+17 & 4.5e+15 & 1.5e+14 & 2.8e+14 & 1.4e+12 & 4.4e+13 & 1.9e+13 & 2e+10 & 2.6e+10 & 1e+11 & 5.4e+12  \\
tech-ip & 4.1e+18 & 4.2e+17 & 4.6e+16 & 1.4e+14 & 1.2e+12 & 9.7e+12 & 3e+16 & 1.7e+11 & 8.2e+8 & 1.3e+11 & 9.5e+11  \\
orkut & 4.5e+17	& 8.6e+16	& 1.0e+16& 	7.1e+15	& 6.0e+14	& 2.5e+15	& 6.3e+14& 	1.3e+13	& 1.8e+13	& 8.2e+13	& 3.9e+14 \\
LiveJournal & 1.9e+16	& 2.2e+15& 	3.1e+14	& 1.7e+14	& 2.2e+13& 	5.4e+13	& 1.5e+13	& 3.7e+11& 	1.2e+12& 	3.8e+12	& 1.0e+13	\\ \hline 
\hline
\end{tabular}
\label{tab:5vertex-counts1}
\end{table*}

\begin{table*}
\caption{\label{tab:5vertex2}5-vertex patterns;  Patterns 13--21 and runtimes in seconds} 
\scriptsize
\begin{tabular}{l|cccccccccc| }
& P12& P13& P14& P15& P16& P17& P18& P19& P20& P21\\\hline 
soc-brightkite & 1.2e+8 & 1.4e+7 & 1.4e+8 & 6.2e+8 & 2.4e+8 & 2.4e+7 & 1.5e+7 & 1.9e+8 & 6.2e+7 & 1.9e+7  \\
tech-RL-caida & 9.8e+7 & 4.9e+9 & 3e+8 & 1.4e+8 & 6.4e+7 & 4.1e+7 & 5.1e+6 & 3.1e+7 & 5.8e+6 & 6.5e+5  \\
Flickr & 5.8e+11	& 3.4e+11	& 1.8e+12	& 1.6e+12	& 6.9e+11	& 7.2e+10& 	1.8e+10& 	1.9e+11	& 1.5e+10	& 7.1e+08 \\
ia-email-EU-dir & 2.7e+8 & 8.3e+7 & 3.8e+8 & 1.1e+9 & 2e+8 & 4.2e+7 & 9.8e+6 & 7e+7 & 9.9e+6 & 1.1e+6 \\
ca-coauthors-dblp & 1.1e+10 & 2e+7 & 2.6e+10 & 3.2e+12 & 5.2e+10 & 7.6e+8 & 3.2e+8 & 3.5e+11 & 3.6e+10 & 5.7e+11 \\
web-google-dir & 9.7e+7 & 3.5e+8 & 8.8e+8 & 3.2e+8 & 8.7e+7 & 2.5e+7 & 9e+6 & 3.7e+7 & 8.7e+6 & 1.4e+6  \\
tech-as-skitter & 4.6e+11 & 1.6e+14 & 7.2e+13 & 1.3e+12 & 5e+11 & 1.3e+11 & 2.4e+10 & 2.9e+11 & 3.8e+10 & 1.2e+9  \\
web-wiki-ch-internal & 3.9e+11 & 2.7e+12 & 1e+12 & 4.3e+11 & 2e+11 & 4e+10 & 3.5e+9 & 3.1e+10 & 1.3e+9 & 4e+7   \\
web-hudong & 3.9e+9 & 1.3e+11 & 1.2e+10 & 7.9e+9 & 5.4e+9 & 2.4e+9 & 2.6e+9 & 6e+9 & 4.6e+9 & 1.2e+9   \\
wiki-user-edits-page & 5.7e+11 & 1.6e+15 & 2.2e+15 & 3.4e+12 & 1.2e+12 & 6.6e+10 & 1.5e+10 & 4.6e+11 & 2e+10 & 1.4e+7 \\
web-baidu-baike & 1.5e+10 & 1.3e+12 & 9.4e+10 & 8e+9 & 4.4e+9 & 4.9e+8 & 2.9e+7 & 3.4e+8 & 6.9e+6 & 9.9e+4  \\
tech-ip & 2.5e+11 & 3.2e+15 & 2.7e+9 & 4.4e+4 & 7.9e+8 & 4.7e+10 & 1.5e+8 & 2.1e+2 & 0 & 0  \\
orkut & 1.5e+13	& 1.1e+13	& 9.7e+12& 	1.2e+13& 	7.1e+12& 	1.4e+12& 	2.4e+11& 	1.3e+12& 	1.4e+11& 	1.6e+10 \\
LiveJournal &5.0e+11& 	3.0e+11& 	7.1e+11& 	2.3e+12& 	7.1e+11& 	1.6e+11& 	1.3e+11& 	1.0e+12& 	6.3e+11& 	4.7e+11 \\ \hline
%


\end{tabular}
\label{tab:5vertex-counts2}
\end{table*}

\end{document}